\newcommand\reallywidehat[1]{\arraycolsep=0pt\relax%
	\begin{array}{c}
		\stretchto{
			\scaleto{
				\scalerel*[\widthof{\ensuremath{#1}}]{\kern-.5pt\bigwedge\kern-.5pt}
				{\rule[-\textheight/2]{1ex}{\textheight}} %WIDTH-LIMITED BIG WEDGE
			}{\textheight} % 
		}{0.5ex}\\           % THIS SQUEEZES THE WEDGE TO 0.5ex HEIGHT
		#1\\                 % THIS STACKS THE WEDGE ATOP THE ARGUMENT
		\rule{-1ex}{0ex}
	\end{array}
}
\title{Simultaneous prediction and community detection for networks with application to neuroimaging}
\author[$1$]{Jes\'us Arroyo\thanks{jesus.arroyo@jhu.edu}}
\author[$2$]{Elizaveta Levina\thanks{elevina@umich.edu}}
\affil[$1$]{\small Center for Imaging Science, Johns Hopkins University}
\affil[$2$]{\small Department of Statistics, University of Michigan}
\date{}
\begin{document}
\maketitle

\begin{abstract}
	
Community structure in networks is observed in many different domains, and unsupervised community detection has received a lot of attention in the literature.   Increasingly the focus of network analysis is shifting towards using network information in some other prediction or inference task rather than just analyzing the network itself.   In particular, in neuroimaging applications brain networks are available for multiple subjects and the goal is often to predict a phenotype of interest. Community structure is well known to be a feature of brain networks, typically corresponding to different regions of the brain responsible for different functions.     There are standard parcellations of the brain into such regions, usually obtained by applying clustering methods to brain connectomes of healthy subjects.  However,  when the goal is predicting a phenotype or distinguishing between different conditions, these static communities from an unrelated set of healthy subjects may not be the most useful for prediction.   Here we present a method for supervised community detection, aiming to find a partition of the network into communities that is most useful for predicting a particular response.     We use a block-structured regularization penalty combined with a prediction loss function, and compute the solution with a combination of a spectral method and an ADMM optimization algorithm. We show that the spectral clustering method recovers the correct communities under a weighted stochastic block model.   The method performs well on both simulated and real brain networks, providing support for the idea of task-dependent brain regions.
	
\end{abstract}

\section{Introduction}
\label{sec:intro}

While the study of networks has traditionally been driven by social sciences applications and focused on understanding the structure of single network,  new methods for statistical analysis of multiple networks have started to emerge  \citep{ginestet2017hypothesis,narayan2015two,arroyo2016graphclass,athreya2017statistical,Le2018,Levin2019}.   To a large extent, the interest in multiple network analysis is driven by neuroimaging applications, where brain networks are constructed from raw imaging data, such as fMRI, to represent connectivity between a predefined set of nodes in the brain \citep{bullmore2009complex}, often referred to as regions of interest (ROIs).    Collecting data from multiple subjects has made possible population level  studies of the brain under different conditions, for instance, mental illness.   Typically this is accomplished by using the network as a covariate for predicting or conducting a test on a phenotype, either a category such as disease diagnosis, or a quantitative measurement like attention level.   

Most of the previous work on using networks  as covariates in supervised prediction problems has followed one of two general approaches.   One approach reduces the  network to global features that summarize its structure, such as average degree, centrality, or clustering coefficient \citep{bullmore2009complex},  but these features fail to capture local structure and might not contain useful information for the task of interest.    The other approach vectorizes the adjacency matrix, treating all edge weights as individual features, and then applies standard prediction tools for multivariate data, but this can reduce both accuracy and  interpretability if network structure is not considered \citep{arroyo2016graphclass}.  More recently, unsupervised dimensionality reduction methods for multiple networks  \citep{tang2017nonparametric,wang2017joint,Arroyo2019} have been used to embed the networks in a low dimensional space before fitting a prediction rule, but this two-stage approach  may not be optimal.

Communities are a structure commonly found in networks from many domains, including social, biological and technological networks, among others \citep{girvan2002community,Fortunato2016}. A community is typically defined as a group of nodes with similar connectivity patterns.  A common interpretation of community is as a group of nodes that have stronger connections within the group than to the rest of the network. The problem of finding the communities in a network can be defined as an unsupervised clustering of the nodes, for which many statistical models and even more algorithms have been proposed, and theoretical properties obtained in simpler settings; see \cite{Abbe2017} and \cite{Fortunato2016} for recent reviews.

\begin{figure}
	\begin{subfigure}{.6\textwidth}
		\includegraphics[height=2.5in]{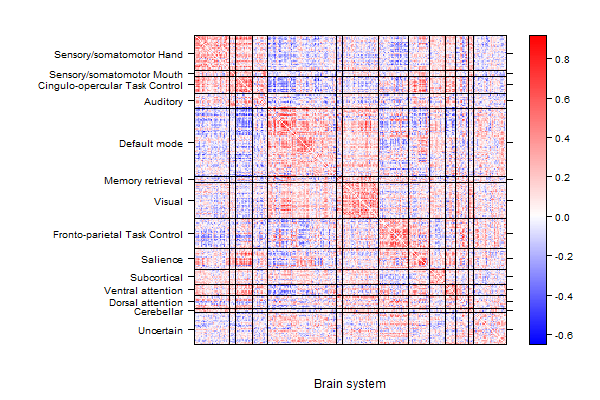}
		\caption{Edge connectivity matrix of a single healthy subject. Nodes are divided by brain systems, labeled on the left.}
	\end{subfigure}
	\begin{subfigure}{0.4\textwidth}
		\includegraphics[height=2.5in]{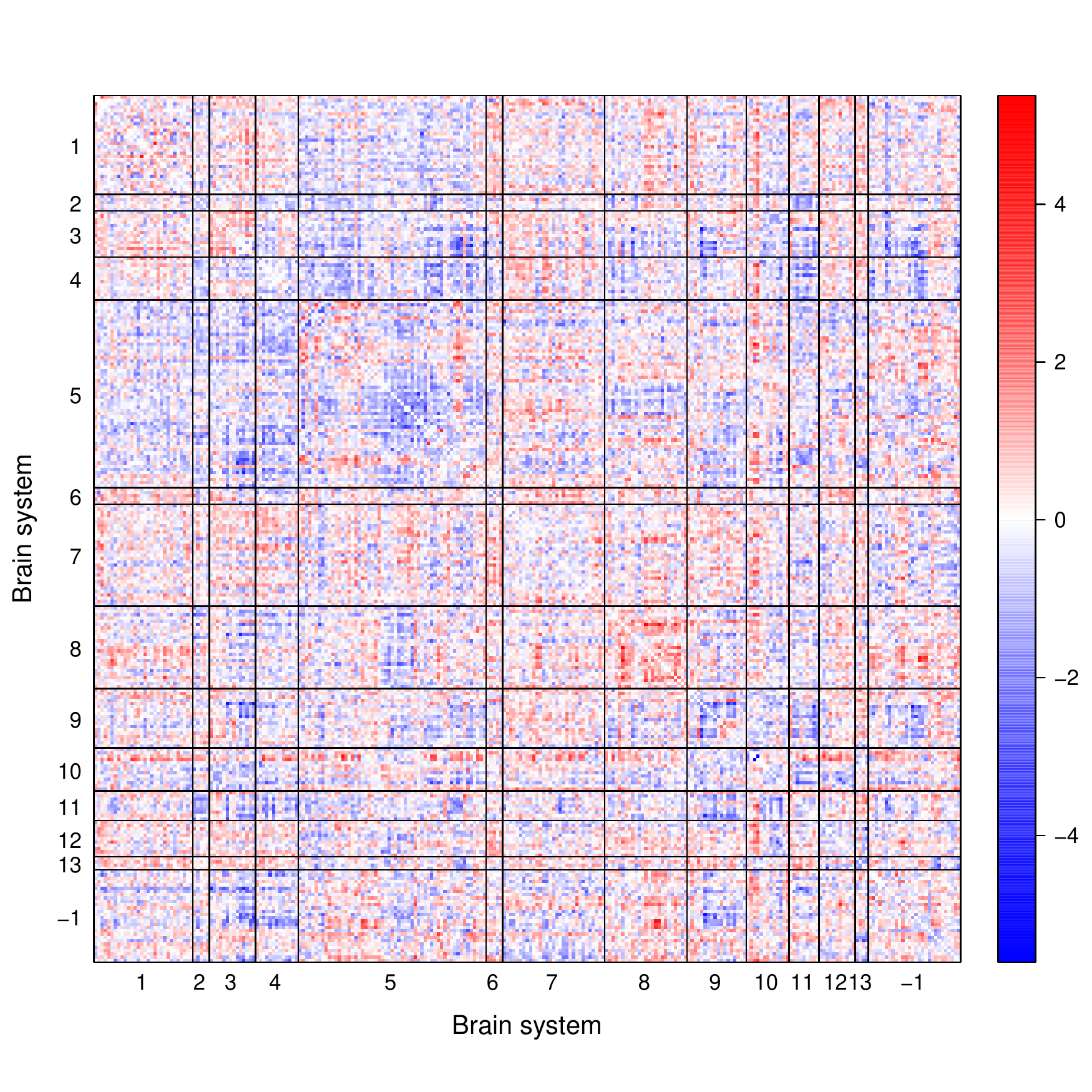}
		\caption{The t-statistics of edge difference between healthy and schizophrenic subjects.}
	\end{subfigure}
	\caption{COBRE data edge statistics.}
        \label{fig:subject-vs-ttest} 
\end{figure}

      %%%%%%

In brain networks, communities correspond to functional brain systems \citep{fox2005human,chen2008revealing,bullmore2009complex}. The nodes in a community can be thought of  as activating together during tasks and having similar functionality in the brain. Thus, each set of edges connecting two communities or brain systems, referred to as a network \emph{cell}, tends to have homogeneous connectivity levels.  This pattern can be seen in  Figure \ref{fig:subject-vs-ttest}(a), which shows the fMRI brain network of a healthy subject from the COBRE data (described in more detail in Section \ref{sec:data-CD}).  In this example, the nodes of the network are defined according to the \cite{power2011functional} parcellation  which identifies 264 regions of interest (ROIs) in the brain. These nodes are divided into 14 communities (see labels in the figure) which mostly correspond to known functional systems of the brain.

Analyzing fMRI brain networks at the level of individual nodes or edges is typically noisy and hard to interpret;  interpreting the results in terms of brain systems and corresponding network cells is much more desirable in this application.   There is evidence that organization and connectivity between brain systems is associated with subject-level phenotypes of interest, such as age, gender, or mental illness diagnosis \citep{meunier2009age,sripada2014disrupted,kessler2016growth}, or can even  reconfigure according to task-induced states in short timescales \citep{Salehi2019}. % and disruptions on the connectivity between brain systems can also be associated to certain diseases\citep{,}.
However, there is no universal agreement on how to divide the brain into systems, and different community partitions have been proposed  \citep{power2011functional,ThomasYeo2011}. These partitions are typically constructed either at the subject level by applying a community detection algorithm to a single network, or at the sample level by estimating a common community structure for all the networks, typically from healthy subjects \citep{thirion2014fmri}.    Several parcellations have been obtained and mapped to known brain systems this way.   \citep{Fischl2004,Desikan2006, Craddock2012}   However, this approach does not account for other covariates of the subjects, or the fact that the functional systems may rearrange themselves depending on the task or condition \citep{smith2012temporally,fair2007development,sripada2014lag}. 

When studying the association between the brain networks and a subject-level phenotype, the choice of parcellation is important, as the resolution of some parcellations may not capture relevant associations.   Figure \ref{fig:subject-vs-ttest}(b) shows the two sample $t$-test statistic values for each edge, for the difference in mean connectivity  between schizophrenic  subjects and healthy controls from the COBRE data.    Clearly, the $t$-statistics values are not as homogeneous as the connectivity values themselves, and some of the pre-determined network cells have both significantly positive and significantly negative groups of edge differences,  making it difficult to interpret the cell as a whole. In particular, the default mode network (brain system 5) is a region that has been strongly associated with schizophrenia \citep{broyd2009default}, but interpreting this system as a single unit based on this parcellation can be misleading, since it contains regions indicating both positive and negative effects. 

In this paper, we develop a new method that learns the most relevant community structure simultaneously with solving the corresponding prediction problems.    We achieve this by enforcing a block-constant constraint on edge coefficients, in order to identify a grouping of nodes into clusters that both gives a good prediction accuracy for the phenotype of interest, and gives network cells that are homogeneous in their effect on the response variable.   The solution is obtained with a combination of a spectral method and an efficient iterative optimization algorithm based on ADMM. We study the theoretical performance of the spectral clustering method, and show good performance on both simulated networks and schizophrenia data, obtaining a new set of regions  that give a parsimonious and interpretable solution with good prediction accuracy.

The rest of this paper is organized as follows.  In Section \ref{sec:supervisedCD}, we present our block regularization approach to enforcing community structure in prediction problems.  Section \ref{sec:optimization-CD} presents an algorithm to solve the corresponding optimization problem.  In Section \ref{sec:theory} we study the theoretical performance of our method. In Section \ref{sec:simulations-CD}, we evaluate the performance for both recovering community structure and prediction accuracy on simulated data.  Section \ref{sec:data-CD} presents results for the COBRE schizophrenia dataset.   We conclude with a discussion and future work in Section \ref{sec:discus}.

\section{Supervised community detection \label{sec:supervisedCD}}
%\subsection{Notation}
We start by setting up notation.    Since the motivating application is to brain networks constructed from fMRI data, we focus on weighted undirected networks with no self-loops, although our approach can be easily extended to other network settings. We observe a sample of $m$ networks with $n$ labeled nodes that match across all networks  $A^{(1)}, \ldots, A^{(N)}$, and their associated response vector $\mathcal{Y}=(Y_1,\ldots,Y_N)$, with $Y_m\in\Bbb{R}$, $m = 1, \dots, N$.    Each network here is represented by its weighted  adjacency matrix $A^{(m)} \in \Bbb{R}^{n\times n}$, satisfying $A^{(m)} = (A^{(m)})^T$ and $\text{diag}( A^{(m)})= 0$. 

The inner product between two matrices $U$ and $V$ %\in\Bbb{R}^{n\times n}$
is denoted by $\left\langle U,V\right\rangle=\operatorname{Tr}(V^TU)$.  Let   $\|M\|_p=\left(\sum_{i=1}^{n_1}\sum_{j=1}^{n_2} M_{ij}^p\right)^{1/p}$ be the entry-wise $\ell_p$ norm of a matrix $M\in\Bbb{R}^{n_1\times n_2}$;  in particular,  $\|\cdot\|_2 = \|\cdot\|_F$ is the Frobenius norm.  Given a symmetric matrix $M\in\real^{n\times n}$ with eigenvalues  $\lambda_1, \ldots, \lambda_n$ ordered so that $|\lambda_1|>\ldots>|\lambda_n|$, we denote by $\lambda_{\max}(M) := \lambda_{1}$ and $\lambda_{\min}(M) = \lambda_N$, the largest and smallest eigenvalues in absolute value.

For simplicity, we focus on linear prediction methods, considering that interpretation the most important goal in the motivating application;  however, the linear predictor can be replaced by another function as long as it is convex in the parameters of interest.   For a given matrix $A$, the corresponding response $Y$ will be predicted using a linear combination of the entries of $A$.   We can define a matrix of coefficients $B\in\Bbb{R}^{n\times n}$, an intercept $b\in\Bbb{R}$ and a prediction  loss function $\ell$ by 
\begin{equation}
\ell(B) = \sum_{m=1}^N\tilde{\ell}\left(Y_m, \langle A^{(m)}, B\rangle + b\right), \label{eq:lossfunction-innerprod}
\end{equation}
where $\tilde{\ell}$ is a prediction loss determined by the problem of interest; in particular, this framework includes generalized linear models, and can be used for continuous,  binary, or categorical responses.   The entry $B_{ij}$ of $B$ is the coefficient of $A_{ij}$, and since the networks are undirected with no self-loops, we require $B=B^T$ and $\text{diag}(B)=0$ for identifiability.
%%%%%%%%%%%%%%%%%%%%%%%%%
The conventional approach is to minimize an objective function that consists of a loss function plus a regularization penalty $\Omega_\lambda$, with a tuning parameter $\lambda$ which controls the amount of regularization.  The penalty is important for making the solution unique for small sample sizes and high dimensions and for imposing structure on coefficients;  popular choices include ridge, lasso, or the elastic net penalties, all available in the glmnet R package \citep{friedman2009glmnet}.

\begin{remark}
	The intercept $b$ in \eqref{eq:lossfunction-innerprod}  is important for accurate prediction, but since in many situations it can be removed by centering and otherwise is easy to optimize over, we omit the intercept in all derivations that follow, for simplicity of notation.  
\end{remark}

As discussed in the Introduction,  communities in brain networks  correspond to groups of nodes with similar functional connectivity, and thus it is reasonable to assume that edges within a network ``cell'' (edges connecting a given pair of communities or edges within one community) have a similar effect on the response. This structure  can be explicitly enforced in the matrix of coefficients $B$.
Suppose the nodes are partitioned into $K$ groups $\mathcal{C}_1,\ldots, \mathcal{C}_K\subset\{1,\ldots,n\}$ such that $\mathcal{C}_i\cap \mathcal{C}_j=\emptyset$ and $\bigcup_{k=1}^K\mathcal{C}_{k}=\{1,\ldots, n\}$.   If the value of $B_{ij}$ depends only on the community assignments of nodes $i$ and $j$, we can represent all the coefficients in $B$ with a $K \times K$ matrix $C$, with $B_{ij} = C_{kl} \text{ if }i\in\mathcal{C}_k \text{  and }j\in\mathcal{C}_l$. 
Equivalently, define a binary membership matrix $Z\in\{0,1\}^{n\times K}$ setting $Z_{ik}=1$ if $i\in\mathcal{C}_k$, and 0 otherwise.  Then $B$ can be written as
\begin{equation}
B = ZCZ^T. \label{eq:B-blockstructure}
\end{equation}
This enforces equal coefficients for all the edges within one network cell (see Figure \ref{fig:B-blockconstant}).    This definition looks similar to the stochastic block model (SBM) \citep{holland1983stochastic}, with the crucial difference that here $B$ is not a matrix of edge probabilities, but the matrix of coefficients of a linear predictor, and hence its entries can take any real values.

Let $\mathcal{Z}_{n,K}$  %=\left\{Z\in\{0,1\}^{n\times K}\left| Z\textbf{1}_K = \textbf{1}_n\right.\right\}$$
be the set of all $n \times K$ membership matrices.   Suppose for the moment  we are given a membership matrix $Z\in\mathcal{Z}$.  We can enforce cell-constant coefficients by adding a constraint on $B$ to the optimization problem, solving 
\begin{align}
\min_{C} \  & \{ \ell(B) + \Omega (B) \} \label{eq:blockConstrainedProblem-noZ}\\
\text{subject to} \  & B=ZCZ^T, \  C\in\Bbb{R}^{K\times K}, \ C=C^T,\nonumber
\end{align}
or, using the fact $\left\langle A, Z CZ^T\right\rangle = \left\langle Z^TAZ,  C\right\rangle $,  we can restate the optimization problem in terms of $C$ as 
\begin{eqnarray}
%\hat{C} %& = & \argmin_{C:  C = C^T} \left\{ \ell(C) + \Omega_{\lambda}(ZCZ^T)\right\}\\
%& = &
       \min_C  & \left\{ \sum_{m=1}^N \tilde\ell \left(Y_m, \langle Z^TA^{(m)}Z, C\rangle \right) + \Omega_{\lambda}(ZCZ^T)\right\} . \label{eq:lossfunction-innerprod-C}
\end{eqnarray}
This  reduces the number of coefficients to estimate from $n(n-1)/2$ to only  $K(K+1)/2$, which allows for much better interpretation of network cells' effects and for faster optimization.

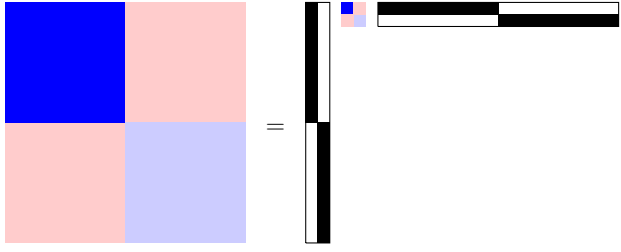
\begin{figure}
	\centering
	\begin{tikzpicture}[scale=0.8, transform shape]
	\draw (5,0) -- (5,4) -- (5.4,4) -- (5.4,0) -- (5,0);
	
	\fill[red!20!white] (0,0) rectangle (2,2);
	\fill[blue!] (0,2) rectangle (2,4);
	\fill[red!20!white] (2,2) rectangle (4,4);
	\fill[blue!20!white] (2,0) rectangle (4,2);
	
	\node[xshift=4.5cm,yshift=1.9cm] (equal)  {=};
	
	\fill[black!] (5,2) rectangle (5.2,4);
	\fill[black!] (5.2,0) rectangle (5.4,2);
	
	\fill[blue!] (5.6,3.8) rectangle (5.8,4);
	\fill[blue!20!white] (5.8,3.6) rectangle (6,3.8);
	\fill[red!20!white] (5.6,3.6) rectangle (5.8,3.8);
	\fill[red!20!white] (5.8,3.8) rectangle (6,4);
	
	\draw (6.2,3.6) -- (6.2,4) -- (10.2,4) -- (10.2,3.6) -- (6.2,3.6);
	
	\fill[black!] (6.2,3.8) rectangle (8.2,4);
	\fill[black!] (8.2,3.6) rectangle (10.2,3.8);
	\end{tikzpicture}\\
	\caption{Factorization of the coefficient matrix $B$ into $ZCZ^T$. %, with $Z$ a membership matrix.
        } \label{fig:B-blockconstant}
\end{figure}

For many choices of the penalty $\Omega_\lambda$, such as lasso or ridge,  the optimization problem \eqref{eq:lossfunction-innerprod-C} is a standard prediction loss plus penalty problem with only $K(K+1)/2$ different parameters, which is easy to solve.    When the number of parameters $K(K+1)/2$ is large relative to the sample size $m$, regularization (setting $\lambda > 0$) is required for the solution to be well-defined.

In reality, the community membership matrix is $Z$ is unknown, and our goal is to find a partition into communities that will give the best prediction.  Thus we need to optimize the objective function over $Z$ and $C$ jiontly, solving
\begin{align}
\min_{Z, C}   & \left\{ \sum_{m=1}^N \ell \left(Y_m, \langle Z^TA^{(m)}Z, C\rangle \right) + \Omega_{\lambda}(ZCZ^T)\right\} 
\label{eq:blockConstrainedProblem}\\
\text{subject to }   & C\in\Bbb{R}^{K\times K} , \ C = C^T %\nonumber\\
 Z\in\{0,1\}^{n\times K} \ , \ Z\textbf{1}_K = \textbf{1}_n.\nonumber
\end{align}

The formulation  \eqref{eq:blockConstrainedProblem} is aimed at finding the best community assignments for predicting a particular response, recognizing that neuroscientists now understand these communities are dynamic and vary with time and activity \citep{Salehi2019}.  Enforcing block structure on the coefficients has the effect of grouping  edges with similar predictive function into cells by clustering the associated nodes.    Approaches for simultaneously predicting a response and clustering predictors have been proposed; for example, \cite{bondell2008simultaneous} introduced a penalty that achieves this goal via fused lasso.    Our goal, however, is not just clustering predictors (edges); it is partitioning the brain network into meaningful regions, which requires clustering {\em nodes}. 

The number of ``communities'' $K$ in \eqref{eq:blockConstrainedProblem} can be viewed as a tuning parameter that controls the amount of regularization.   In practice, the value of $K$ is unknown and can be chosen by cross-validation, as is commonly done with all regularization tuning parameters.   Alternatively, one can view our method as a tool to enforce an interpretable structure on the coefficients while maintaining a reasonable prediction performance.  Then it makes sense to set $K$ to match those found in commonly used brain atlases, typically 10-20, in order to obtain an interpretable solution that can be easily compared with an existing brain atlas.

\section{Algorithms for block-structured regularization \label{sec:optimization-CD}}

Solving the problem \eqref{eq:blockConstrainedProblem} exactly  is computationally infeasible, at least naively, since $Z$ can take on $K^n$ different values. Instead, we propose an iterative optimization algorithm based on the alternating direction method of multipliers (ADMM) \citep{boyd2011distributed}.  Since the objective function is not convex, having a good initial value is crucial in practice, even though in principle one can initialize with any membership matrix $Z$.   Some reasonable choices for the initial value include one of the previously published brain parcellations, or the output of some unsupervised community detection method designed for a sample of networks, but both of these ignore the response $\mathcal{Y}$.    We initialize instead with a solution computed by a particular spectral clustering algorithm which provides a fast approximate solution to the problem while taking the response into account, and can be used to initialize the ADMM optimization procedure.

\subsection{Spectral clustering for the sum of squares loss}
The spectral clustering algorithm works with the sum of squares loss;  we will discuss initialization for other losses at the end of this section.   Suppose we want to solve the constrained problem \eqref{eq:blockConstrainedProblem} with the sum of squares loss function.   Assuming for simplicity that the network matrices and the responses are centered,  that is, $\bar{A} = \frac{1}{N}\sum_{m=1}^N A^{(m)}=0$ and $\bar{Y}=\frac{1}{N}\sum_{i=1}^N Y_m=0$,   we can write the loss function as
\begin{equation}
\ell(B) = \frac{1}{2m}\sum_{i=1}^m \left(Y_i - \text{Tr}\left( A^{(i)} B\right) \right)^2. \label{eq:leastSquaresBlocks}
\end{equation}

We first present the very simple algorithm, and then explain the intuition behind it.   Let $\hat{\Sigma}^{\mathcal{A},\mathcal{Y}}$ be the $n\times n$ matrix of coefficients from the simple linear regression of the response on the weight of the edge $(i,j)$, defined by 
\begin{equation}
\left(\hat{\Sigma}^{\mathcal{A},\mathcal{Y}}\right)_{ij} = {\widehat{\text{Cov}}(A_{ij}, Y)} =  \frac{1}{N}\sum_{m=1}^NY_mA^{(m)}. \label{eq:sigmaAY-block}
\end{equation}
%%%%%%%%%%%%%%
Next, we perform spectral clustering on $\hat{\Sigma}^{\mathcal{A},\mathcal{Y}}$. That is, we first compute the $K$ leading eigenvectors of $\hat{\Sigma}^{\mathcal{A}, \mathcal{Y}}$, denoted by $V\in\mathbb{R}^{n\times K}$, and then cluster the rows of $V$ using  $K$-means, as summarized in Algorithm \ref{alg:SC-leastsquares}.  Cluster assignments give a membership matrix $\hat{Z}^{(0)}$, which can be used either as a regularizing constraint in problem \eqref{eq:blockConstrainedProblem-noZ}, or as an initial value for Algorithm \ref{alg:ADMM-SCD},  introduced in the next section. 

\begin{algorithm}
	\caption{Spectral clustering for the sum of squares loss}
	\begin{algorithmic} 
		\Input Training sample $\{(A^{(1)},Y_1),\ldots,(A^{(N)},Y_N)\}$ centered and standardized; number of communities $K$.
		\begin{enumerate}
			\item Compute $\hat{\Sigma}^{\mathcal{A},\mathcal{Y}} $ as in equation \eqref{eq:sigmaAY-block}.
			\item Compute $V$,  %=\left(Q_{\cdot 1} \cdots Q_{\cdot K}\right)$,
                          the $n \times K$  matrix of  $K$ leading eigenvectors of $\hat{\Sigma}^{\mathcal{A},\mathcal{Y}}$ corresponding to the $K$ largest (in absolute value) eigenvalues.
                          
   %   =Q\Lambda Q^T$, where $Q$ is the orthonormal matrix of eigenvectors and $\Lambda$ is the diagonal matrix of eigenvalues ordered by absolute value $|\Lambda_{11}| \ge \dots \ ge |\Lambda_{nn}}$.   Set 
			\item Run $K$-means to cluster the rows of $V$ into $K$ groups. %Let $\hat{\mathcal{C}}_1,\ldots,\hat{\mathcal{C}}_K$ be the indexes of those groups.	\item Form the membership matrix $\hat{Z}$ so that $\hat{Z}_{ik}=1$ if $i\in\hat{\mathcal{C}}_{k}$ and 0 otherwise.
		\end{enumerate}
		\Output $\hat{Z}^{(0)}=\hat{Z}$, the membership matrix constucted from $K$-means clusters.  
	\end{algorithmic}
	\label{alg:SC-leastsquares}
\end{algorithm}

The intuition behind this approach comes from the case of uncorrelated predictors, in the sense that for any pair of different edges $(i_1, j_1)$ and $(i_2, j_2)$
$$\sum_{m=1}^N A_{i_1j_1}^{(m)}A^{(m)}_{i_2j_2} =0.$$
In this case, it is well known that the least squares solution with no constraints is given by $\hat B = \hat{\Sigma}^{\mathcal{A}, \mathcal{Y}}$.   Then to obtain the best predictor with the block-structured constraints, we can solve the optimization problem
\begin{eqnarray}
\min_{Z, C}  & \frac{1}{2}\left\| ZCZ^T -  \Sigma^{\mathcal{A}, \mathcal{Y}} \right\|_F^2 \label{eq:blockConstrainedProblem-leastsquares}\\
\text{subject to} & Z\in\mathcal{Z}_{n,K}, \quad C\in\real^{K\times K}.\nonumber
\end{eqnarray}
%minimize the loss function 
%\begin{equation}
%\check{\ell}(B) = \frac{1}{2}\left\|B - \Sigma^{\mathcal{A}, \mathcal{Y}} \right\|_F^2 \label{eq:surrogate}
%\end{equation}
%Using the loss function \eqref{eq:surrogate}, the solution for the least squares loss 
This problem is still computationally hard since $Z$ has binary entries.    Instead, we solve~\eqref{eq:blockConstrainedProblem-leastsquares} over the space of continuous  matrices $\tilde{Z}\in\real^{n\times K}$, and then project the solution to the discrete constraint space  in~\eqref{eq:blockConstrainedProblem-leastsquares}. %That is, first consider the continuous relaxation problem
%\begin{align}
 %   (\breve{Z}, \breve{C}) = \argmin_{\tilde{Z}, \tilde{C}} & \quad  \left\| \tilde{Z}\tilde{C}\tilde{Z}^T -  \Sigma^{\mathcal{A}, \mathcal{Y}}\right\|_F^2\label{eq:lowrank} \\
%     \text{subject to } & \tilde{Z}\in\real^{n\times K}, \tilde{C}\in\real^{K\times K}.\nonumber
%\end{align}
%By the Eckart-Young theorem, a solution of the above problem can be
The continuous solution is given by the leading $K$ eigenvectors of  $\Sigma^{\mathcal{A}, \mathcal{Y}}$, another well-known fact.   Let $\tilde Z$ be the $n \times K$ matrix of eigenvectors corresponding to the $K$ largest eigenvalues of $\Sigma^{\mathcal{A}, \mathcal{Y}}$, in absolute value. % .  Suppose that $Q\Lambda Q^T= \Sigma^{\mathcal{A}, \mathcal{Y}} $ is the eigendecomposition of $ \Sigma^{\mathcal{A}, \mathcal{Y}}$, such that $Q\in\real^{n\times n}$ is an orthogonal matrix,  $\Lambda\in\real^{n\times n}$ is a diagonal matrix with the diagonal elements ordered by magnitude, that is, $|\Lambda_{11}|\geq \cdots\ \geq  |\Lambda_{nn}|$, and define $V\in\real^{n\times K}$ as the matrix of $K$  leading eigenvectors, i.e., $V_{ij}=Q_{ij}$ for $i\in[n], j\in[K]$. Hence, the solutionf of Equation~\eqref{eq:lowrank} satisfies $\breve{Z}\breve{C}\breve{Z}^T = V\operatorname{diag}(\lambda_1, \ldots, \lambda_K)V^T$. Therefore, $\tilde{Z}\Gamma = V$ for some matrix $\Gamma\in\real^{K\times K}$.
To project $\tilde{Z}$ onto the feasible set of membership matrices, we find the best Frobenius norm approximation to it by a matrix with $K$ unique rows, which is equivalent to minimizing the corresponding $K$-means loss function.  This procedure(Algorithm~\ref{alg:SC-leastsquares}) is analogous to spectral clustering in the classical community detection problem (e.g., \cite{rohe2011spectral}), with the crucial difference in that it uses the response values $\mathcal{Y}$ and not just the network itself.  

%optimization problem 
%\begin{align}
 %   (\hat{Z}, \hat{\Gamma}) = \argmin_{Z, \Gamma} & \quad  \|Z\Gamma - V\|_F^2 \label{eq:k-means-clustering}\\
 %    \text{subject to } & Z\in\mathcal{Z}_{n,K},\quad  \Gamma\in\real^{K\times K},\nonumber
%\end{align}
%that is, the community memberships of $V$ are estimated by clustering the rows of $V$, and $\hat{\Gamma}$ is a nuisance parameter corresponding to the cluster centroids.

In general, it might be unrealistic to assume that the edges are uncorrelated; nevertheless, many methods constructed based on this assumption have surprisingly good performance in practice even when this assumption does not hold, the so-called naive Bayes phenomenon \citep{bickel2004some}.  In Section~\ref{sec:theory}, we will show that even when there is a moderate correlation between edges in different cells,  this spectral clustering algorithm can accurately recover the community memberships.

\begin{remark}
For a general loss function of the form in Equation \eqref{eq:lossfunction-innerprod}, we compute the matrix $\widetilde{\Sigma}^{\mathcal{A},\mathcal{Y}}$ by solving the marginal problem for each coefficient of $B$ given by
\begin{equation}
\left(\widetilde{\Sigma}^{\mathcal{A},\mathcal{Y}}_{ij}, \tilde b_{ij}  \right)= \argmin_{B_{ij}, b} \sum_{m=1}^N\tilde{\ell}(Y_m, A^{(m)}B_{ij} + b). \label{eq:marginal}
\end{equation}
For the least squares loss function, $\widehat{\Sigma}^{\mathcal{A},\mathcal{Y}}=\widetilde{\Sigma}^{\mathcal{A},\mathcal{Y}}$, but more generally the matrix $\widetilde{\Sigma}^{\mathcal{A}, \mathcal{Y}}$ might be more expensive to compute.
 For generalized linear models (GLMs),  $\widehat{\Sigma}^{\mathcal{A},\mathcal{Y}}$ can be viewed as the estimate of the first iteration of the standard GLM-fitting algorithm, Iteratively Reweighted Least Squares, to solve Equation~\eqref{eq:marginal}. Thus, we also use Algorithm~\ref{alg:SC-leastsquares} for GLMs.

  %We initialize the algorithm for generalized linear models (GLMs) by the same solution from Algorithm \ref{alg:SC-leastsquares}.   It can be viewed as solving the first iteration of the standard GLM-fitting algorithm, Iteratively Reweighted Least Squares, with a block constraint on the coefficients.
 %(IRLS) with block constraints. the standard fitting algoritm  A popular optimization approach to fit a GLM is by iteratively reweighted least squares (IRLS) \citep{daubechies2010iteratively}. This method is based on iteratively solving a linear approximation to the loss function, which results in a weighted least squares problem. Algorithm \ref{alg:SC-leastsquares} can be thought as a solution for the first step of IRLS 
\end{remark}

%\begin{remark}
%	The sample covariance between the responses and the matrices $\Sigma^{\mathcal{A}, \mathcal{Y}}$ provides a good estimate of $\hat{B}$ when the previous assumptions hold, so it is an appropriate and computationally cheap candidate for the spectral clustering algorithm. However, it is possible to substitute $\Sigma^{\mathcal{A}, \mathcal{Y}}$ in the first step of Algorithm \ref{alg:SC-leastsquares} with other solutions.  In particular, it is appealing to use a an estimator $\tilde{B}$ that is approximately low rank, since this constraint is enforced in \eqref{eq:blockConstrainedProblem}. A convex relaxations to a low-rank constraint in $B$ was proposed by \citep{zhou2014regularized}, which use  a nuclear norm penalty to regularize the rank of the matrix. Thus, it is possible to use this estimator instead of $\Sigma^{\mathcal{A}, \mathcal{Y}}$ in Algorithm \ref{alg:SC-leastsquares}. In our application, we did not observe a significant difference in the prediction error using these two choices, so we use $\hat{\Sigma}^{\mathcal{A}, \mathcal{Y}}$ in the rest of the paper.
%\end{remark}

%%%%%%%%%%%%%%%%%%%%%%%%%%%%%%%%%%%%%%%%%%%%%%%%%%%%%%%%%%%%%%%%%%%%%%%%%%%%%%%%%%%%%%%%%%%%%%%%%%%%%%%%%%%%%%%%%%%%%%%%%%%%%%%%%%%%%%%%%%%%%%%%%%%%%%%%%%%%%%%%%%%%%%%%%%%%%%%%%%%%%%%%%%%%%%%%%%%%%%%%%%%%%%%%%%%%%%%%%%%%%%%%%%%%%%%%%%%%%%%%%%%%%%%%%%%%%%%%%%%################

\subsection{Iterative optimization with ADMM}

% In this section, we propose a heuristic algorithm to  approximately solve the optimization problem \eqref{eq:blockConstrainedProblem}. Given $Z$, the problem is easy to solve as we described in Section \ref{sec:supervisedCD}, but finding the optimal $Z$ is computationally infeasible. We propose an optimization strategy to approximate the solution, and later in Sections \ref{sec:simulations-CD} and \ref{sec:data-CD} we evaluate its performance. 

The ADMM is a popular flexible method for solving convex optimization problems with linear constraints which makes it easy to incorporate additional penalty functions. Although ADMM is limited to linear constraints, some heuristic extensions to more general settings with non-convex constraints   have been proposed \citep{diamond2016general}, showing good empirical performance.   Here we introduce an iterative ADMM algorithm for approximately solving \eqref{eq:blockConstrainedProblem}. % by introducing a heuristic algorithm to solve the intractable membership estimation.

Let $\mathcal{W}_{n,K}$ be the set of $n\times n$ matrices with $K$ blocks, which can be written as 
\[\mathcal{W}_{n,K}=\left\{B\in\mathbb{R}^{n\times n}\left|B=ZCZ^T, Z\in\mathcal{Z}_{n,K}, C\in\mathbb{R}^{K\times K}\right.\right
\}.\]  
For  $W\in\mathcal{W}_{n,K}$ and a dual variable $V\in\mathbb{R}^{n\times n}$, the augmented Lagrangian of the problem can be written as 
\begin{equation}
L_{\rho}(B, V, W) = \ell(B) + \Omega(B) +  \left\langle V, B-W\right\rangle + \frac{\rho}{2}\|B-W\|^2_F,
\end{equation}
with $\rho > 0$ a parameter of the optimization algorithm. Given initial values $B^{(0)}, V^{(0)}$ and $W^{(0)}$, the ADMM updates are as follows:   
\begin{align}
  B^{(t)} & =  \argmin_{B\in\mathbb{R}^{n\times n}} L_\rho(B, V^{(t-1)}, W^{(t-1)}) , 
            \label{eq:alg_lossstep}\\
  W^{(t)} & =  \argmin_{W\in\mathcal{W}} L_\rho(B^{(t)}, V^{(t-1)}, W) , 
            \label{eq:alg_clusteringstep}\\
V^{(t)}  & =  V^{(t-1)} + \rho\left(B^{(t)}-W^{(t)}\right). \label{eq:dualADMM}
\end{align}
Equation \eqref{eq:alg_lossstep} depends on the loss $\ell$ and the penalty $\Omega$, and can be expressed as
\begin{align}
B^{(t)} %& = & \argmin_{B\in\mathbb{R}^{n\times n}}\left\{  \ell(B) + \Omega(B) + \left\langle V^{(t-1)}, B-W^{(t-1)}\right\rangle + \frac{\rho}{2}\|B-W^{(t-1)}\|^2_F\right\}\nonumber\\
& =  \argmin_{B\in\mathbb{R}^{n\times n}}\left\{  \ell(B) + \Omega(B) + \frac{\rho}{2}\left\|B-\left(W^{(t-1)}-\frac{1}{\rho}V^{(t-1)}\right)\right\|_F^2\right\} . \label{eq:solve-for-B-ADMM}
\end{align}
If $\ell$ and $\Omega$ are convex, as is generally the case, then \eqref{eq:solve-for-B-ADMM} is also convex, and in some cases it is possible to express it as a regression loss with a ridge penalty. Step \eqref{eq:alg_clusteringstep} is a non-convex combinatorial problem because of the membership matrix $Z$.  It can be rewritten as 
\begin{equation}
W^{(t)} = \argmin_{W\in\mathcal{W}}\left\| W - \left(B^{(t)}+\frac{1}{\rho}V^{(t-1)}\right)\right\|_F^2,\label{eq:optimize-W}
\end{equation}
which is analogous to problem \eqref{eq:blockConstrainedProblem-leastsquares}, and we can again %use the same strategy described in Equations~\eqref{eq:lowrank} and \eqref{eq:k-means-clustering} to
approximately solve it by spectral clustering on the matrix $B+\frac{1}{\rho}V$. % in order to obtain a membership matrix $Z^{(t)}$.
Once we obtain the membership matrix $Z^{(t)}$ from spectral clustering, the solution of equation \eqref{eq:optimize-W} is given by
\begin{equation}
W^{(t)} = Z^{(t)} (Z^{(t)^T}Z^{(t)})^{-1} \left(B^{(t)}+\frac{1}{\rho}V^{(t-1)}\right)(Z^{(t)^T}Z^{(t)})^{-1} Z^{(t)^T}.
\label{eq:solution-W-Z}
\end{equation}
Finally, step \eqref{eq:dualADMM} updates the dual variables in closed form. We iterate this algorithm until the primal and the dual residuals,
$\frac{1}{n}\|B^{(t)}-W^{(t)}\|_F$ and $\frac{\rho}{n}\|W^{(t)}-W^{(t-1)}\|_F$, are both smaller than some tolerance $\epsilon^{\text{TOL}}$ \citep{boyd2011distributed}.
 The steps of the ADMM algorithm are summarized in Algorithm  \ref{alg:ADMM-SCD}. Because this is a non-convex problem, it is important to start with a good initial value, so we use Algorithm~\ref{alg:SC-leastsquares} to initialize the method.

\begin{algorithm}
	\caption{Iterative optimization with ADMM}
	\begin{algorithmic} 
		\Input $\{(A^{(1)},Y_1),\ldots,(A^{(N)},Y_N)\}$,  $K$, $\rho$
		\Initialize   Set $Z=Z^{(0)}$ from Algorithm~\ref{alg:SC-leastsquares}, set $W^{(0)}$ to be the solution of problem \eqref{eq:blockConstrainedProblem} using, and let $V^{(0)}=0_{n\times n}$.
		\Iterate for $t=1,2,\ldots$ until convergence 
		\begin{enumerate}
			\item Compute $B^{(t)}$ using \eqref{eq:solve-for-B-ADMM}.
			\item Update $W^{(t)}$ using spectral clustering:
			\begin{enumerate}
				\item Set $V$ be the $K$ leading eigenvectors of $B^{(t)} + \frac{1}{\rho}V^{(t-1)}$; 
				\item Run $K$-means to cluster the rows of $V$ into $K$ groups $\hat{C}_1,\ldots,\hat{C}_K$ ; 
				\item Set $Z^{(t)}_{ik}=1$ if $i\in\hat{\mathcal{C}}_k$ and 0 otherwise; 
				\item Update $W^{(t)}$ using \eqref{eq:solution-W-Z}.
			\end{enumerate}
			\item Update $V^{(t)}$ using \eqref{eq:dualADMM}.
		\end{enumerate}
		\Output $\hat{B}=B^{(t)}$ and $\hat{Z}=Z^{(t)}$
	\end{algorithmic}
	\label{alg:ADMM-SCD}
\end{algorithm}

The algorithmic parameter $\rho$ in Algorithm \ref{alg:ADMM-SCD} controls the size of the primal and dual steps. A larger $\rho$ forces $B^{(t)}$ to stay closer to $W^{(t-1)}$, and then the community assignments  $Z^{(t-1)}$ are less likely to change at each iteration.  For convex optimization, the ADMM is guaranteed to converge to the optimal value for any $\rho>0$.   In our case, if $\rho$ is too small, the algorithm may not converge, whereas if $\rho$ is too large, the algorithm may never move away  from the initial community assignment $Z^{(0)}$. In practice, we run Algorithm \eqref{alg:ADMM-SCD} for a set of different values of $\rho$, and choose the solution $\hat{B}$ that gives the best objective function value in equation \eqref{eq:blockConstrainedProblem}.

%%%%%%%%%%%%%%%%%%%%%%%%%%%%%%%%%%%%%%%%%%%%%%%%%%%%%%%%%%%%%%%%%%%%%%%%%%%%%%%%%%%%%%%%%%%%%%%%%%%%%%%%%%%%%%%%%%%%%%%%%%%%%%%%%%%%%%%%%%%%%%%%%%%%%%%%%%%%%%%%%%%%%%%%%%%%%%%%%%%%%%%%%%%%%%%%%%%%%%%%%%%%%%%%%%%%%%%%%%%%%%%%%%%%%%%%%%%%%%%%%%%%%%%%%%%%%%%%%%%################

\section{Theory \label{sec:theory}}

We obtain theoretical performance guarantees for the spectral clustering method  when the sample of networks is distributed according to a weighted stochastic block model and our motivating model holds, i.e., the response follows a linear model with a block-constant matrix of coefficients. This is reflected in the following assumption.

\begin{assumption}[Linear model with block-constant coefficients] \label{assump:linear-model}  Responses $Y_m$, $m = 1, \dots, N$ follow the linear model
	\[Y_m = \left\langle A^{(m)}, B\right\rangle + \sigma \epsilon_m,\]
	where $\sigma>0$ is an unknown constant,  $\epsilon_1,\ldots, \epsilon_N$ are independent sub-Gaussian random variables  with $$\Exp{\epsilon_m}=0, \quad\quad\quad \orlicz{\epsilon_m}=1 , $$
        and the matrix of coefficients $B\in\real^{n\times n}$ takes the form $B=ZCZ^T$, where $Z\in\mathcal{Z}_{n,K}$ is a membership matrix,  and $C\in\real^{K\times K}$ satisfies $\|C\|_F=\Theta(1)$.
\end{assumption}
Assumption \ref{assump:linear-model}  implies that the rank of $B$ is at most $K$, which corresponds to the number of communities.  For the theoretical developments only, we further assume that the value of $K$ is known, and the  goal is to recover the community membership matrix $Z$.   In practice, $K$ is selected by cross-validation.  

As is common in the high-dimensional regression literature, we assume that the predictor variables (edge weights) are distributed as sub-Gaussian random variables
\citep{zhou2009restricted, Raskutti2010,Rudelson2013}. Recall that the sub-Gaussian norm of a random variable is defined as
\begin{equation*}
    \orlicz{X} = \sup_{p\geq 1} p^{-1/2}\Exp{|X|^p}^{1/p}.
\end{equation*}
A random variable $X$ is said to be sub-Gaussian if $\orlicz{X} < \infty$.  %Note that the variance is finite because the second moment is bounded.
Using this notation, the weighted sub-Gaussian stochastic block model distribution for a single graph is defined as follows. 
\begin{definition}[Weighted sub-Gaussian stochastic block model]
	%Let $A\in\real^{n\times n}$ be a random adjacency matrix. 
	Given a positive integer $ K$, a community membership matrix $Z \in\mathcal{Z}_{n,K}$ with vertex community labels ${z}_1,\ldots, {z}_n\in[ K]$ such that ${z}_u=k$ if ${Z}_{uk}=1$, and  matrices $R, \Psi \in\real^{K\times K}$,  we say that an $n \times n$ matrix $A$ follows a weighted sub-Gaussian stochastic block model distribution,  denoted by $A\sim \operatorname{sGSBM}(R, \Psi, \{{Z}\})$ if for every $1 \le v < u \le n$, $A_{uv}$ is an independent sub-Gaussian random variable with
	\begin{equation*}
	  \Exp{A_{uv}} = R_{{z}_u {z}_v},  \quad  \operatorname{Var}\left(A_{uv}\right) = \Psi_{ {z}_u {z}_v}.
	\end{equation*}
\end{definition}

This definition is an extension of the classical stochastic block model with Bernoulli edges \citep{holland1983stochastic}, allowing for a much larger class of distributions for edge weights.   %In the classical SBM, $R\in[0,1]^{\tilde{K}\times \tilde K}$ denotes the matrix of connection probabilities between and within communities, and $\Psi_{jk} = R_{jk}(1-R_{jk})$.
Weighted stochastic block models with general edge distributions have been considered in the context of a single-graph community detection problem \citep{Xu2017}, as well as in the analysis of averages of multiple networks \citep{Levin2019}. In contrast, our model is for a regression setting, and requires control of the covariance structure, which we can achieve with just the sub-Gaussian edge distribution assumption.   This is formalized in the following assumption on the sample.

\begin{assumption}[Network sample with  common communities] \label{assump:network-distr}
Given $K$ and $Z\in\mathcal{Z}_{n,K}$ as defined in Assumption~\ref{assump:linear-model}, there are symmetric $K \times K$ matrices $R^{(1)},\ldots, R^{(N)}$, $\Psi^{(1)}, \ldots, \Psi^{(N)}$ such that for each $m\in[N]$
  \begin{equation*}
  A^{(m)}\sim \mathrm{sGSBM}(R^{(m)}, \Psi^{(m)}, Z).
  \end{equation*}
\end{assumption}

For the purpose of the theoretical analysis, we consider the matrices $R^{(m)}, \Psi^{(m)}, m\in[N]$  \emph{fixed} parameters, and the edges of the graphs $A^{(m)}_{uv}, m\in[N], u,v\in[n], u>v$ are \emph{random} variables distributed according to Assumption~\ref{assump:network-distr}. Let $n_1, \ldots, n_K$ be the shared community sizes, with $\sum_{k=1}^K n_k =n$, and let $n_{\min} := \min_kn_k$ and $n_{\max} := \max_k n_k$. 

 %Our results will be on the fixed parameters, and thus the matrix of regression covariates can be thought as a \emph{mixed design}.

%The SBM distribution given in Assumption~\ref{assump:network-distr} considers a \emph{mixed design} setting on the regression covariates (which correspond to the edges variables), in which , Although Assumption \ref{assump:network-distr} requires each network to be marginally distributed as a SBM, the purpose of this assumption is to control the covariance between the edges, defined next, and can instead be described by imposing constraints on this covariance directly without enforcing a SBM distribution on the graphs. See Remark~\ref{remark:SBM}.  

The next assumption is  introduced to simplify notation, essentially without loss of generality, because we can always standardize the data. 
\begin{assumption}[Standardized covariates]
\label{assumption:centered}
For each $u,v\in[n]$, $u>v$,
\begin{align}
%\e_{m}\left[\mathcal{A}_{uv}\right] & := 
\Exp{ \frac{1}{N}\sum_{m=1}^N A^{(m)}_{uv} }   = 0, \  \  \  %\label{eq:assumption-mean}\\
%\e_m\left[\mathcal{A}_{uv}^2\right] & := 
\Exp{\frac{1}{N}\sum_{m=1}^N \left(A^{(m)}_{uv}\right)^2 }   = 1,%\label{eq:assumption-variance}
\end{align} 
that is, the expected sample mean and variance of the edges conditional on the sGSBM parameters $R^{(1)},\ldots, R^{(N)},\Psi^{(1)}, \ldots, \Psi^{(N)}$ are 0 and 1, respectively.  
\end{assumption}

%Assumptions \ref{assump:network-distr} and \ref{assumption:centered} characterize the distribution of the sample of networks and impose some constraints on the parameters. Assumption~\ref{assumption:centered}

Next, define matrices $\Pi, \Psi\in\real^{K\times K}$  corresponding to the cell  and edge sample variances, that is,
\begin{equation*}
\Pi_{ij} := \frac{1}{N}\sum_{m=1}^N \left(R^{(m)}_{ij}\right)^2, \ \ \  \Psi_{ij} := \frac{1}{N}\sum_{m=1}^N \Psi^{(m)}_{ij}.
\end{equation*}
Assumption~\ref{assumption:centered}  implies that
\begin{align*}
 \frac{1}{N}\sum_{m=1}^N R^{(m)} = 0, \ \ \  \Pi_{ij}  + \Psi_{ij}  = 1.
\end{align*}

% This setting allows to study the effect of the global structure in the connectivity patterns of the networks, while incorporating some randomness on the edges. Note that Assumptions~\ref{assump:linear-model} and \ref{assump:network-distr} require that both the networks and the matrix of coefficients $B$ share the same community structure, but not all communities need to be present in all the graphs, and when the matrices $R^{(1)}, \ldots, R^{(N)}$ are not full-rank, the number of communities in the networks can be smaller than $K$, so the matrix $\Pi$ might not be full rank, or can even be equal to zero when there is no community structure in the networks.

By Assumption~\ref{assump:network-distr}, all the networks have the same community structure but potentially different expected connectivity matrices $R^{(m)}, m\in[N]$, allowing for individual differences.  While the edge variables within each network are independent,  the connectivity matrices induce a covariance between the edge variables across the sample, corresponding to the covariance of the linear regression design matrix. To define this covariance, we first introduce some notation. Let $p= n(n+1)/2$ be the total number of edges.
To represent the covariance between edges in matrix form, % we map the edge and cell indexes to a corresponding index in a vector that represents the upper triangular part of a matrix.   Specifically,
we write $[u,v]$ to denote the index in a vector of length $p$ corresponding to edge $(u,v)$, $u<v$. Denote by $\mathcal{P}:=\{(u,v)\in[n]\times [n], u<v\}$  the index set of edges pairs, and by $\mathcal{Q}:=\{(h,l)\in[K]\times [K],h\leq l \}$  the index set of cell pairs. Using this notation,  $\Sigma^{\mathcal{A}}\in\mathbb{R}^{p\times p}$ is defined as the covariance matrix of the edges across the networks, such that for a pair of edge variables $(u,v)$ and $(s,t)$ in $\mathcal{P}$, $\Sigma^{\mathcal{A}}_{[u,v], [s,t]} := \e\left[\frac{1}{N}\sum_{m=1}^N{A}^{(m)}_{uv}{A}^{(m)}_{st}\right]$. As a consequence of the block model distribution of the networks, the covariance between edge variables across the samples only depends on the community membership of the nodes involved, and is given by 
\begin{align}
	\Sigma^{\mathcal{A}}_{[u,v], [s,t]} &  = \left\{ \begin{array}{ll}
		\frac{1}{N}\sum_{m=1}^NR^{(m)}_{z_uz_v}R^{(m)}_{z_sz_t} & \text{if }(u,v)\neq (s,t)\text{ and }(z_u, z_v)\neq (z_s, z_t),\\
		& \\
		\Pi_{z_uz_v} & \text{if }(u,v)\neq (s,t)\text{ and }(z_u,z_v)  = (z_s, z_t),\\
		& \\
		\Pi_{z_uz_v} + \Psi_{z_uz_v} & \text{if }(u,v)= (s,t). \\
	\end{array}\right. \label{eq:covariance-communities}
\end{align}

The next proposition gives the specific form of the expectation of $\hat{\Sigma}^{\mathcal{A},\mathcal{Y}}$ defined in Equation~\eqref{eq:sigmaAY-block} under our model. 

\begin{proposition} \label{proposition:expected-sigma-AY}  Suppose that Assumptions \ref{assump:linear-model}, \ref{assump:network-distr} and \ref{assumption:centered}  hold. Define the symmetric matrix $F\in\real^{K\times K}$ as
\begin{equation*}
F_{jk} := 2\Psi_{jk} C_{jk} + 2\sum_{(s,t)\in\mathcal{P}}\left(\frac{1}{N}\sum_{m=1}^NR^{(m)}_{jk}R^{(m)}_{z_sz_t}\right) C_{z_sz_t}.
\end{equation*}
Then, the expected value of  $\hat{\Sigma}^{\mathcal{A}, \mathcal{Y}}$ can be written as 
\begin{equation}
	\Exp{\hat{\Sigma}^{\mathcal{A}, \mathcal{Y}}} = ZF Z^T - \mathrm{diag}(ZFZ^T). \label{eq:expected=sigma-A-Y}
\end{equation}
\end{proposition}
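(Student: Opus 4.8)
The plan is to compute the expectation entrywise, substitute the linear model from Assumption~\ref{assump:linear-model}, and reduce everything to second moments of the sGSBM edges. Fix an off-diagonal pair $(u,v)$ with $u \ne v$. By the definition in~\eqref{eq:sigmaAY-block}, $(\hat{\Sigma}^{\mathcal{A},\mathcal{Y}})_{uv} = \frac{1}{N}\sum_{m=1}^N Y_m A^{(m)}_{uv}$. Replacing $Y_m = \langle A^{(m)}, B\rangle + \sigma\epsilon_m$ splits the expectation into a signal term and a noise term. Since $\epsilon_m$ is independent of $A^{(m)}$ with $\Exp{\epsilon_m}=0$, the noise term $\frac{\sigma}{N}\sum_m \Exp{\epsilon_m A^{(m)}_{uv}} = \frac{\sigma}{N}\sum_m \Exp{\epsilon_m}\Exp{A^{(m)}_{uv}}$ vanishes, leaving
\[
\Exp{(\hat{\Sigma}^{\mathcal{A},\mathcal{Y}})_{uv}} = \frac{1}{N}\sum_{m=1}^N \Exp{\langle A^{(m)}, B\rangle\, A^{(m)}_{uv}}.
\]

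Next I would expand the inner product using $B = ZCZ^T$ together with symmetry and the zero diagonal, writing $\langle A^{(m)}, B\rangle = 2\sum_{s>t} C_{z_s z_t} A^{(m)}_{st}$, and push the expectation inside to obtain $2\sum_{s>t} C_{z_s z_t}\Exp{A^{(m)}_{st} A^{(m)}_{uv}}$. The core computation is the second moment $\Exp{A^{(m)}_{st} A^{(m)}_{uv}}$, which splits into two cases by within-network edge independence (Assumption~\ref{assump:network-distr}): for $(s,t) \ne (u,v)$ the two edges are independent, giving $R^{(m)}_{z_s z_t} R^{(m)}_{z_u z_v}$; for $(s,t) = (u,v)$ it is the raw second moment $\Psi^{(m)}_{z_u z_v} + (R^{(m)}_{z_u z_v})^2$.

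The key algebraic step is to recombine these cases: the term $(R^{(m)}_{z_u z_v})^2$ from the coincidence case is exactly the summand that would appear in the product sum at $(s,t)=(u,v)$, so adding it back lets me write the whole expression as $2 C_{z_u z_v}\Psi^{(m)}_{z_u z_v} + 2 R^{(m)}_{z_u z_v}\sum_{(s,t)\in\mathcal{P}} C_{z_s z_t} R^{(m)}_{z_s z_t}$, where the product sum now runs over all of $\mathcal{P}$. Averaging over $m$ and substituting $\Psi_{jk} = \frac{1}{N}\sum_m \Psi^{(m)}_{jk}$ identifies the result with $F_{z_u z_v}$ as defined in the statement, i.e. $\Exp{(\hat{\Sigma}^{\mathcal{A},\mathcal{Y}})_{uv}} = (ZFZ^T)_{uv}$ for every off-diagonal entry. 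Finally, since $A^{(m)}$ has zero diagonal, the diagonal of $\hat{\Sigma}^{\mathcal{A},\mathcal{Y}}$ is identically zero, whereas $(ZFZ^T)_{uu} = F_{z_u z_u}$ need not be; subtracting $\mathrm{diag}(ZFZ^T)$ corrects for this and yields~\eqref{eq:expected=sigma-A-Y}.

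The main obstacle is bookkeeping rather than conceptual: keeping track of the symmetry factors of $2$ (each unordered edge is counted twice in $\langle\cdot,\cdot\rangle$) and correctly recognizing that the coincidence case supplies the missing $(R^{(m)}_{z_u z_v})^2 C_{z_u z_v}$ needed to complete the product sum over all of $\mathcal{P}$. Care is also required to confirm that summing over $s>t$ versus over $\mathcal{P} = \{(s,t): s<t\}$ gives the same value, which follows from the symmetry of $R^{(m)}$, $C$, and the label map $z$.
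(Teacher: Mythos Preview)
Your proposal is correct and follows essentially the same approach as the paper: compute $\Exp{(\hat{\Sigma}^{\mathcal{A},\mathcal{Y}})_{uv}}$ entrywise, drop the noise term by independence, and reduce to second moments of the edges, splitting into the coincidence case (which contributes the $\Psi$ term) and the non-coincidence case (which contributes the $R^{(m)}R^{(m)}$ product), then recognize the result as $F_{z_u z_v}$. The only cosmetic difference is that the paper packages the case split via the predefined covariance matrix $\Sigma^{\mathcal{A}}$ from~\eqref{eq:covariance-communities}, whereas you compute the moments directly and are more explicit about both the recombination step and the diagonal correction.
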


Proposition \ref{proposition:expected-sigma-AY} shows the expected value of $\hat{\Sigma}^{\mathcal{A}, \mathcal{Y}}$ has the same community structure as the matrix of coefficients $B$, and thus the leading  eigenvectors of $\Exp{\hat{\Sigma}^{\mathcal{A}, \mathcal{Y}}}$  contain all the necessary information for recovering communities, provided that   $F$ is full rank. %  $\hat{\Sigma}^{\mathcal{A}, \mathcal{Y}}$ is sufficiently close to its expectation, Algorithm~\ref{alg:SC-leastsquares} can recover the community labels correctly.
We state this as a formal condition on $F$.  
\begin{condition} 
\label{condition:bounded-covariance} 
There exist a constant $\kappa>0$  such that  $|\lambda_{\min}\left(F\right)| \geq \kappa$.
%\begin{equation*}
%    |\lambda_{\min}\left(F\right)| = \Omega\left(\max\left\{1,\ n_{\min}^2\lambda_{\min}(\Pi\circ C') \right\}\right),
%\end{equation*}
%where $C'_{jk} := (1+ \mathbbm{1}\{j\neq k\})C_{jk}$.
\end{condition}
This is a mild condition which is sufficient for consistent community detection with spectral clustering, and it typically holds whenever $\Psi\circ C$ is full rank.   If the smallest eigenvalue of $F$ grows with the size of the graph at the rate of $\Omega(n^2_{\min})$, we will show that spectral clustering achieves a faster consistency rate for community detection.  In particular, this faster rate is achieved when the covariance across the sample between edges from different cells is sufficiently small; the next proposition gives a sufficient condition for that.

\begin{proposition} \label{prop:condition-covariance}
Suppose that $n_{\min}^2=\omega(n_{max})$ and there exist constants $\delta\in(0,1], \gamma >0$ such that $|\lambda_{\min}(\Pi\circ C')|\geq \gamma$ and
%\begin{equation}
%\left|\sum_{(h,l)\in\ \mathcal{Q}\setminus{\{(j,k)\}}}  \Sigma^{\mathcal{R}}_{[j,k]'[h,l]'} M_{hl} C_{hl} \right| \leq \left(\delta\frac{n^2_{\min}\gamma}{\|M\circ C\|_1}\right) \Sigma^{\mathcal{R}}_{[j,k]'[j,k]'}M_{jk}C_{jk}.
%    \label{eq:covarianceR-bound}
%\end{equation}
\begin{equation}
\max_{(u,v)\in\mathcal{P}} \left|
\sum_{\substack{(h,l)\in\ \mathcal{Q},\\ (h,l)\neq (z_u, z_v) }}\  \sum_{\substack{(s,t)\in\mathcal{P},\\(z_s, z_t) = (h,k)}}  \Sigma^{\mathcal{A}}_{[u,v][s,t]} B_{st} \right| \leq (1-\delta)\frac{n^2_{\min}}{K}|\lambda_{\min}(\Pi\circ C')|,
\label{eq:covarianceR-bound}
\end{equation}
%\begin{equation}
%\max_{j\in[K]} \sum_{k=1}^K\left|\sum_{(h,l)\in\ \mathcal{Q}\setminus{\{(j,k)\}}}  \Sigma^{\mathcal{R}}_{[j,k]'[h,l]'} M_{hl} C_{hl} \right| \leq (1-\delta)n^2_{\min}|\lambda_{\min}(\Pi\circ C')|,
%\label{eq:covarianceR-bound}
%\end{equation}
 where $C'_{jk} := (1+ \mathbbm{1}\{j\neq k\})C_{jk}$.
Then, Condition~\ref{condition:bounded-covariance} holds, and
\begin{equation*}
    |\lambda_{\min}(F)| = \Omega\left(n^2_{\min}\right) .  %|\lambda_{\min}(\Pi\circ C)|\right).
\end{equation*}
\end{proposition}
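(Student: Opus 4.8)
The plan is to isolate the dominant, cell-matched part of $F$ and to treat everything else as a perturbation. Starting from the formula for $F$ in Proposition~\ref{proposition:expected-sigma-AY}, I would split the inner sum over edges $(s,t)\in\mathcal{P}$ according to whether $(s,t)$ lies in the same cell as $(j,k)$. The matched edges, of which there are $N_{jk}=n_jn_k$ when $j\neq k$ and $N_{jj}=\binom{n_j}{2}$, each contribute $\tfrac1N\sum_m (R^{(m)}_{jk})^2=\Pi_{jk}$ times $C_{jk}$, so the matched portion of the second term equals $2N_{jk}\Pi_{jk}C_{jk}$. The unmatched edges reproduce exactly the cross-cell quantity controlled in \eqref{eq:covarianceR-bound}: writing $E_{jk}$ for the sum over cells $(h,l)\neq(j,k)$ appearing there and using the first case of \eqref{eq:covariance-communities}, one obtains
\[
F_{jk} = 2N_{jk}\Pi_{jk}C_{jk} + 2(\Psi\circ C)_{jk} + 2E_{jk}.
\]
Recalling $C'_{jk}=(1+\mathbbm{1}\{j\neq k\})C_{jk}$, the matched part can be written in matrix form as $D(\Pi\circ C')D-\Delta$, where $D=\mathrm{diag}(n_1,\dots,n_K)$ and $\Delta=\mathrm{diag}\big(n_j(\Pi\circ C')_{jj}\big)$, so that $F=D(\Pi\circ C')D-\Delta+2\,\Psi\circ C+2E$.

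The first step is to show the dominant term $D(\Pi\circ C')D$ has smallest-in-absolute-value eigenvalue of order $n_{\min}^2$. Since $D$ is invertible, $D(\Pi\circ C')D$ is a congruence of $\Pi\circ C'$, hence nonsingular with the same inertia by Sylvester's law; quantitatively, for a symmetric matrix the smallest singular value equals $|\lambda_{\min}|$, so
\[
|\lambda_{\min}(D(\Pi\circ C')D)| = \frac{1}{\|(D(\Pi\circ C')D)^{-1}\|_{\mathrm{op}}} \ge \frac{1}{\|D^{-1}\|_{\mathrm{op}}^2\,\|(\Pi\circ C')^{-1}\|_{\mathrm{op}}} \ge \gamma\, n_{\min}^2,
\]
using $\|D^{-1}\|_{\mathrm{op}}=n_{\min}^{-1}$ and the hypothesis $|\lambda_{\min}(\Pi\circ C')|\ge\gamma$ (here $\|\cdot\|_{\mathrm{op}}$ denotes the spectral norm, to be distinguished from the paper's Frobenius norm).

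Next I would bound the three correction terms in spectral norm. Since $\Pi_{jk},\Psi_{jk}\in[0,1]$ (from $\Pi_{ij}+\Psi_{ij}=1$ entrywise) and $\|C\|_F=\Theta(1)$, the variance term satisfies $\|2\,\Psi\circ C\|_{\mathrm{op}}\le 2\|C\|_F=O(1)$, and the diagonal correction satisfies $\|\Delta\|_{\mathrm{op}}=\max_j n_j|(\Pi\circ C')_{jj}|=O(n_{\max})$; by the standing assumption $n_{\min}^2=\omega(n_{\max})$ both are $o(n_{\min}^2)$. The cross-cell term is the crux: \eqref{eq:covarianceR-bound} gives $|E_{jk}|\le(1-\delta)\tfrac{n_{\min}^2}{K}|\lambda_{\min}(\Pi\circ C')|$ for every entry, so each absolute row sum of $2E$ is at most $2(1-\delta)n_{\min}^2|\lambda_{\min}(\Pi\circ C')|$, and since for a symmetric matrix the spectral norm is bounded by its maximum absolute row sum, $\|2E\|_{\mathrm{op}}\le 2(1-\delta)n_{\min}^2|\lambda_{\min}(\Pi\circ C')|$.

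Finally I would combine these via Weyl's inequality: because every eigenvalue of $D(\Pi\circ C')D$ has magnitude at least $\gamma n_{\min}^2$, a symmetric perturbation of spectral norm strictly below that magnitude preserves each eigenvalue's sign and shrinks the smallest magnitude by at most the perturbation norm, giving $|\lambda_{\min}(F)|\ge \gamma n_{\min}^2-\|2E\|_{\mathrm{op}}-o(n_{\min}^2)=\Omega(n_{\min}^2)$, whence Condition~\ref{condition:bounded-covariance} is immediate. The main obstacle is precisely the cross-cell term $2E$: unlike $\Delta$ and $\Psi\circ C$, it is of the same order $n_{\min}^2$ as the dominant term, so the argument cannot reduce to a lower-order perturbation and instead hinges on the strict slack $(1-\delta)<1$ together with the $1/K$ normalization built into \eqref{eq:covarianceR-bound}. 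Careful bookkeeping of constants (the factor $2$ multiplying $E$ and the sharpness of the row-sum bound) is the delicate part; a cleaner way to manage it is to pass to the congruence-scaled matrix $D^{-1}FD^{-1}$, compare its smallest eigenvalue directly against $\lambda_{\min}(\Pi\circ C')$ before restoring the $n_{\min}^2$ factor, and track the $\delta$-dependence so that the leftover constant in front of $n_{\min}^2$ is verified to be strictly positive.
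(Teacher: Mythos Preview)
Your proposal is essentially the paper's own proof. The paper defines $J$ and $H$ (your matched term and your $E$), writes $2J=\Theta(\Pi\circ C')\Theta-\mathrm{diag}(\Theta\circ\Pi\circ C')$ with $\Theta=\mathrm{diag}(n_1,\dots,n_K)$ (your $D$ and $\Delta$), bounds $|\lambda_{\max}(H)|$ by the Gershgorin disc theorem (your maximum-row-sum bound), bounds the dominant term exactly as you do, and combines via Weyl's inequality; even the factor-of-two bookkeeping issue you flag is present in the paper's argument as written.
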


%Proposition~\ref{prop:condition-covariance} bounds the covariance between edges in different cells to ensure that the eigenvalues of $F$ are sufficiently large. 
The left hand side of Equation~\eqref{eq:covarianceR-bound} is the sum of off-diagonal entries of $\Sigma^\mathcal{A}$ corresponding to pairs of edges located in different cells, multiplied by  the coefficient matrix $B$. Equation~\eqref{eq:covarianceR-bound} trivially holds when these entries of $\Sigma^{\mathcal{A}}$ are all zero.  This condition on the structure of the covariance between predictors is similar in spirit to conditions for support recovery for sparse  regularized estimators \citep{wainwright2009sharp,zhao2006model}.

\begin{remark} \label{remark:SBM} Assumption~\ref{assump:network-distr} requires the community structure in all the graphs to be the same.  A shared community structure is a  common assumption in studying multiple networks with labeled vertices  \citep{holland1983stochastic,han2014consistent}; in our application of interest, this assumption is motivated by the fact that each vertex corresponds to the same location in the brain for all the networks, obtain from a common brain atlas, and the organization of nodes into brain systems, once they are mapped onto a common atlas, remains the same across subjects.   Having the same community structure in the coefficients effectively allows us to look for the influence of network cells, representing connectivity within or between brain systems, instead of between individual nodes.  Alternatively, Assumption \ref{assump:network-distr} can be replaced with assumptions about the structure of the covariance  $\Sigma^\mathcal{A}$ analogous to Proposition \ref{prop:condition-covariance}.
\end{remark}

The next theorem quantifies the distance between the eigenspace of $\hat{\Sigma}^{\mathcal{A},\mathcal{Y}}$ and the eigenspace of its expectation, which will in turn allow us to bound the error in community detection.    Denote by $\psi := \max_{j,k}\Psi_{jk}$ to the largest individual edge variance and by $\pi^\ast := \max_{j,k\in[K]}\Pi_{jk}$ to the largest cell variance.

\begin{theorem} \label{thm:eigenvectors}
Suppose that weighted networks $A^{(1)},\ldots,A^{(N)}$ satisfy Assumptions \ref{assump:network-distr} and \ref{assumption:centered}, and  responses $Y_1,\ldots,Y_N$ follow the linear model of Assumption \ref{assump:linear-model}. Let $\hat{V}$ and $V$ be the $K$ leading eigenvectors of $\hat{\Sigma}^{\mathcal{A}, \mathcal{Y}}$  and $\Exp{\hat{\Sigma}^{\mathcal{A}, \mathcal{Y}}}$, respectively.  Suppose Condition  \ref{condition:bounded-covariance} holds for all $n>n'$ for some constant $n' > 0$.%, and that the ratio of the largest and smallest communities is bounded, i.e., $\frac{n_{\max}}{n_{\min}}=O(1)$. 
Set $\lambda^\ast := \left|\lambda_{\min}\left(\frac{1}{n^2_{\min}}F\right)\right|$. Then, 
\begin{equation}
\Exp{\min_{O\in\mathcal{O}_K}\|\hat{V} - VO\|_F} \lesssim \frac{\sqrt{Kn}(K^2n_{\max}^2\pi^\ast\psi +
     \sigma +
     n_{\max}\psi^2)}{n_{\min}\sqrt{N}}\min\left\{ \frac{1}{n^2_{\min}\lambda^\ast}, 1\right\},
     \label{eq:theorem-eigenvec-distance}
\end{equation} 
where $\mathcal{O}_K$ denotes the set of $K\times K$ orthogonal matrices.
\end{theorem}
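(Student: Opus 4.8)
The plan is to control the eigenvector perturbation with a Davis--Kahan $\sin\Theta$ argument, so that the problem splits into (i) a lower bound on the relevant eigengap of the population matrix $\Exp{\hat\Sigma^{\mathcal{A},\mathcal{Y}}}$ and (ii) an upper bound, in expectation, on the stochastic perturbation $E := \hat\Sigma^{\mathcal{A},\mathcal{Y}} - \Exp{\hat\Sigma^{\mathcal{A},\mathcal{Y}}}$. Since $\Exp{\hat\Sigma^{\mathcal{A},\mathcal{Y}}}$ has rank at most $K$ by Proposition~\ref{proposition:expected-sigma-AY}, the Yu--Wang--Samworth form of Davis--Kahan gives $\min_{O\in\mathcal{O}_K}\|\hat V - VO\|_F \lesssim \sqrt{K}\,\|E\|_{\mathrm{op}}/\delta_K$, where $\delta_K$ is the smallest nonzero eigenvalue of $\Exp{\hat\Sigma^{\mathcal{A},\mathcal{Y}}}$ in absolute value. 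Taking expectations and applying Jensen's inequality reduces the whole theorem to bounding $\Exp{\|E\|_{\mathrm{op}}}$ and $\delta_K$ separately. The factor $\min\{1/(n_{\min}^2\lambda^\ast),1\}$ in \eqref{eq:theorem-eigenvec-distance} will emerge from combining this eigengap bound with the trivial deterministic estimate $\|\hat V-VO\|_F\lesssim\sqrt{K}$, which takes over when the gap is too small for Davis--Kahan to help.

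For the eigengap I would use the exact block form $\Exp{\hat\Sigma^{\mathcal{A},\mathcal{Y}}} = ZFZ^T - \mathrm{diag}(ZFZ^T)$ supplied by Proposition~\ref{proposition:expected-sigma-AY}. The nonzero eigenvalues of $ZFZ^T$ coincide with those of $(Z^TZ)^{1/2}F(Z^TZ)^{1/2}$, and since $Z^TZ=\mathrm{diag}(n_1,\dots,n_K)$, an Ostrowski/Weyl comparison yields $\delta_K \gtrsim n_{\min}\,|\lambda_{\min}(F)| = n_{\min}^3\lambda^\ast$, invoking Condition~\ref{condition:bounded-covariance} and the definition $\lambda^\ast = |\lambda_{\min}(n_{\min}^{-2}F)|$. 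The diagonal correction $\mathrm{diag}(ZFZ^T)$ has entries of order $\max_{j,k}|F_{jk}|$, which is of lower order than $n_{\min}^3\lambda^\ast$ under the stated size assumptions, so by Weyl's inequality it perturbs the eigenvalues negligibly and can be absorbed. Writing $1/(n_{\min}^3\lambda^\ast)=n_{\min}^{-1}\cdot(n_{\min}^2\lambda^\ast)^{-1}$ then produces exactly the $n_{\min}^{-1}$ prefactor and the $(n_{\min}^2\lambda^\ast)^{-1}$ eigengap factor appearing in the claimed bound.

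The bulk of the work is bounding $\Exp{\|E\|_{\mathrm{op}}}$. Writing $Y_m = \langle A^{(m)},B\rangle + \sigma\epsilon_m$ with $B=ZCZ^T$, I would decompose each entry of $\hat\Sigma^{\mathcal{A},\mathcal{Y}}$ and split $E$ into three independent-across-$m$ averages: a noise term $\tfrac{1}{N}\sum_m \sigma\epsilon_m A^{(m)}$, a self/variance term collecting products $A^{(m)}_{uv}A^{(m)}_{uv}B_{uv}$, and a cross/signal term collecting $A^{(m)}_{uv}A^{(m)}_{st}B_{st}$ for $(s,t)\neq(u,v)$, all with the diagonal removed. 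Each piece is an average of $N$ independent mean-zero random matrices, so I would control its spectral norm by a matrix Bernstein inequality, after truncation to handle the sub-exponential tails that arise from products of sub-Gaussian edge weights. The variance proxies are computed from the covariance formula \eqref{eq:covariance-communities}: the noise piece produces the $\sigma$ summand, the self piece (driven by the edge variances $\Psi$) produces the $n_{\max}\psi^2$ summand, and the cross piece — where correlations between edges in different cells enter through the $R^{(m)}$ matrices and the cell-variance parameter $\pi^\ast$ — produces the $K^2 n_{\max}^2\pi^\ast\psi$ summand. Each bound carries an overall $\sqrt{n}/\sqrt{N}$ factor, the $\sqrt{n}$ being the usual operator-norm scaling of an $n\times n$ random matrix and the $\sqrt{N}$ from sample averaging; together with the Davis--Kahan $\sqrt{K}$ this yields the $\sqrt{Kn}/\sqrt{N}$ prefactor.

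I expect the cross term to be the crux. Unlike the other two it is genuinely quadratic in the edge weights, it correlates entries of $E$ lying in different cells, and it requires carefully tracking the off-diagonal covariance structure of $\Sigma^{\mathcal{A}}$ together with tail bounds for sums of products of sub-Gaussian variables. Controlling its spectral norm (rather than settling for a cruder Frobenius estimate) without paying spurious factors of $n$ or $K$ is the delicate step, and it is precisely what forces the appearance of $K^2$, $n_{\max}^2$, and $\pi^\ast$ in the numerator of \eqref{eq:theorem-eigenvec-distance}.
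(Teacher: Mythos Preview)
Your high-level architecture matches the paper's exactly: Davis--Kahan in the Yu--Wang--Samworth form, the eigengap extracted from the block representation $ZFZ^T-\mathrm{diag}(ZFZ^T)$ via Weyl, and the perturbation $\|\hat\Sigma^{\mathcal{A},\mathcal{Y}}-\Exp{\hat\Sigma^{\mathcal{A},\mathcal{Y}}}\|$ bounded separately (the paper isolates this as Proposition~\ref{prop:spectral-error}).

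One point needs correcting. Your explanation of the factor $\min\{1/(n_{\min}^2\lambda^\ast),1\}$ is not right: the trivial bound $\|\hat V-VO\|_F\lesssim\sqrt K$ is just $\sqrt K$, not $\sqrt{Kn}(\cdots)/(n_{\min}\sqrt N)$, so it cannot produce the ``$1$'' branch of the min. The two branches both come from Davis--Kahan, using two different eigengap lower bounds: Condition~\ref{condition:bounded-covariance} alone gives the crude $\delta_K\gtrsim n_{\min}\kappa\asymp n_{\min}$ (yielding the ``$1$'' branch), while the sharper $\delta_K\gtrsim n_{\min}^3\lambda^\ast$ gives the other.

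For the perturbation, the paper takes a somewhat different route than your noise/self/cross split. It first writes $A^{(m)}_{uv}=R^{(m)}_{z_uz_v}+\tau^{(m)}_{uv}$ and expands the product into five centered pieces according to whether each factor is the deterministic $R$-part or the random $\tau$-part (plus the two $\epsilon$ pieces). This separates sub-Gaussian terms (linear in $\tau$ or $\epsilon$) from sub-exponential ones (quadratic), and each piece is then handled by the Bandeira bounds for symmetric random matrices with independent entries, rather than matrix Bernstein with truncation. Your three-way split should also work, but the attribution of summands is slightly off: the $n_{\max}\psi^2$ contribution does not come from the ``self'' term $(A^{(m)}_{uv})^2B_{uv}$ but from the full $\tau\times\tau$ products (the paper's $V^{(m)}$), where the $n_{\max}$ reflects summing over an entire community's worth of edges in $\langle C,S^{(m)}\rangle$. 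The paper's $R/\tau$ expansion makes this bookkeeping cleaner and avoids the truncation step.
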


The right hand side of Equation~\eqref{eq:theorem-eigenvec-distance} bounds the distance between the subspaces of the eigenvectors as the minimum of two different terms. The first one corresponds to the setting  $|\lambda_{\min}(F)|=\Omega(n^2_{\min})$, which happens, for example, when Equation~\eqref{eq:covarianceR-bound} holds. The expected subspace estimation error achieves a faster rate of convergence in this regime. On the other hand, when $\lambda^\ast$ goes to zero but Condition \ref{condition:bounded-covariance} still holds, we still get convergence, albeit at a slower rate.  

The final step in recovering communities involves clustering the rows of the matrix $\hat{V}$ into $K$ groups using $K$-means.  Theorem~\ref{thm:eigenvectors} leads to a bound on the clustering error of $K$-means \citep{rohe2011spectral} measured by the distance between the estimated membership matrix $\hat{Z}$ and $Z$, completing the proof of consistency of the spectral clustering method.

\begin{theorem} \label{theorem:communityerror}
Suppose that the conditions of Theorem \ref{thm:eigenvectors} hold,  and $\frac{n_{\max}}{n_{\min}}=O(1)$. 
Then, 
\begin{equation*}
    \Exp{\min_{O\in\mathcal{P}_K}\|\hat{Z} - ZO\|_F} \lesssim \min \left\{\frac{K^3(1+\sigma/n^2)}{\sqrt{N}\lambda^\ast}, \frac{K(n^2+\sigma)}{\sqrt{N}} \right\},
\end{equation*}
where $\mathcal{P}_K$ denotes the set of $K\times K$ permutation matrices. 
\end{theorem}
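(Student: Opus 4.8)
The plan is to combine the subspace perturbation bound of Theorem~\ref{thm:eigenvectors} with a standard $K$-means misclustering argument of the type used for spectral clustering of block models \citep{rohe2011spectral}. Everything reduces to establishing a single pointwise inequality of the form
\[
\min_{O\in\mathcal{P}_K}\|\hat{Z}-ZO\|_F\lesssim\sqrt{n_{\max}}\,\min_{O\in\mathcal{O}_K}\|\hat{V}-VO\|_F ,
\]
after which I take expectations and substitute the bound of Theorem~\ref{thm:eigenvectors} directly, reading off the two branches of the final minimum from the two branches of the minimum appearing there.

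First I would analyze the structure of the population eigenvectors $V$. By Proposition~\ref{proposition:expected-sigma-AY}, $\Exp{\hat\Sigma^{\mathcal{A},\mathcal{Y}}}=ZFZ^T-\mathrm{diag}(ZFZ^T)$, whose leading part $ZFZ^T$ is exactly block-constant: its eigenvectors $\tilde V$ satisfy $\tilde V=ZX$ for some $X\in\real^{K\times K}$, and since the columns of $\tilde V$ are orthonormal and $Z^TZ=\mathrm{diag}(n_1,\dots,n_K)$, the $K$ distinct rows of $\tilde V$ are separated by at least $(1/n_k+1/n_l)^{1/2}\ge\sqrt{2/n_{\max}}$. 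This is precisely the separation required for a constant-factor approximate $K$-means solution to miscluster few nodes. The zeroed diagonal must then be treated as a perturbation: its operator norm is $\max_k|F_{kk}|$, which I would show is small relative to the spectral gap of $ZFZ^T$ — by Ostrowski's theorem the nonzero eigenvalues of $ZFZ^T$ are of order at least $n_{\min}|\lambda_{\min}(F)|=n_{\min}^3\lambda^\ast$ — so Davis--Kahan controls $\min_{O}\|V-\tilde VO\|_F$ and shows that $V$ inherits essentially the same row separation up to a lower-order term.

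Next I would invoke the misclustering lemma: when the approximate $K$-means solution is applied to the rows of $\hat V$ and the population centroids are separated by $\delta\gtrsim n_{\max}^{-1/2}$, the number of misclassified nodes is at most of order $\|\hat V-VO\|_F^2/\delta^2\lesssim n_{\max}\|\hat V-VO\|_F^2$, and since $\|\hat Z-ZO\|_F^2$ equals twice the number of misclassified nodes for the optimal permutation, this yields the displayed reduction on the event where the eigenvector error falls below the lemma's threshold. On the complementary event the right-hand side already exceeds $\sqrt{2n}\ge\|\hat Z-ZO\|_F$, so the reduction in fact holds pointwise. Taking expectations and applying Theorem~\ref{thm:eigenvectors}, I would simplify using $n_{\max}/n_{\min}=O(1)$ (so that $n_{\min}\asymp n_{\max}\asymp n/K$) together with $\pi^\ast,\psi\le 1$, which collapses the numerator factor to order $n^2+\sigma$; a short computation then shows that the factor $\sqrt{n_{\max}}$ and the branch $(n_{\min}^2\lambda^\ast)^{-1}$ produce $K^3(1+\sigma/n^2)/(\sqrt{N}\lambda^\ast)$, while the branch $1$ produces $K(n^2+\sigma)/\sqrt{N}$, recovering the stated powers of $K$.

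I expect the main obstacle to be the treatment of the zeroed diagonal in the second step: because $\Exp{\hat\Sigma^{\mathcal{A},\mathcal{Y}}}$ is only approximately block-constant, one must verify that the Davis--Kahan contribution $\min_{O}\|V-\tilde VO\|_F$ is genuinely lower order than the main clustering error across the whole range of $\lambda^\ast$, rather than merely in the fast-rate regime, and that the resulting centroid separation for $V$ does not degrade the $\sqrt{n_{\max}}$ factor. The remaining pieces — the misclustering bound and the algebraic simplification tracking the $K$-dependence — are standard but require careful bookkeeping to land exactly on the two-branch bound.
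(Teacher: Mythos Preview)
Your proposal takes the same route as the paper: establish a pointwise misclustering inequality of the form $\min_{O\in\mathcal{P}_K}\|\hat Z - ZO\|_F \lesssim \sqrt{n_{\max}}\,\min_{O\in\mathcal{O}_K}\|\hat V - VO\|_F$ via the argument of \cite{rohe2011spectral}, take expectations, substitute Theorem~\ref{thm:eigenvectors}, and simplify using $n_{\min}\asymp n_{\max}\asymp n/K$. Your algebra for the two branches of the minimum is correct.

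However, the step you flag as the main obstacle---treating the zeroed diagonal as a perturbation and controlling $\min_O\|V-\tilde V O\|_F$ via Davis--Kahan---is unnecessary, and the paper does not do it. The leading $K$ eigenvectors of $M:=ZFZ^T-\mathrm{diag}(ZFZ^T)$ are \emph{exactly} block-constant, not merely approximately. The reason is that both the column space of $Z$ and its orthogonal complement are $M$-invariant: for $v=Zw$ one computes $Mv=Z(F\Theta-D_F)w$ with $D_F=\mathrm{diag}(F_{11},\ldots,F_{KK})$, while for $v\perp\mathrm{col}(Z)$ one has $ZFZ^Tv=0$ and $\mathrm{diag}(ZFZ^T)v$ acts as multiplication by $F_{kk}$ on the component supported in community $k$, so the eigenvalues there are $-F_{kk}=O(1)$. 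Since the eigenvalues on $\mathrm{col}(Z)$ are of order at least $n_{\min}|\lambda_{\min}(F)|\ge n_{\min}\kappa$ by Condition~\ref{condition:bounded-covariance}, for $n$ large enough the $K$ leading eigenvectors lie entirely in $\mathrm{col}(Z)$, i.e.\ $V=Z\mu$ exactly, with the row separation $\sqrt{2/n_{\max}}$ following from $\mu^T\Theta\mu=I$. The paper simply asserts that $V$ has $K$ unique rows and proceeds directly to the misclustering bound; once you see this invariance, your intermediate Davis--Kahan step between $V$ and $\tilde V$ disappears and the argument collapses to the paper's.
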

%When the community sizes are all on the same order and the eigenvalues of $F$ are sufficiently large, the number of mis-clustered vertices goes to zero faster as the sample size $N$ increases %at a rate that only depends on the number of communities but not on the number of vertices \liza{Why is that?  Are you referring to just the first term?  It seems like overall the rates do depend on $n$.}, and hence the method can effectively handle the high-dimensional regime when $p\gg N$. The result also shows that as the number of vertices increases, the proportion of mis-clustered vertices goes to zero regardless of the sample size $N$.  \liza{I suggest replacing this paragraph with the following, if I understand correctly:     
The rates show, as one would expect, that we do better when the sample size $N$ grows, and worse as $K$ grows (with everything else fixed), since the number of coefficients to estimate is proportional to $K^2$.  On the other hand, when the eigenvalues of $F$ are sufficiently large,
if $K$ and $N$ are fixed, the rate improves as the number of vertices $n$ grows, since that gives us more edges per network cell to estimate each regression coefficient.  

%%%%%%%%%%%%%%%%%%%%%%%%%%%%%%%%%%%%%%%%%%%%%%%%%%%%%%%%%%%%%%%%%%%%%%%%%%%%%%%%%%%%%%%%%%%%%%%%%%%%%%%%%%%%%%%%%%%%%%%%%%%%%%%%%%%%%%%%%%%%%%%%%%%%%%%%%%%%%%%%%%%%%%%%%%%%%%%%%%%%%%%%%%%%%%%%%%%%%%%%%%%%%%%%%%%%%%%%%%%%%%%%%%%%%%%%%%%%%%%%%%%%%%%%%%%%%%%%%%%################

{\bf Example.} % \liza{turn into proper example environment if we have more than one.}  \jesus{only one example with 3 cases}
To illustrate the theory, consider a simple example of networks with $K=2$ communities of equal size, and $R^{(m)}_{11} = R^{(m)}_{22} = p^{(m)}$ and $R^{(m)}_{21} = R^{(m)}_{12} = q^{(m)}$. Define the variables
\begin{align*}
	\xi_1:=& \frac{1}{N}\sum_{m=1}^N (R_{11}^{(m)})^2 = \frac{1}{N}\sum_{m=1}^N (R_{22}^{(m)})^2 = 
	\frac{1}{N}\sum_{m=1}^N\left(p^{(m)}\right)^2,\\
	\xi_2:=& \frac{1}{N}\sum_{m=1}^N (R_{12}^{(m)})^2 =  \frac{1}{N}\sum_{m=1}^N\left(q^{(m}\right)^2,\\
	\xi_3:=& \frac{1}{N}\sum_{m=1}^N R_{11}^{(m)}R^{(m)}_{12} = \frac{1}{N}\sum_{m=1}^N R_{22}^{(m)}R^{(m)}_{12} = \frac{1}{N}\sum_{m=1}^Np^{(m)}q^{(m)},
\end{align*}
%which characterize the values of $\Sigma^{\mathcal{A}}$. The edge variance not related to the community connectivities, encoded in $\Psi$,
It is straightforward to calculate that $\Psi_{11} = \Psi_{22} = 1 - \xi_1$, $\Psi_{12} = \Psi_{21} = 1-\xi_2$, and 
the entries of $F$, defined in Proposition~\ref{proposition:expected-sigma-AY}, are given by 
\begin{equation}
F_{jk} = \left\{\begin{array}{ll}
2\Psi_{11}C_{jj}+ \left[\xi_1\frac{n(n-2)}{4}(C_{11} + C_{22}) + \xi_3 \frac{n^2}{2}C_{12} \right] & \text{if }j=k,\\
2\Psi_{12}C_{12} +  + \left[\xi_3\frac{n(n-2)}{4}(C_{11} + C_{22}) + \xi_2 \frac{n^2}{2}C_{12} \right]& \text{if }j\neq k.
\end{array}
\right. \label{eq:matrix-F}
\end{equation}
Let $\lambda := |\lambda_{\min}(C)|$ be the smallest eigenvalue of the coefficients matrix $C$ and consider the following scenarios on the cell covariance.
\begin{enumerate}
	\item[a)] Networks are an i.i.d.\ sample, that is, $p^{(m)}=p$ and $q^{(m)} = q$ for all $m \in[N]$. The centering required by by Assumption~\ref{assumption:centered} enforces $p^{(m)}=q^{(m)}=0$ , which implies that $F = 2C$. In this scenario, all edges are uncorrelated across the sample, and Condition~\ref{condition:bounded-covariance} holds if $\lambda=\Omega(1)$, which according to Theorem~\ref{theorem:communityerror} implies that the expected error in the estimation of the memberships is
	\begin{equation}
		\Exp{\min_{O\in\mathcal{P}_2}\|\hat{Z} - ZO\|_F} \lesssim \frac{n^2+\sigma}{\sqrt{N}}. \label{eq:example1-rate}
	\end{equation}
	
	\item[b)]  The distributions vary across the sample, but $p^{(m)}= q^{(m)}$ for all $m$. In this scenario, there is no community structure in the networks, which causes edges across different cells to be correlated, % \liza{I am not sure what you mean by cell variables.  The edges within one cell?  Why are they collinear, when they have noise? }\jesus{the R variables are correlated}
	 with $\xi_1 = \xi_2 = \xi_3$, and hence $F=2(1-\xi_1)C + \xi'\textbf{1}\textbf{1}^T$, with $\xi' \asymp \xi_1n^2\|C\|_1$. Using Weyl's inequality, the smallest eigenvalue of $F$ is bounded below by 
	\begin{equation*}
		|\lambda_{\min}(F)| \gtrsim 2(1-\xi_1)\lambda - \xi_1n^2\|C\|_1.
	\end{equation*}
	Condition~\ref{condition:bounded-covariance} is satisfied if $\xi_1= O\left(\frac{\lambda}{\lambda + n^2\|C\|}\right)$, in which case the expected membership error bound is the same as in Equation~\eqref{eq:example1-rate}. In this scenario of strong correlation between edges on different cells, %\liza{could I instead say "no community structure"?},\jesus{we could say that, but the point is that correlation causes the method to fail, I changed it to edges on different cells}
	the proportion of the edge variance corresponding to the community effect $\xi_1$ needs to shrink as the number of vertices increases for the spectral clustering algorithm to guarantee recovery.  
	
	\item[c)] The distributions vary across the sample, and $p^{(m)} \neq q^{(m)}$ for all $m$.  In this scenario,  spectral clustering method can take advantage of the community structure if the cells are weakly correlated. Using Equation~\eqref{eq:matrix-F}, the eigenvalues of $F$ can be obtained exactly for a general $C$, but to avoid cumbersome calculations, we consider two special full-rank cases. Suppose first that $C = 2I$, so for a given network the response is the sum of edges connecting nodes within communities.   The smallest eigenvalue of $F$ is then given by
	\begin{equation*}
	\lambda_{\min}(F) = \frac{8\Psi_{11} + \xi_1n(n-1) - |\xi_3|n(n-1)}{2},
	\end{equation*}
	and hence, if $\xi_1\geq \gamma$ and $|\xi_3|\leq (1-\delta) \xi_1$ for some $\delta, \gamma\in(0,1]$ then $\lambda_{\min}(F)=\Omega(n^2)$, which results in
	\begin{equation}
	\Exp{\min_{O\in\mathcal{P}_2}\|\hat{Z} - ZO\|_F} \lesssim \frac{1+\sigma/n^2}{\sqrt{N}}, \label{eq:example3-rate}
	\end{equation}
	a better rate of recovering communities.   
	Next, consider $C = 2(\textbf{1}\textbf{1}^T-I)$,  so that the response is proportional to the sum of edges connecting nodes in different communities. Here, we can apply Proposition~\ref{prop:condition-covariance}, and as long as $|\xi_2|\geq \gamma$ and $|\xi_3| \leq (1-\delta)|\xi_2|/2$ for some constants $\delta, \gamma\in(0,1]$, we get the same improved rate as in Equation~\eqref{eq:example3-rate}.
\end{enumerate}

\section{Numerical results on simulated networks \label{sec:simulations-CD}}

In this section, we use simulated data to evaluate the performance of our method on both predicting a response and recovering the community structure.   
% We simulate networks generated from the  stochastic block model (SBM), and a response using a matrix of coefficients with community structure. 
We generate networks from the sGSBM  with $n=40$ nodes with $K=4$ equal size communities, resulting in  $p=780$ distinct edges.   Given a subject-specific expected connectivity matrix $R^{(m)}\in\Bbb{R}^{K\times K}$, each edge $(u,v)$ of the network $A^{(m)}$, with $u>v$, is generated independently from a Gaussian distribution, 
\begin{equation}
A^{(m)}_{uv}\sim  \mathcal{N}(R^{(m)}_{c_uc_v}, s^2),
\end{equation}
with $s=0.1$. %, and $\mathcal{N}(\mu, \sigma^2)$ denotes a univariate Gaussian distribution with mean $\mu$ and variance $\sigma^2$.
The connectivity matrix $R^{(m)}$ for subject $i$ is set to 
\begin{equation}
R^{(m)} = \left(\begin{array}{cccc}
0.3 + tU_m & 0.3 & 0.1 & 0.1\\
0.3  & 0.3+ tU_m & 0.1 & 0.1\\
0.1 & 0.1 & 0.3 + tU_m & 0.3\\
0.1 & 0.1 & 0.3 & 0.3 + tU_m \\
\end{array}\right),
\end{equation}
where $U_m$ is a random variable uniformly distributed on $(-0.5, 0.5)$ and $t\in\Bbb{R}$ is  a parameter. When $t=0$, the networks have only two communities, and as $t$ increases, the four communities become more distinguishable. Given a subject's network $A^{(m)}$, the response $Y_m$ is generated from the linear model, 
\begin{equation}
Y_m = \langle A^{(m)}, B \rangle + \epsilon_m,
\end{equation}
where $\epsilon_m\sim \mathcal{N}(0,\sigma^2)$ are i.i.d.\ noise. The matrix of coefficients $B$ shares the community structure of the networks and is defined as 
\begin{equation*}
B_{uv} = \left\{\begin{array}{cl}
1 & \text{if }c_u=c_v,\\
0 & \text{otherwise}.
\end{array} \right.
\end{equation*}
%Thus, the response $Y_i$ is proportional to the values of the connectivity within each community, which are in turn controlled by $U_i$. 

We fit the model on a training sample of $N$ networks (to be specified) by solving the optimization problem with the least squares loss function.    Since $K$ is small compared to the sample size, we set the penalty tuning parameter $\lambda=0$;  enforcing a block structure on the coefficients already provides a substantial amount of regularization, and in particular allows us to handle the case $p > n$  without additional penalties.  Tuning this regularization is achieved by choosing $K$ via 5-fold cross-validation. We evaluate out-of-sample prediction performance by the relative mean squared error (MSE),   computed as $(N\widehat{\text{Var}}(Y))^{-1}\sum_{m=1}^N(Y_m-\hat Y_m)^2$.
The parameter $K$ is chosen by 5-fold cross-validation.   Since the estimated $K$ may be different from $K=4$ used to generate the data, we measure community detection performance by  co-clustering error, which calculates the proportion of pairs of nodes that are incorrectly assigned to the same community.   For any two membership matrices $Z$ and $\tilde{Z}$, the co-clustering error is defined as
\begin{equation*}
E(Z, \tilde{Z}) = \frac{1}{n^2} \sum_{i=1}^n\sum_{j=1}^n |(ZZ^T - \tilde{Z}\tilde{Z}^T)_{ij}|.
\end{equation*}
%This quantity measures the proportion of pairs  of nodes that are not assigned to the same community in $Z$ and $\tilde{Z}$.

We initialize the solution with spectral clustering (Algorithm \ref{alg:SC-leastsquares}) and then run ADMM (Algorithm \ref{alg:ADMM-SCD}).   We compare to lasso and ridge regression \citep{friedman2009glmnet}, two generic regularized linear regression methods, as prediction benchmarks.    Since $p > n$, we do have to regularize lasso and ridge by tuning $\lambda$, whereas our method is only tuned by choosing $K$.   If anything, this will give an advantage to lasso and ridge, since further tuning $\lambda$ for our method could only improve its performance.  An oracle solution is also included as a gold standard,  obtained by solving \eqref{eq:blockConstrainedProblem-noZ} using the true communities.

Unsupervised community detection methods for multiple networks often start by combining by averaging the networks, but these methods generally are designed for networks with homogeneous structure, andwill fail in our setting since $\mathbb{E}[A^{(m)}]$ is a matrix with only two communities.  \cite{bhattacharyya2017spectral} proposed a community detection method for samples of networks that can work in our setting,  based on performing  spectral clustering on the sum of squared adjacency matrices $\tilde{A} = \sum_{m=1}^N(A^{(m)^2} - \text{diag}(A^{(m)^2}))$. We use this method as a benchmark for the community detection part of the problem.

The difficulty of the problem is controlled by multiple parameters, inlcuding the noise level $\sigma$, the strength of the community structure (controlled by $t$), and the sample size $N$.   In each experiment, we vary one of these parameters while keeping the other two constant.   The constant values are set to $\sigma=1$, $t=0.025$ and $N=150$. Each scenario is repeated 50 times.

\begin{figure}
	\includegraphics[width=\textwidth]{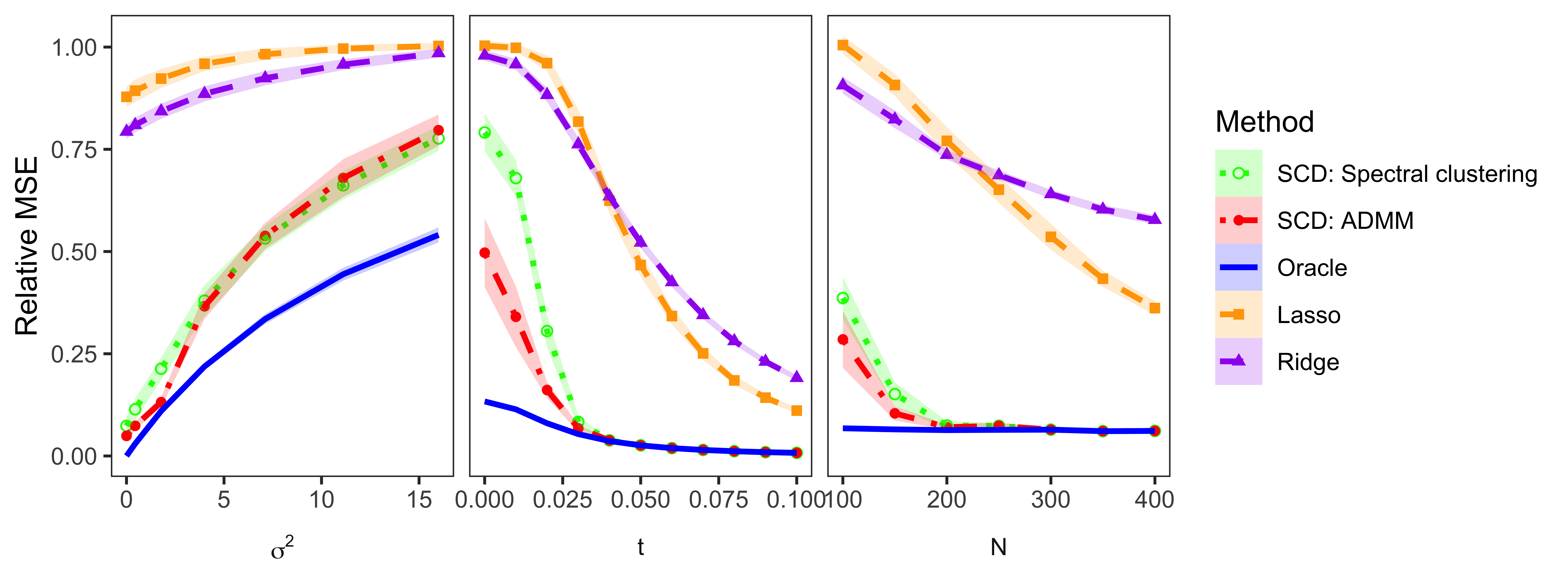}
	\includegraphics[width=\textwidth]{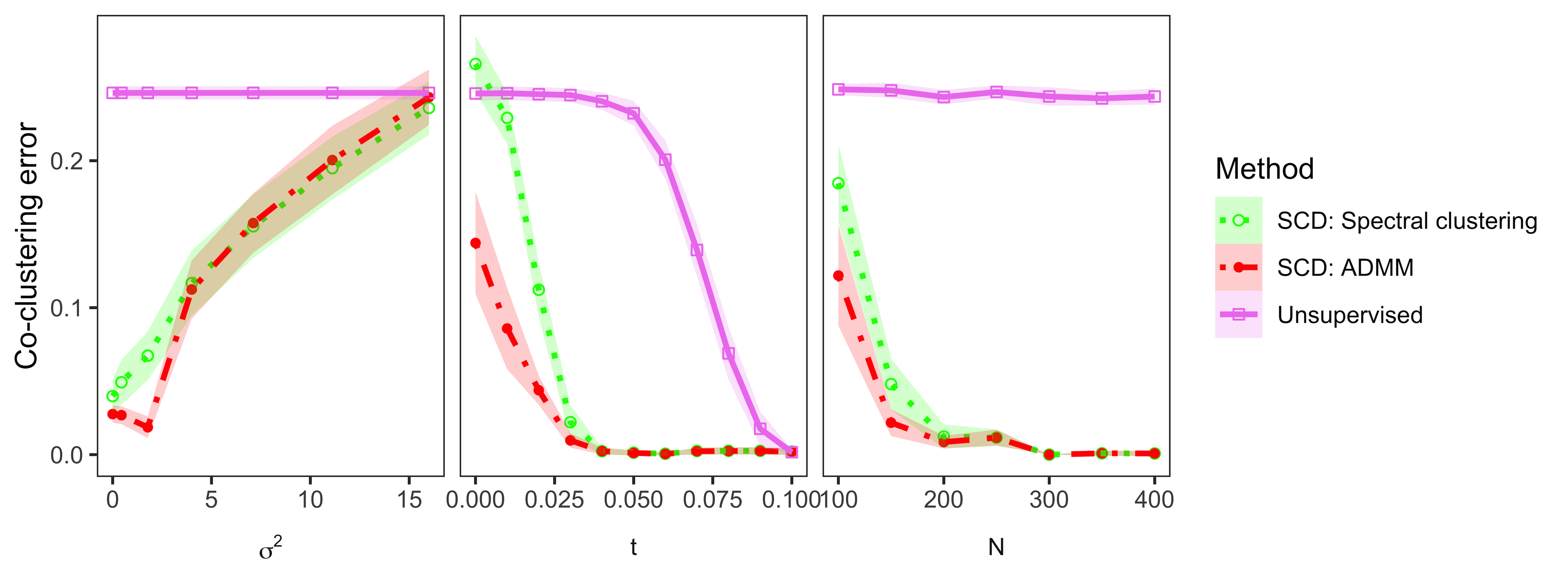}
	\caption{Results for prediction (top) and community detection (bottom). Each plot shows the corresponding error averaged over 50 replications as a function of one of the parameters $\sigma$, $t$, and $N$, keeping the other two fixed.}
	\label{fig:simulations-SCD}
\end{figure}

Figure \ref{fig:simulations-SCD} shows the average performance of the methods as a function of model parameters.  In general, all methods perform better when the noise level is lower,  the community structure  is stronger, and the sample size larger, as one would expect.   Our method outperforms both lasso and ridge, since it takes advantage of the underlying structure.   % Note that the value of $\langle A^{(m)}, B\rangle$ \liza{Should this be an expectation?} changes with $t$, which explains why the MSE of lasso and ridge may not be monotone as a function of $t$. \jesus{i realized this was for a previous version of the figure, now with the RMSE it doesnt show that}
ADMM usually improves on the initial value provided by SC, and when the signal is strong enough, both methods recover the community structure correctly, performing as well as the oracle.   Unsupervised community detection does not use the response values, and hence its performance does not depend on $\sigma$.    Community detection becomes easier as $t$ increases, but the unsupervised method requires a much larger $t$ than our supervised community detection algorithm to achieve the same performance.   The sample size has virtually no effect on unsupervised community detection, but supervised detection improves very quickly as the sample size grows.

%%%%%%%%%%%%%%%%%%%%%%%%%%%%%%%%%%%%%%%%%%%%%%%%%%%%%%%%%%%%%%%%%%%%%%%%%%%%%%%%%%%%%%%%%%%%%%%%%%%%%%%%%%%%%%%%%%%%%%%%%%%%%%%%%%%%%%%%%%%%%%%%%%%%%%%%%%%%%%%%%%%%%%%%%%%%%%%%%%%%%%%%%%%%%%%%%%%%%%%%%%%%%%%%%%%%%%%%%%%%%%%%%%%%%%%%%%%%%%%%%%%%%%%%%%%%%%%%%%%################
\section{Supervised community detection in fMRI brain networks \label{sec:data-CD}}

Here we apply the proposed method to classification of brain networks from healthy and schizophrenic subjects from the COBRE dataset, described below.  Understanding the connection between schizophrenia and brain connectivity is somewhat important from the clinical perspective, since schizophrenia is a famously heterogenous condition with many subtypes \citep{Arnedo2015},
%\liza{Add reference, there is a paper in some big journal like Nature on 8 clusters from genetic data}, 
and especially important for the scientific goal of understanding which regions of the brain are implicated in schizophrenia, potentially an important step towards developing new treatments. 

The COBRE dataset \citep{Aine2017,wood2014harnessing} includes  resting-state fMRI data for 54 schizophrenic patients and 70 healthy control subjects.  The data underwent  preprocessing and registration steps   (see \cite{arroyo2016graphclass} for details)    to generate a weighted correlation network for each person in the study. The Power parcellation \citep{power2011functional} was employed to define the regions of interest (ROIs), resulting in a total of $n=263$ nodes in the brain (see left panel of Figure \ref{fig:powervsnew}).

We train our method using logistic regression loss and the lasso penalty, needed since the sample size $N=124$ only allows us to fit up to $K=15$ communities without regularization. As is commonly done with logistic regression with a large number of predictors, we also add a small ridge penalty for numerical stability. The objective function  is given by 
\begin{equation*}
\ell(B) + \Omega_\lambda(B) = \frac{1}{m}\sum_{i=1}^m\log\left(1+\exp(-Y_i(\langle A^{(i)},B\rangle+b))\right)+\frac{\gamma}{2}\|B\|_F^2 \ +  \ \lambda\|B\|_1.
\end{equation*}
The parameter $\lambda$ in the lasso penalty controls sparsity of the solution and is  selected by cross-validation. The value of $\gamma$ is fixed at $10^{-5}$. The details on the optimization solver for this problem are described in the Appendix~\ref{sec:elastic-net}. An implementation of the code can be found at \url{https://github.com/jesusdaniel/glmblock/}.

Analyzing the brain connectome at the level of brain systems is usually preferrable for interpretation purposes, since nodes or edges are too granular for a meaningful interpretation. In the Power parcellation, the 263 nodes are partitioned into 14 groups which are related to known functional systems in the brain \citep{power2011functional}. The nodes labeled according to these communities are shown in Figure \ref{fig:powercomms}. We use these pre-determined communities as a baseline by solving the constrained problem \eqref{eq:blockConstrainedProblem-noZ}, and compare the performance of our supervised community detection method using the same number of $K=14$ communities.

Figure \ref{fig:powervsnew} shows the 263 nodes of the data colored according to the parcellation by \cite{power2011functional} (left) and the communities we found with our method with $K=14$ (right). In both cases, we chose $\lambda$ by cross-validation, and evaluated prediction accuracy using 10-fold nested cross-validation. The accuracy for each method is reported in Figure \ref{fig:powervsnew}. Our supervised method uses the same number of parameters  but has a noticeably higher accuracy.   For ease of interpretation, Figure \ref{fig:newcomms} shows a separate plot for each community.    These node clusters are mostly well concentrated in space, suggesting they may correspond to meaningful structural or functional regions.  In Figure \ref{fig:sankey}, a Sankey diagram shows a comparison of the new and the old community assignments. Many of the Power communities are partitioned into smaller communities with supervised community detection. Figure \ref{fig:Bnew} shows the fitted coefficients ordered according to the supervised communities. Note that the communities F, H and I are mostly composed by nodes from the default mode network (community 5 in Power parcellation), which has been previously linked to schizophrenia \citep{broyd2009default,Whitfield-Gabrieli2009}
but now we have found cells with both positive and negative coefficients within those communities.

\begin{figure}
	\centering
	\begin{minipage}{0.48\textwidth}
		\centering
		\includegraphics[width=\textwidth]{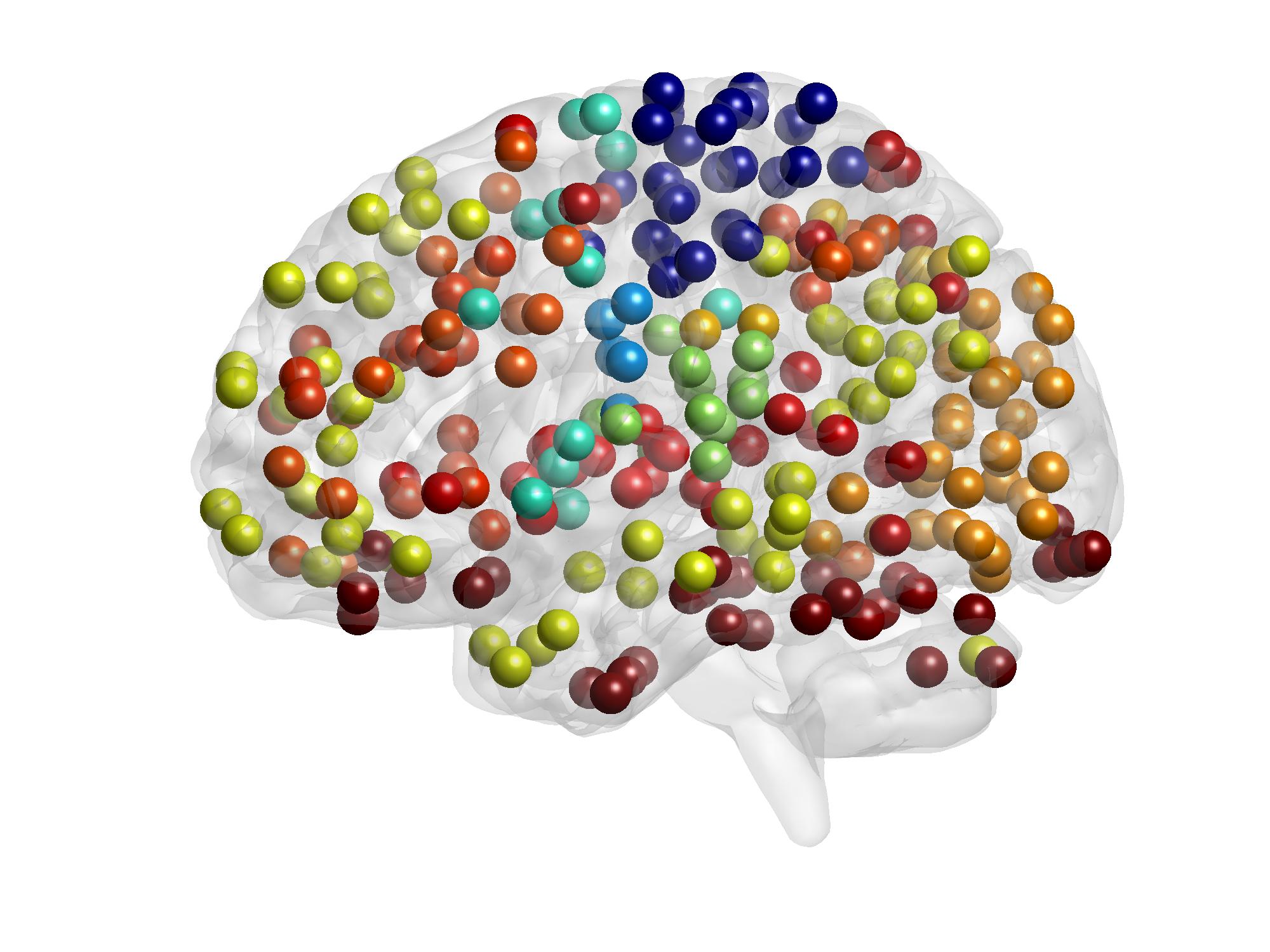}
		Power brain systems\\
		CV accuracy: 62\%
	\end{minipage}
	\begin{minipage}{0.48\textwidth}
		\centering
		\includegraphics[width=\textwidth]{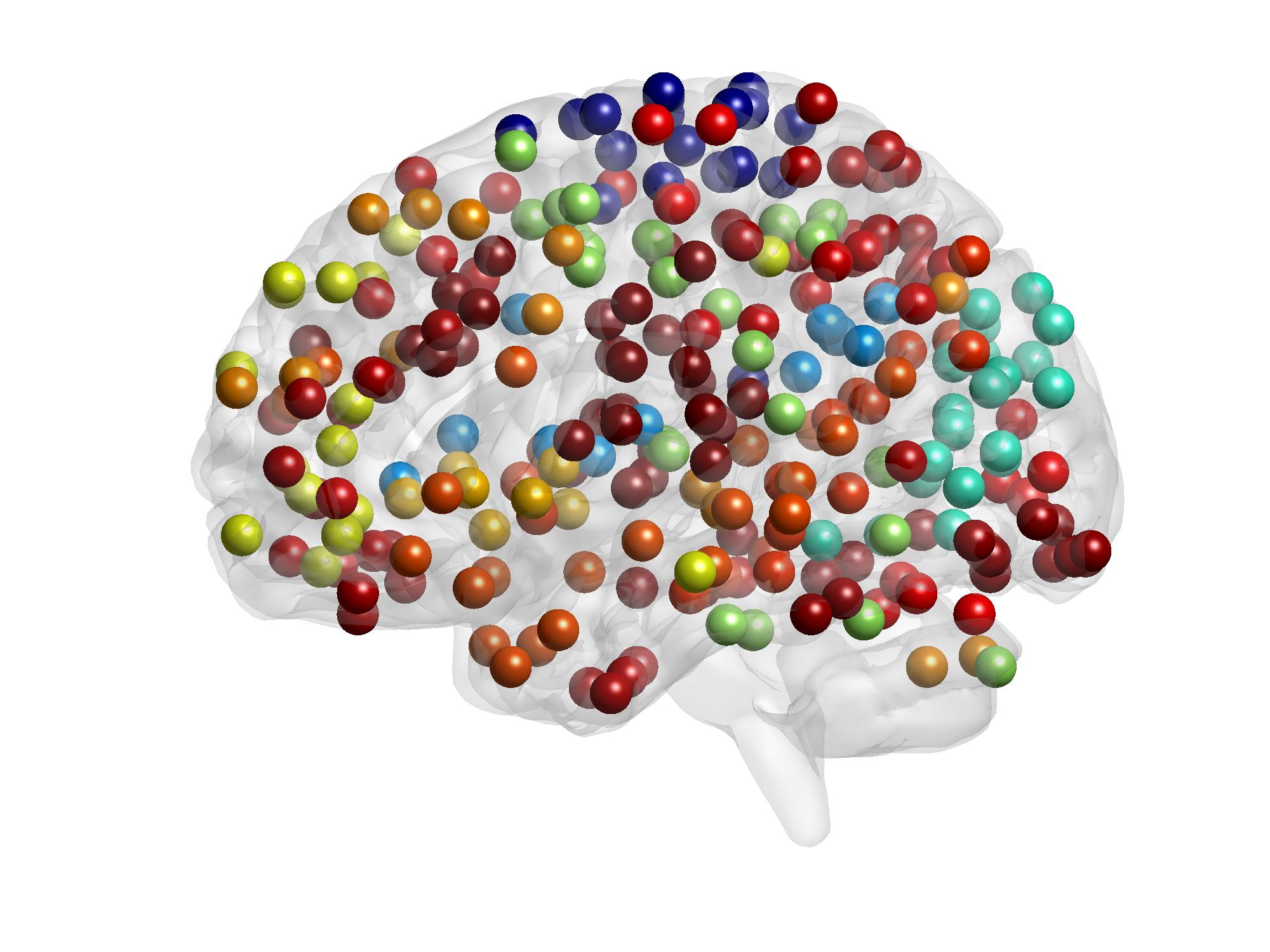}
		Supervised community detection\\
		CV accuracy: 79\%
	\end{minipage}
	\caption[Baseline communities (Power et al., 2011) and communities found by our supervised community detection method.]{Baseline communities \citep{power2011functional} and communities found by our supervised community detection method.}
	\label{fig:powervsnew}
\end{figure}

\begin{figure}
	\begin{tabular}{ccccc}
		\includegraphics[width=0.17\textwidth]{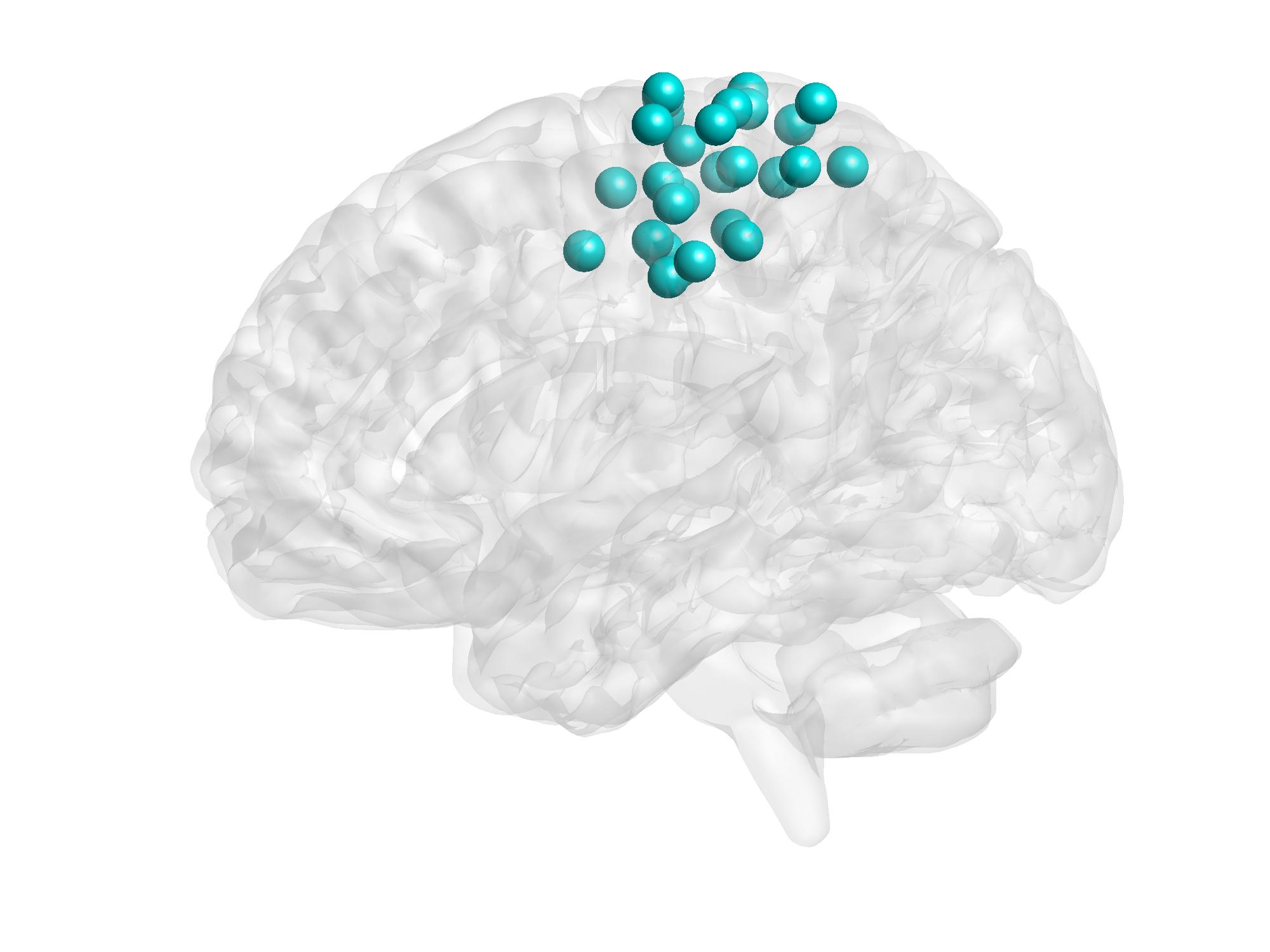} & \includegraphics[width=0.17\textwidth]{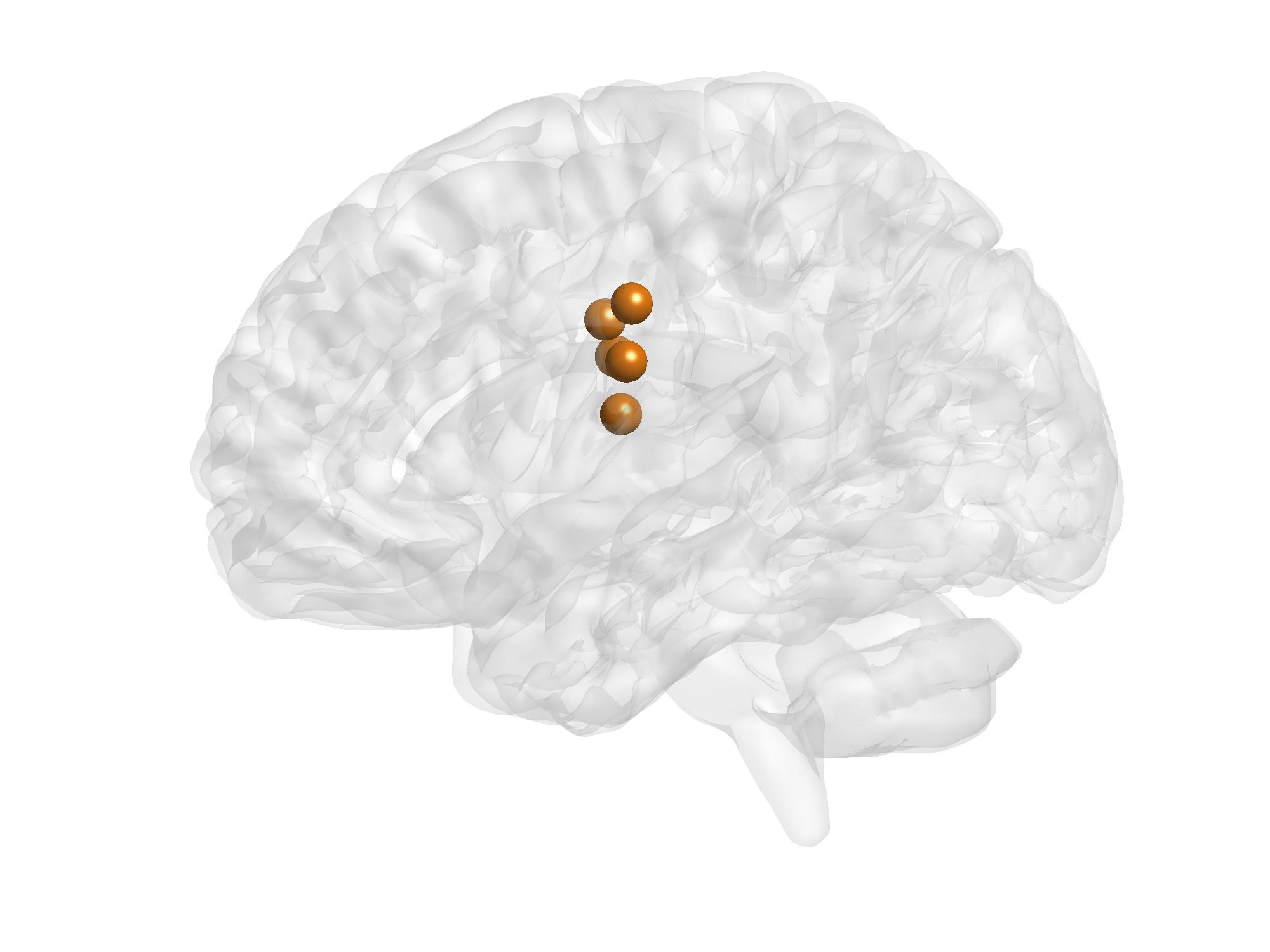} & \includegraphics[width=0.17\textwidth]{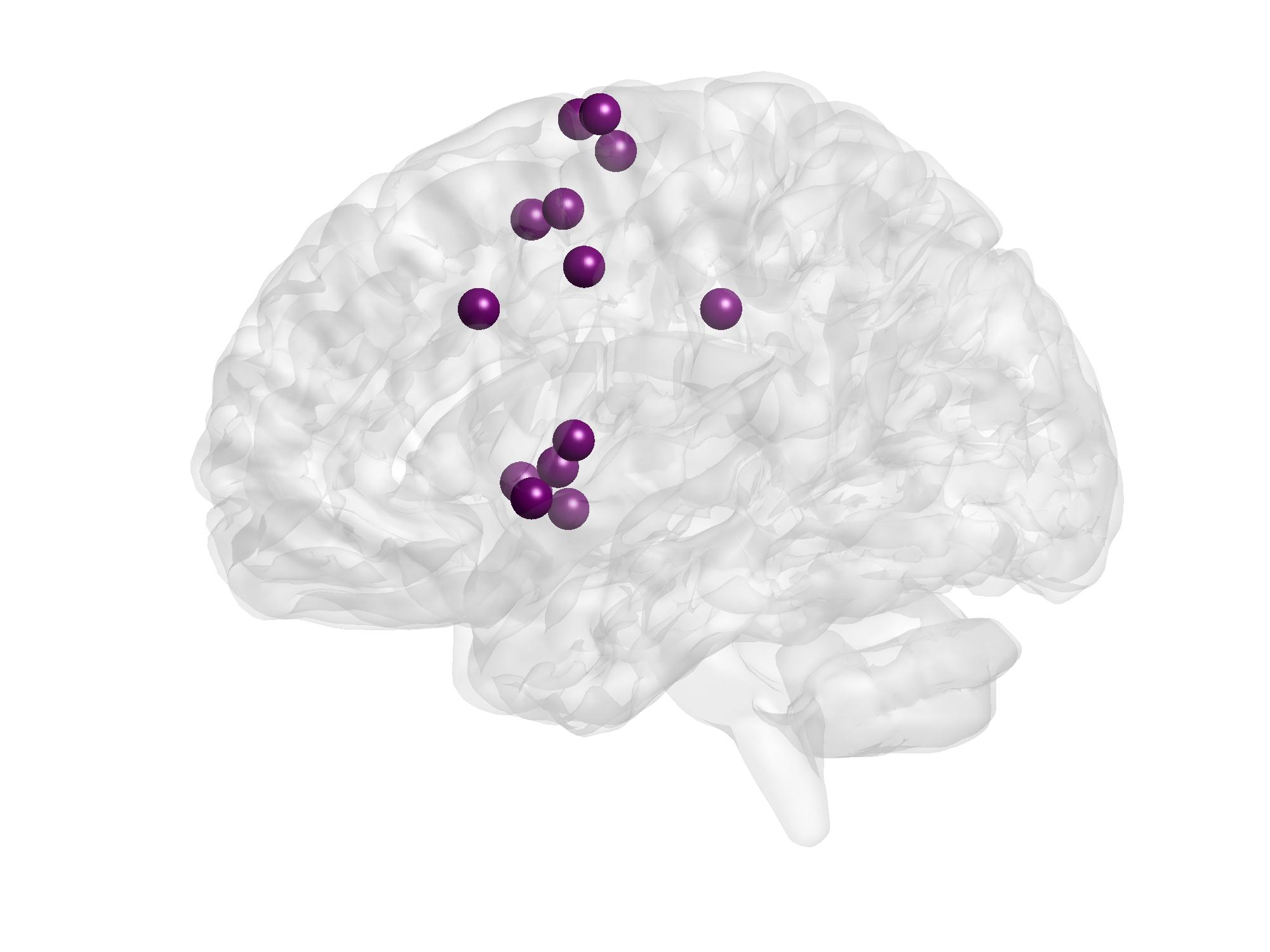} & \includegraphics[width=0.17\textwidth]{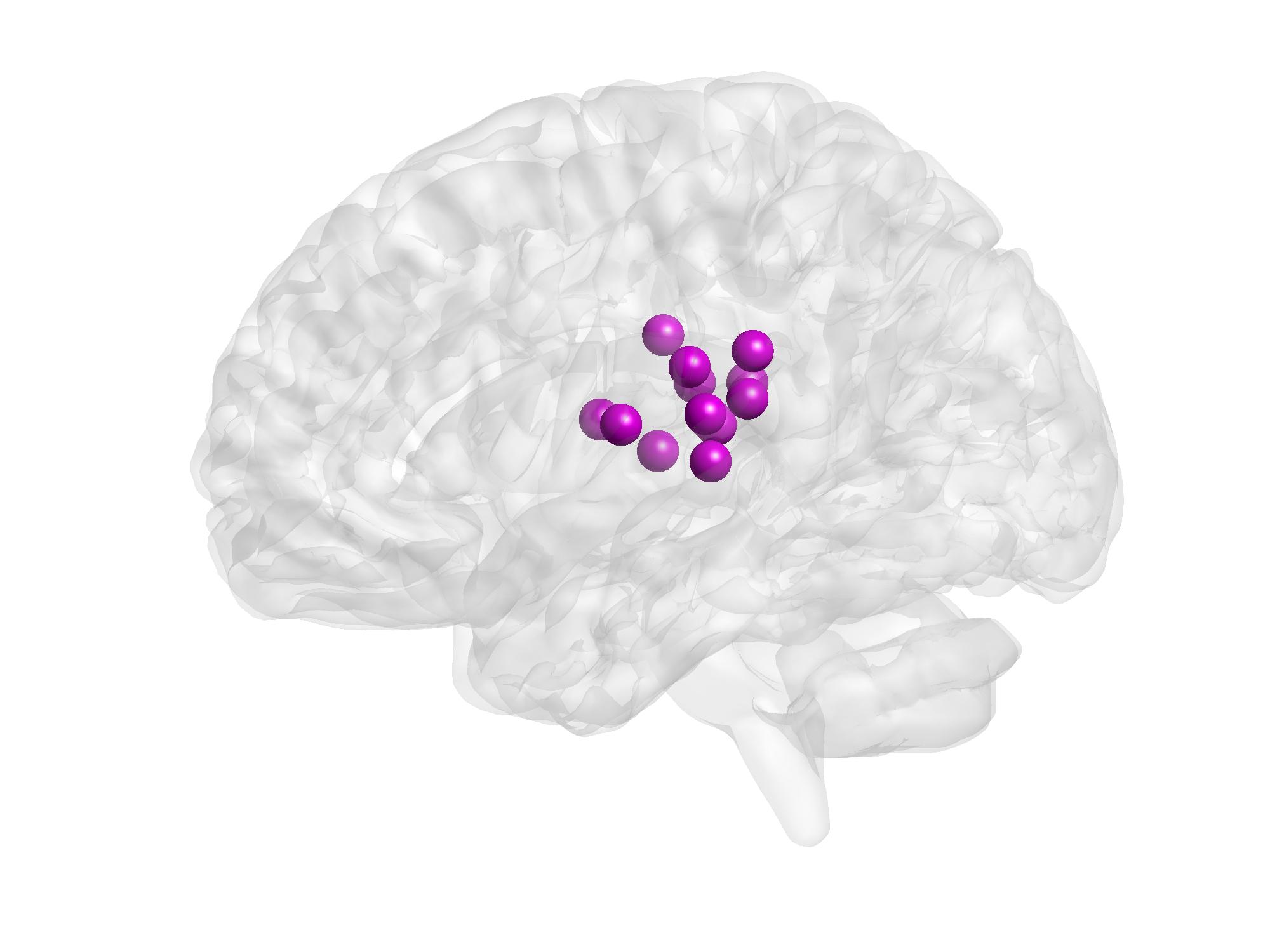} & \includegraphics[width=0.17\textwidth]{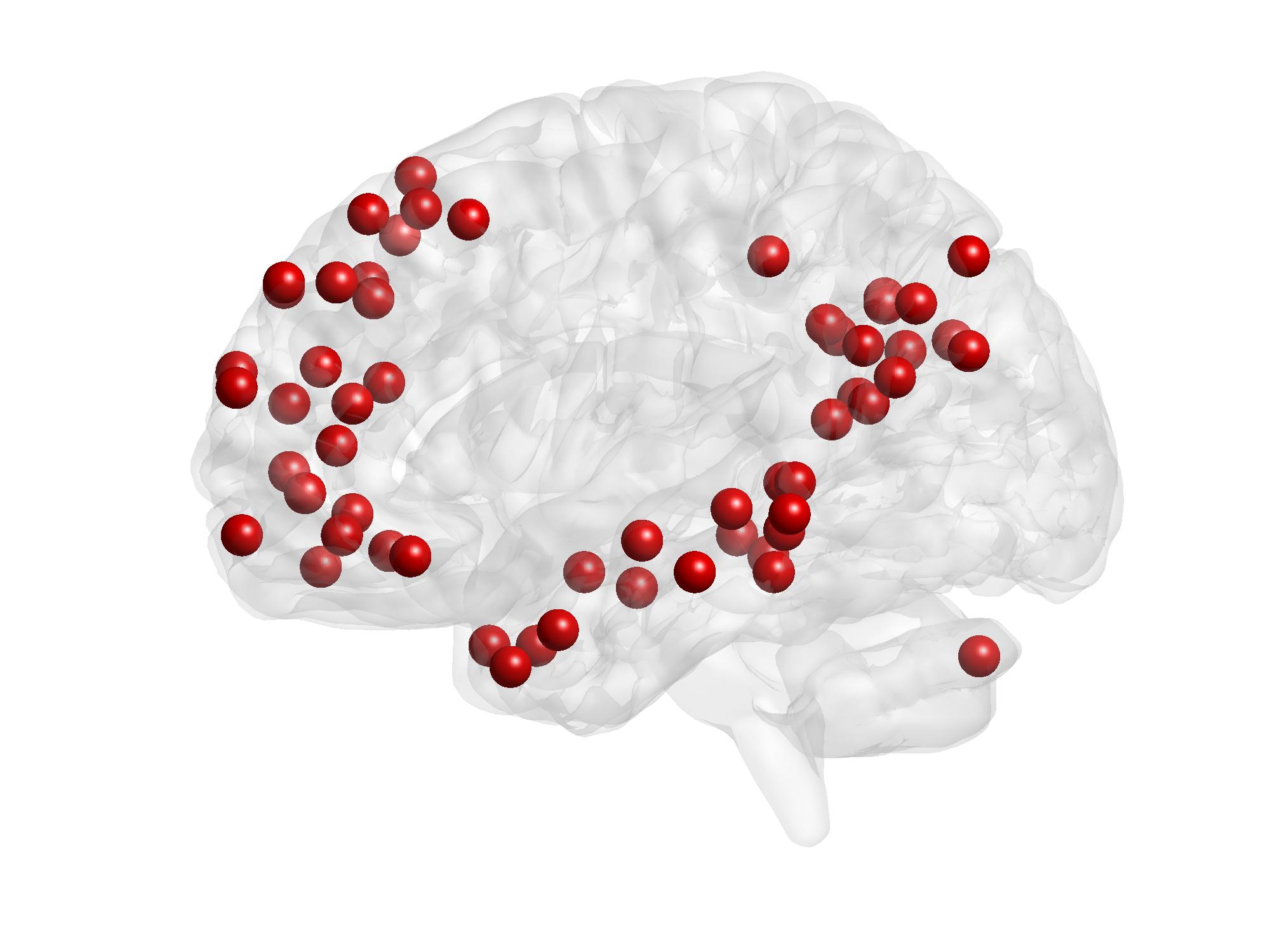} \\ 
		1 & 2 & 3 & 4 & 5\\ 
		\includegraphics[width=0.17\textwidth]{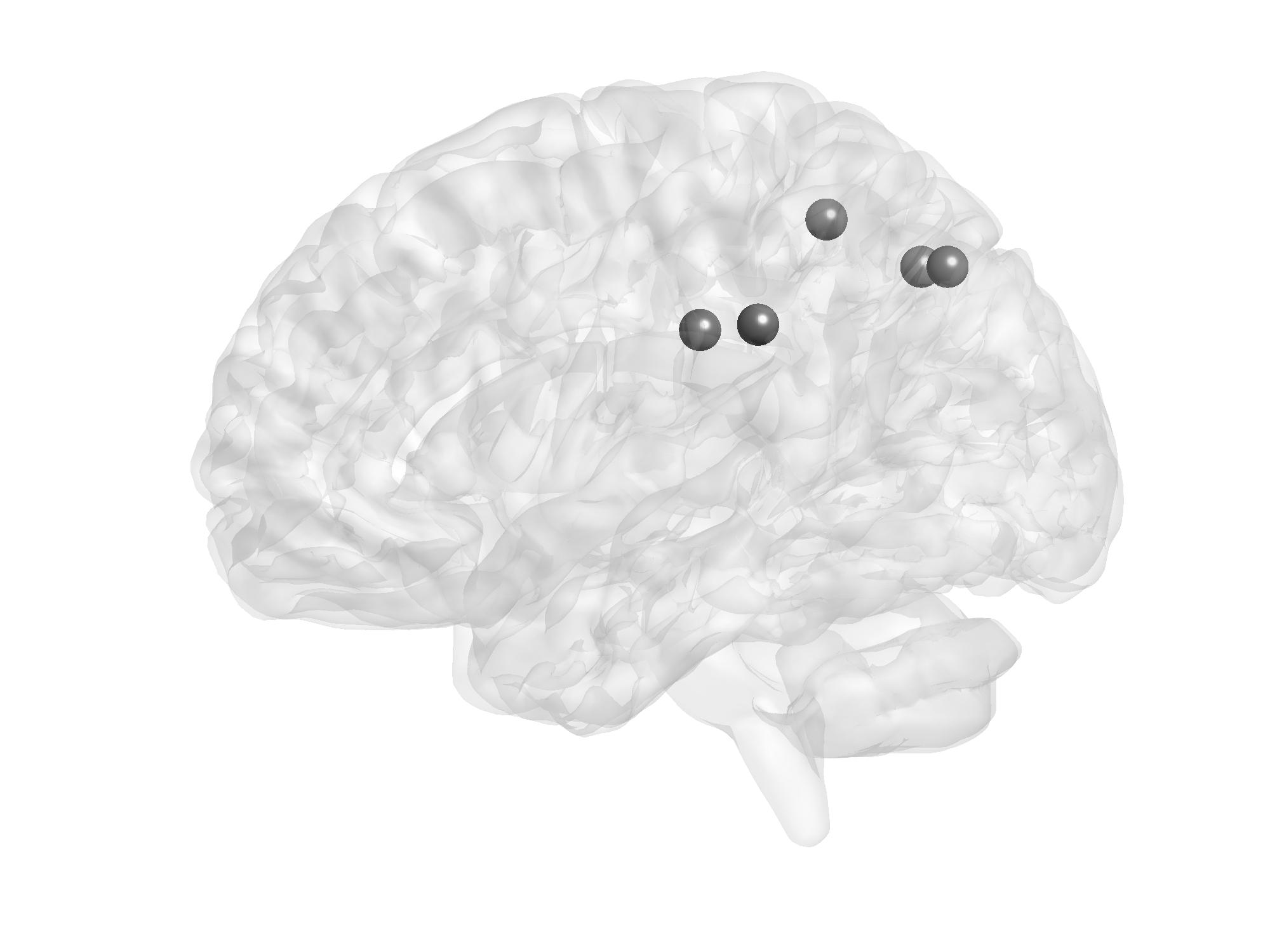}& \includegraphics[width=0.17\textwidth]{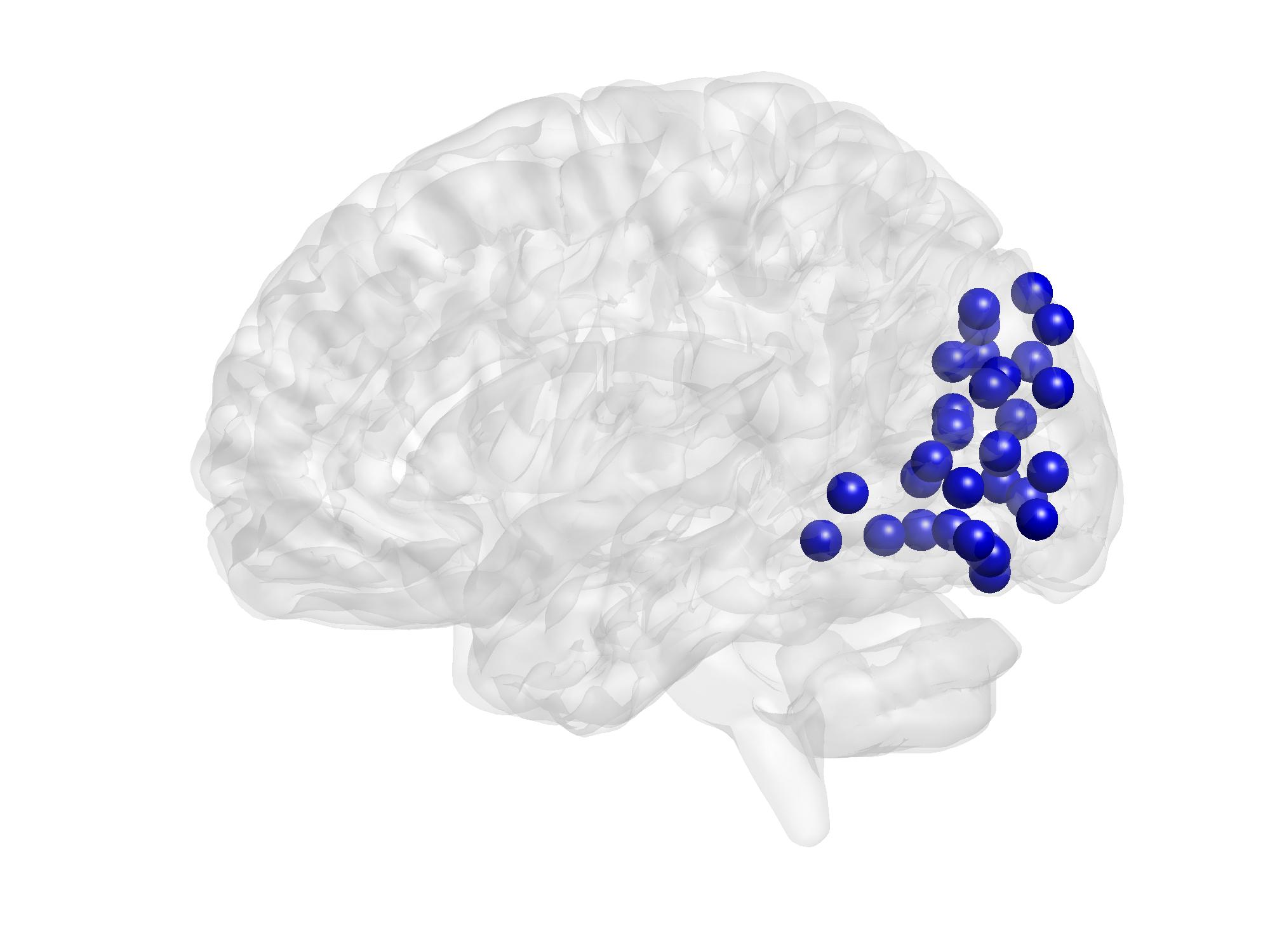} & \includegraphics[width=0.17\textwidth]{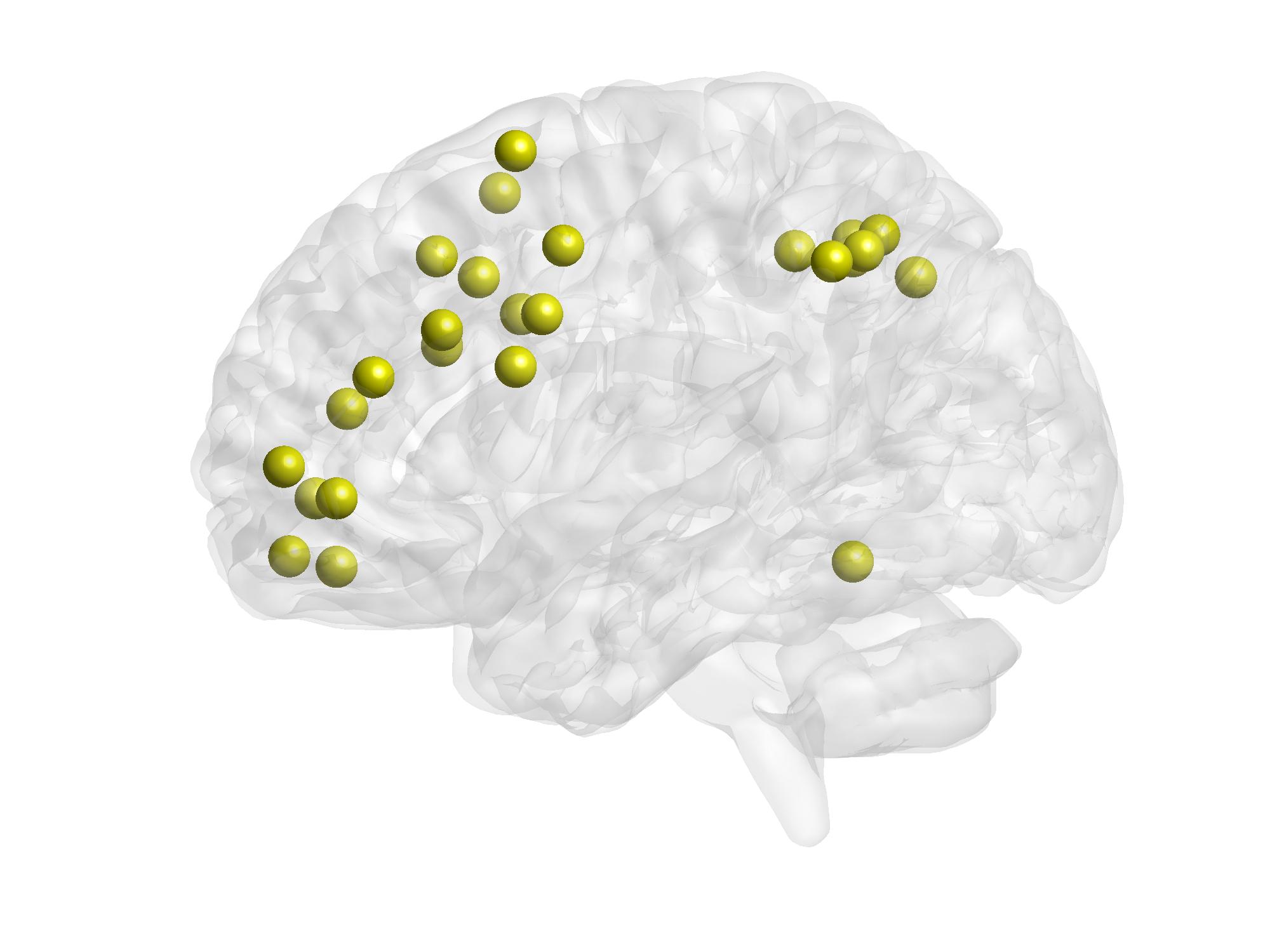} & \includegraphics[width=0.17\textwidth]{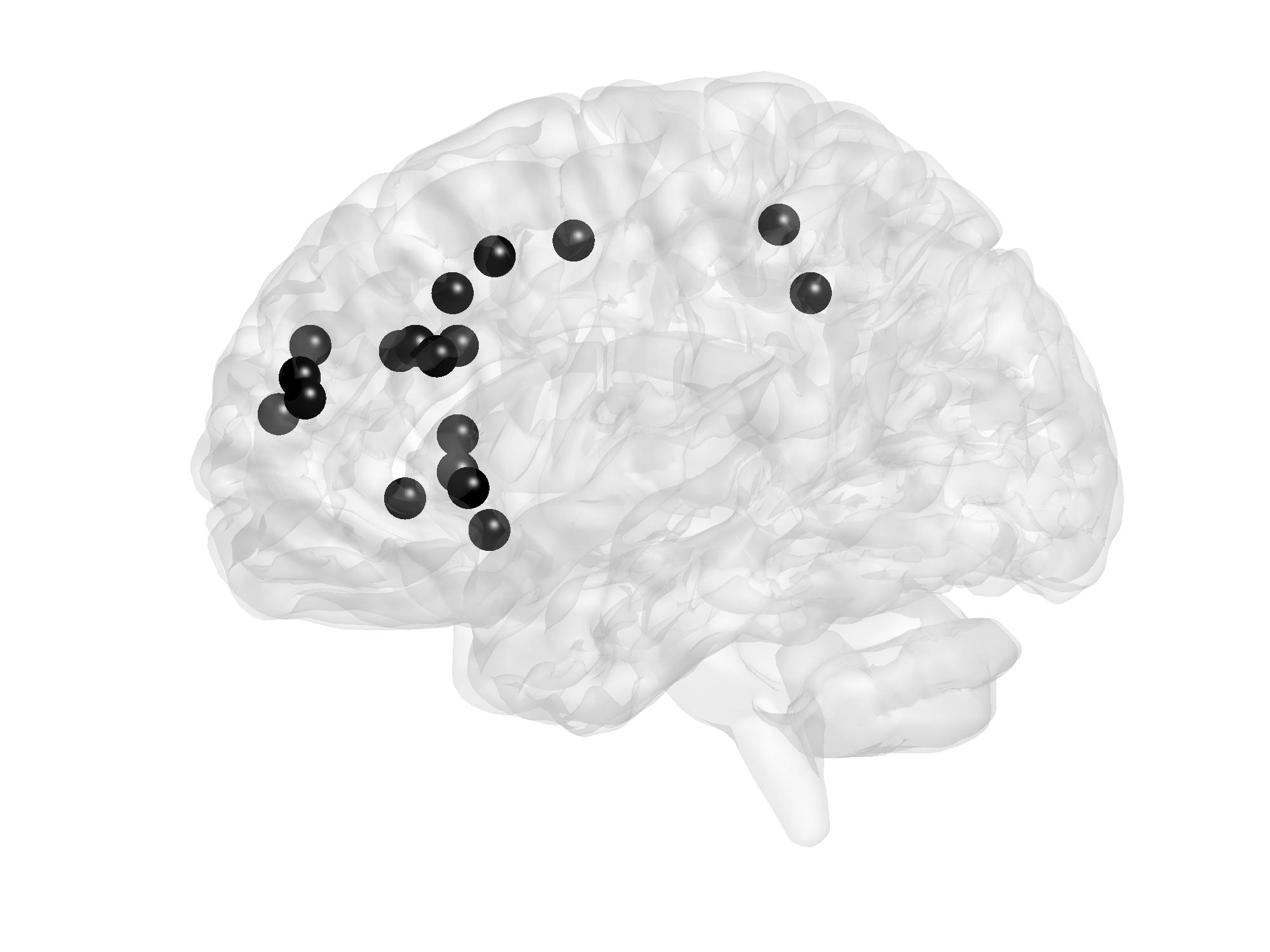} & \includegraphics[width=0.17\textwidth]{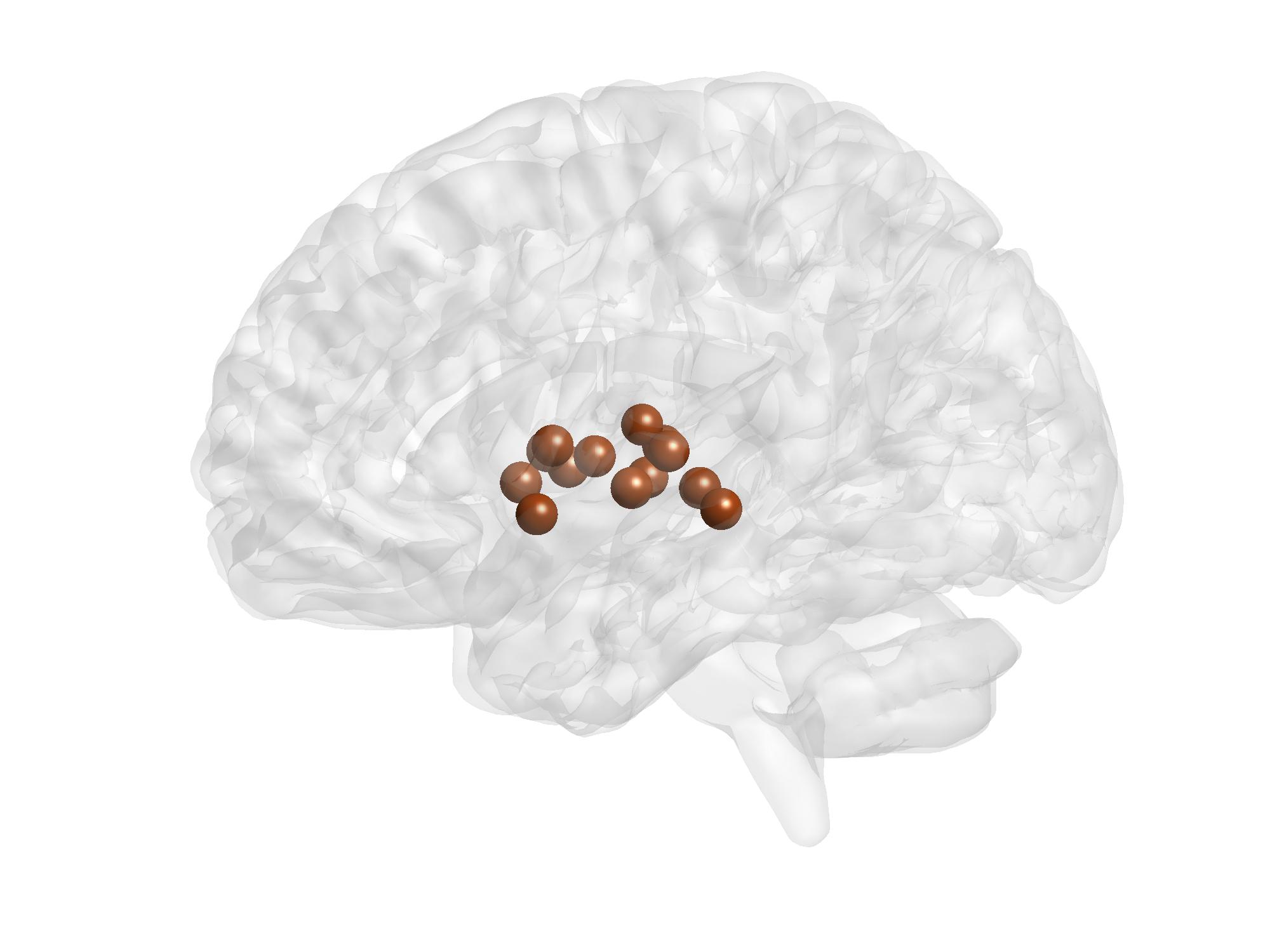} \\
		6 & 7 & 8 & 9 & 10\\ 
		\includegraphics[width=0.17\textwidth]{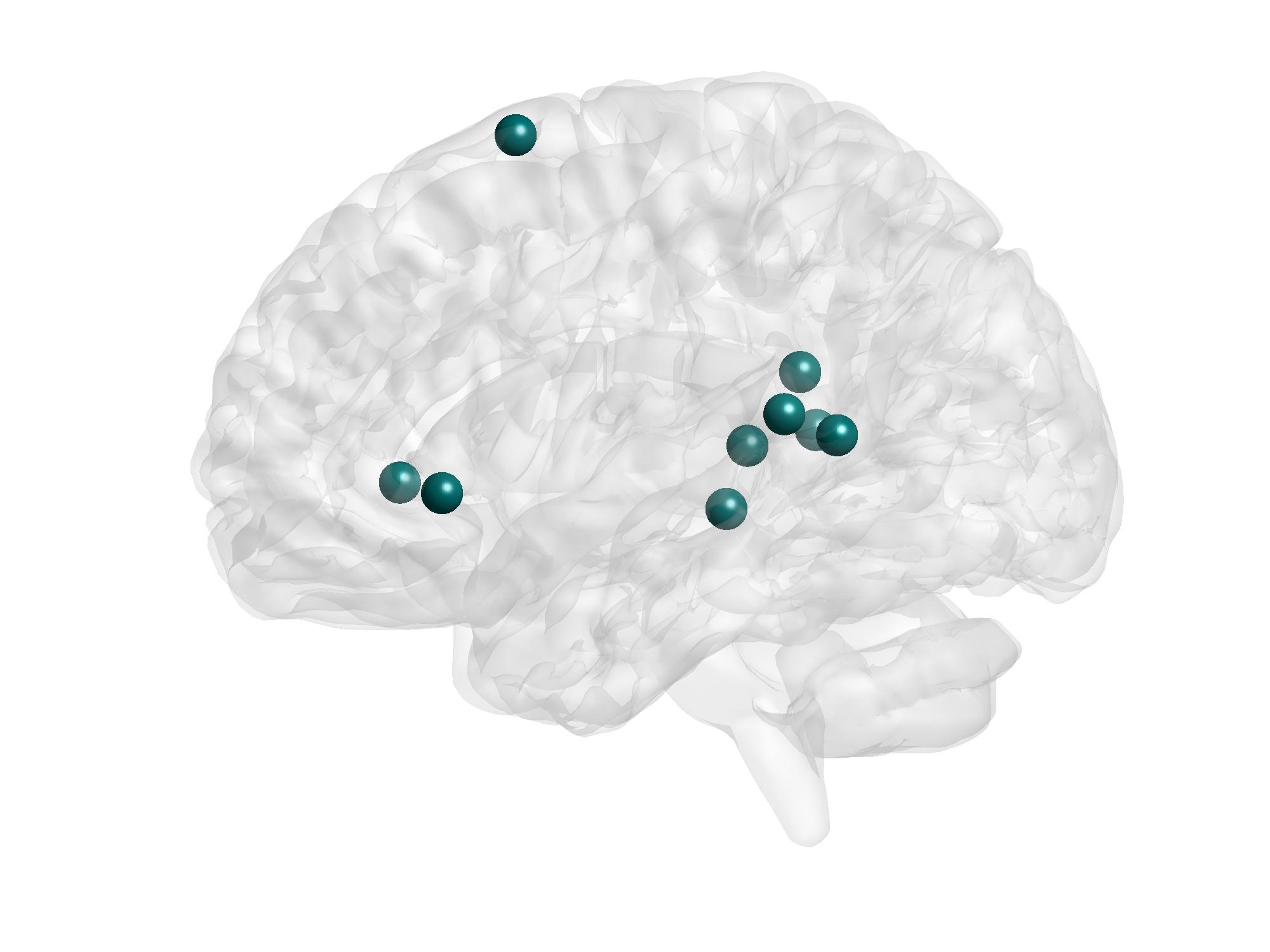}& \includegraphics[width=0.17\textwidth]{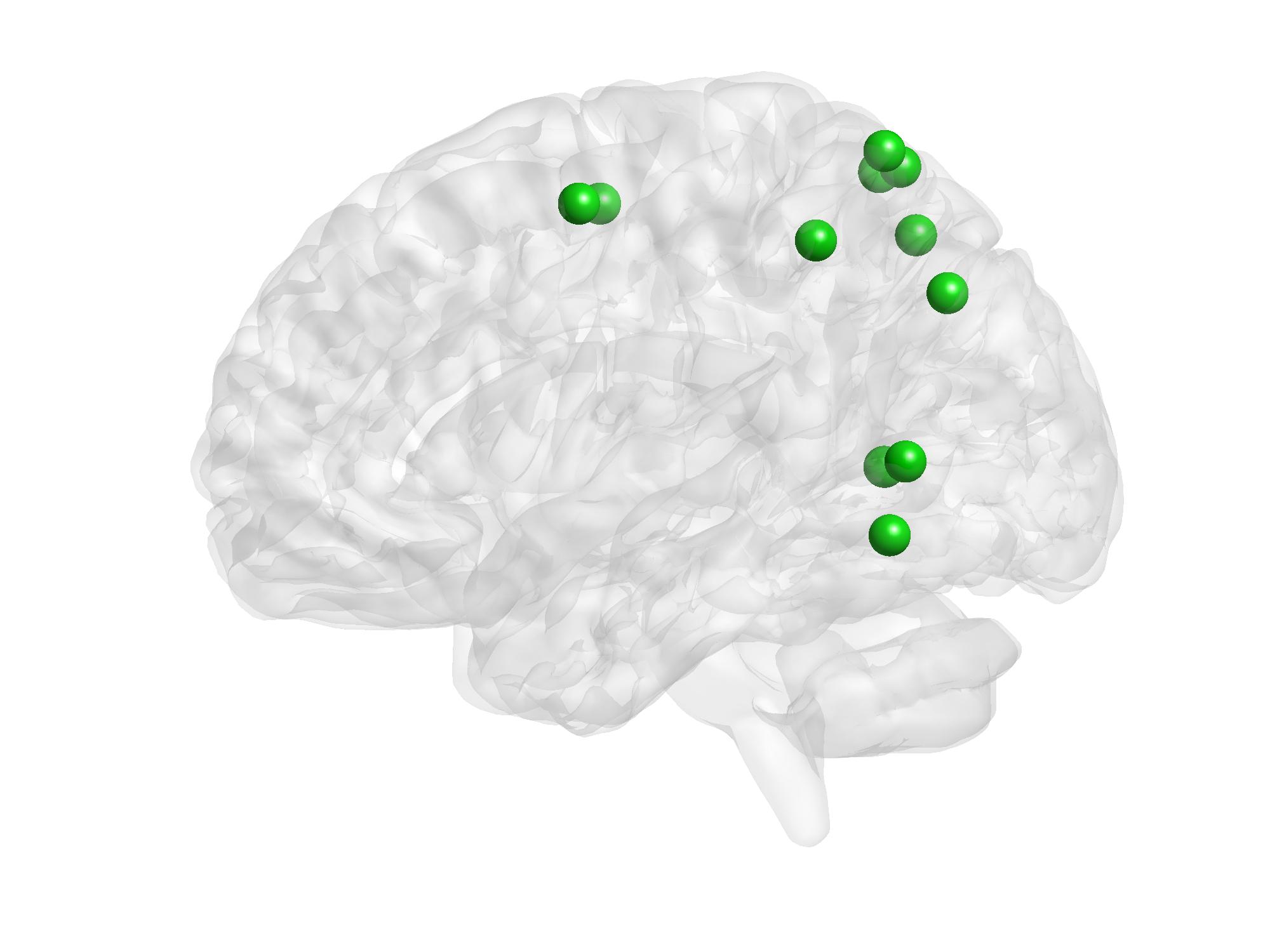} & \includegraphics[width=0.17\textwidth]{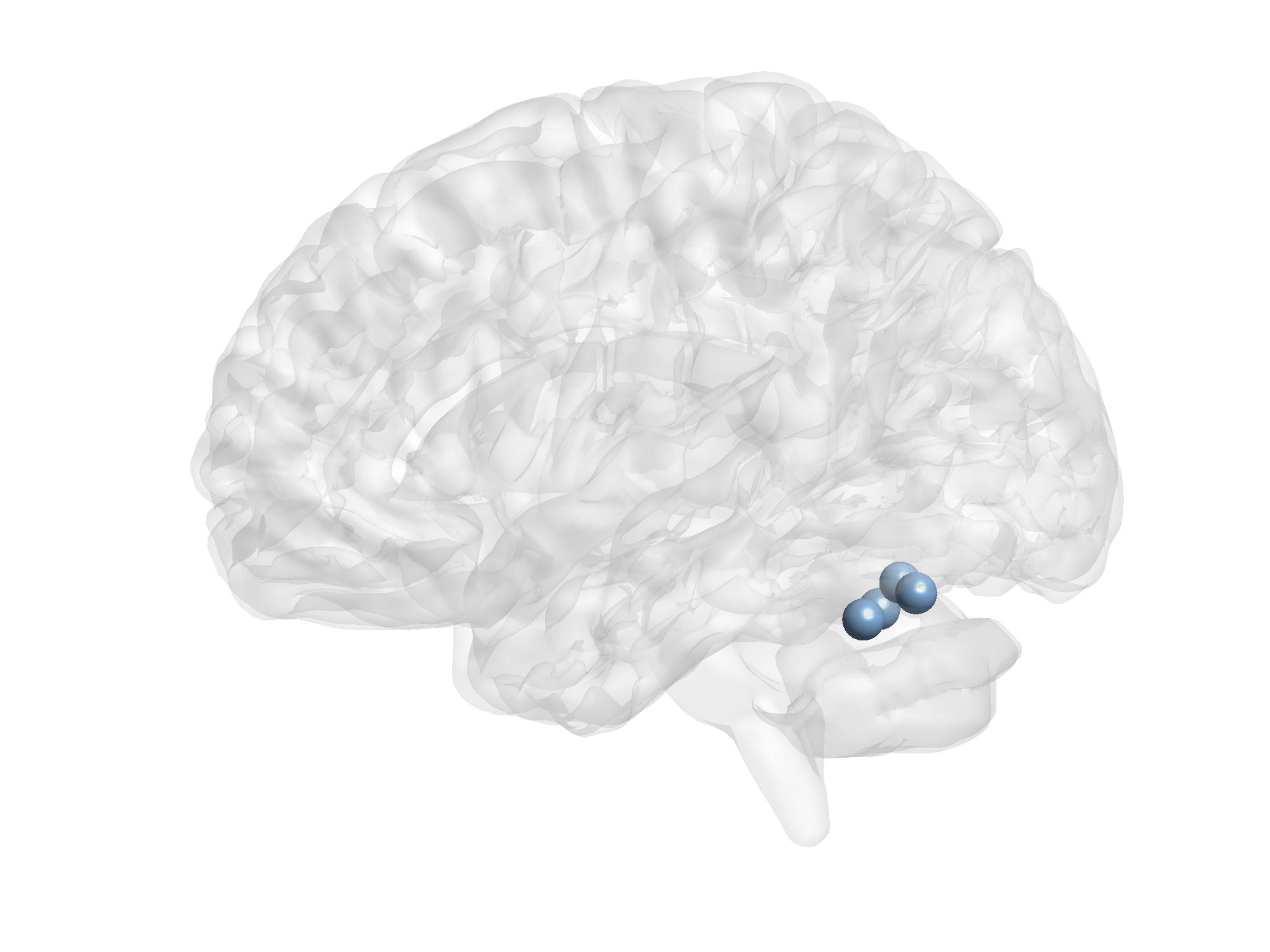} & \includegraphics[width=0.17\textwidth]{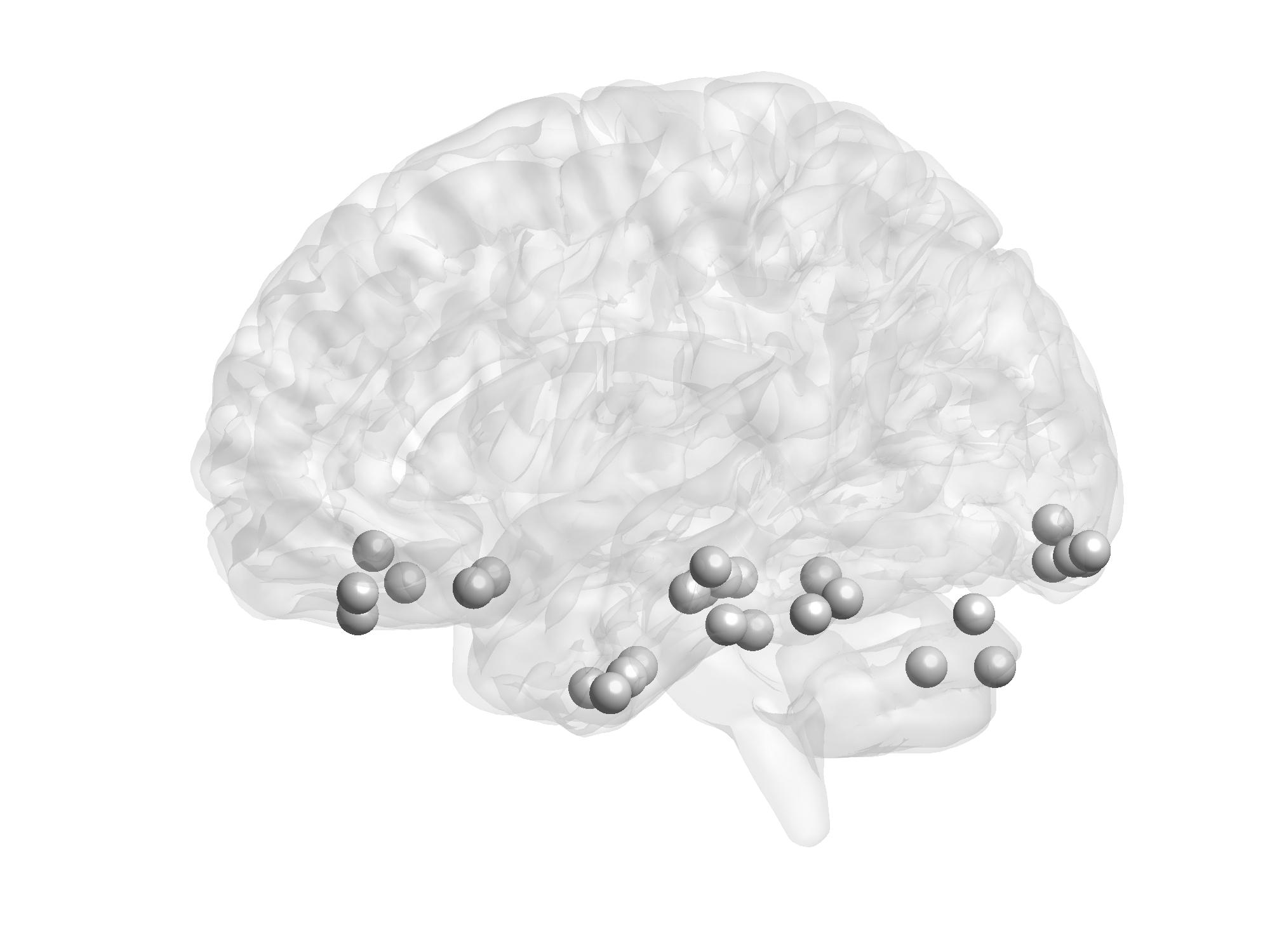} & \\
		11 & 12 & 13 & -1 & \\ 
	\end{tabular} 
	\caption[Individual communities proposed by Power et al. (2011).]{Individual communities proposed by \cite{power2011functional}.}
	\label{fig:powercomms}
\end{figure}

\begin{figure}
	\begin{tabular}{ccccc}
		\includegraphics[width=0.17\textwidth]{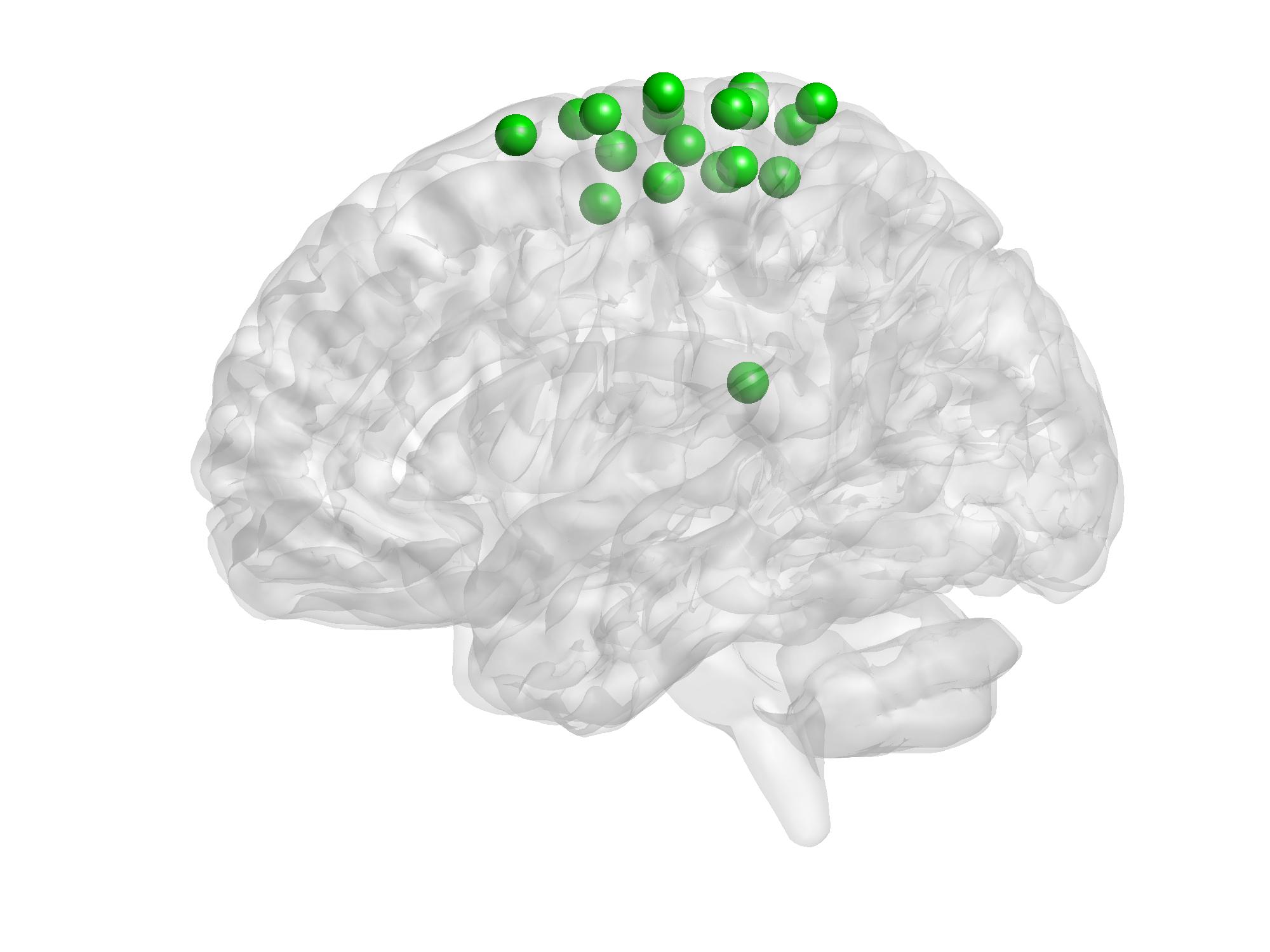} & \includegraphics[width=0.17\textwidth]{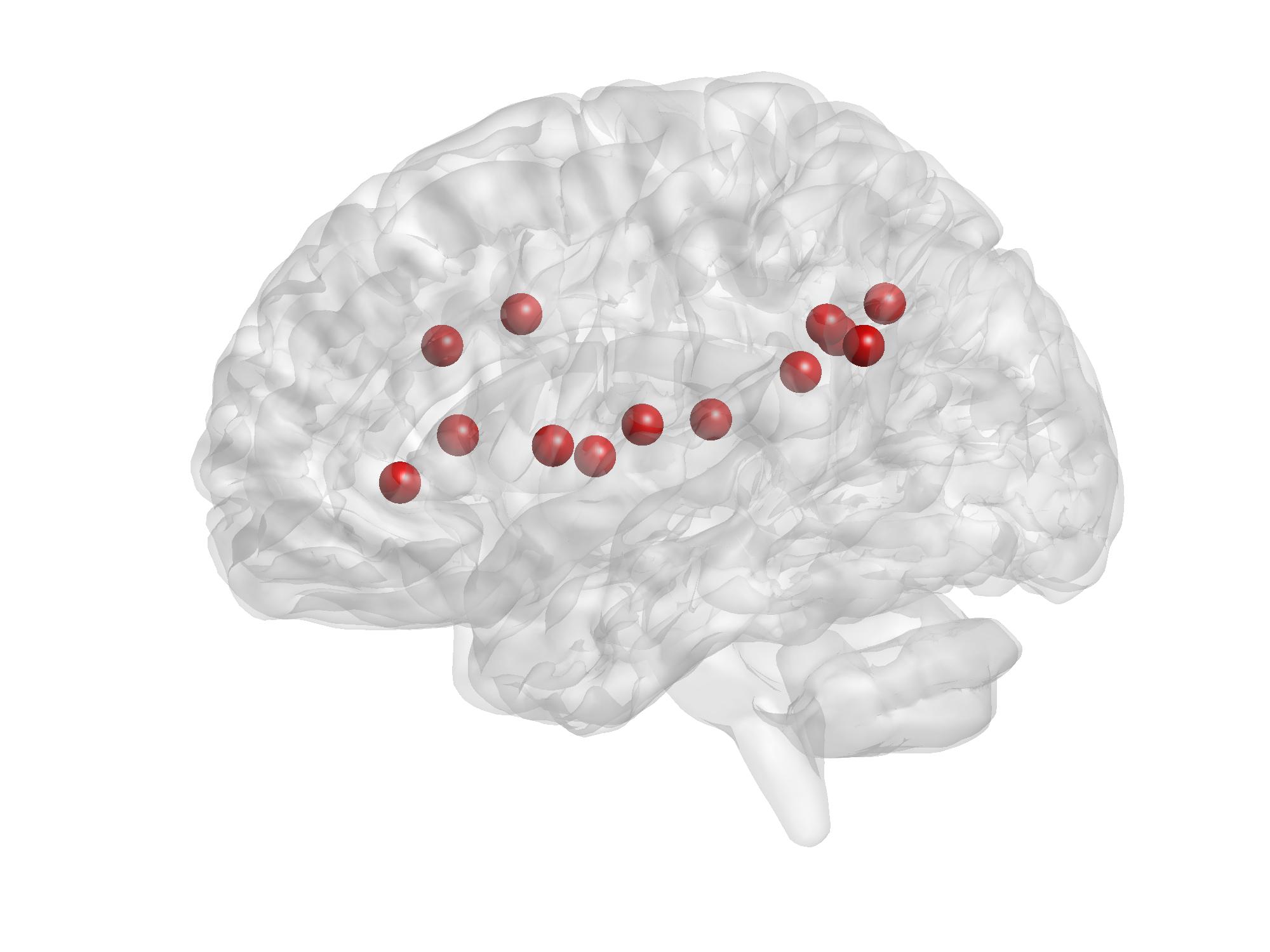} & \includegraphics[width=0.17\textwidth]{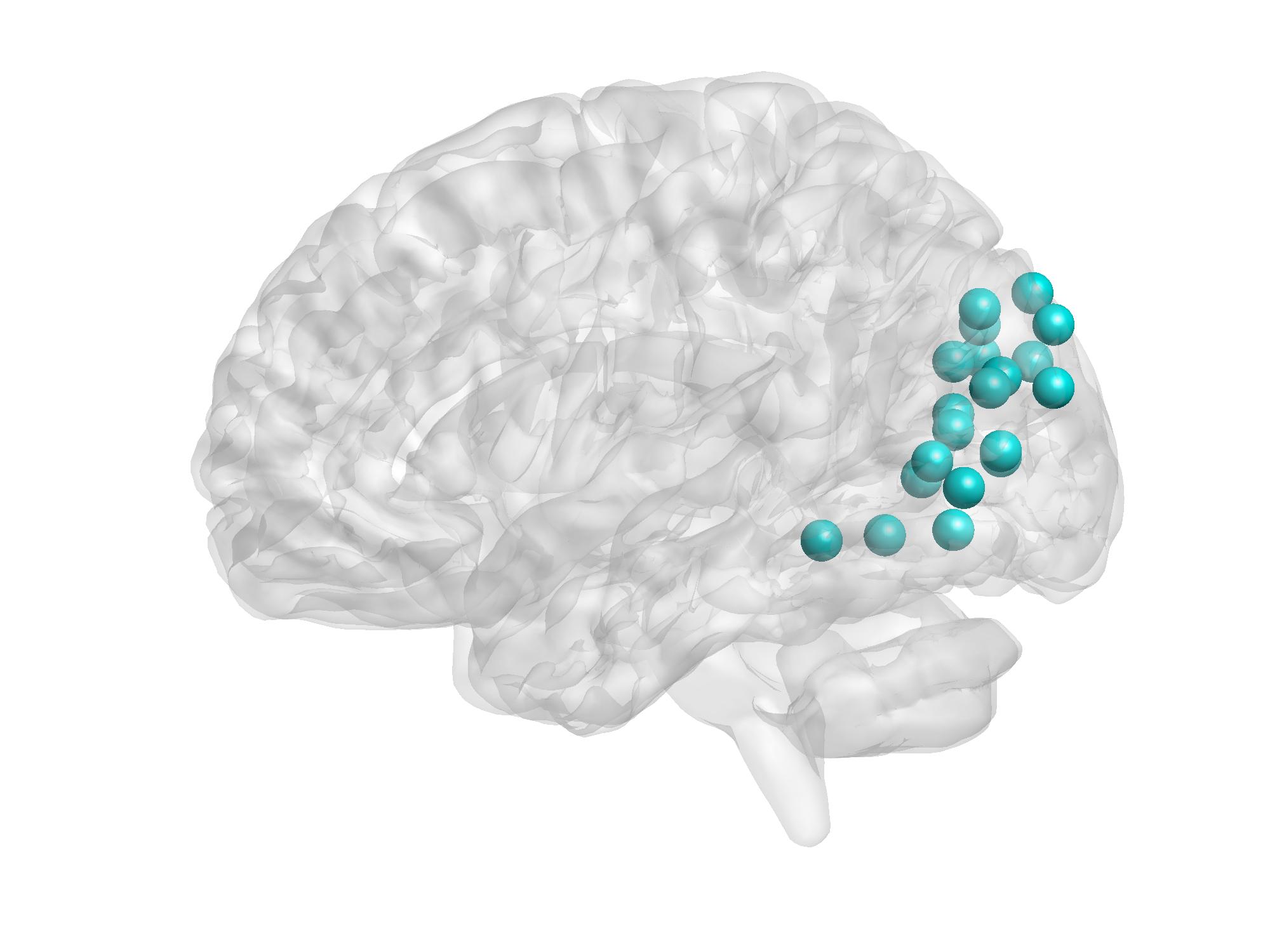} & \includegraphics[width=0.17\textwidth]{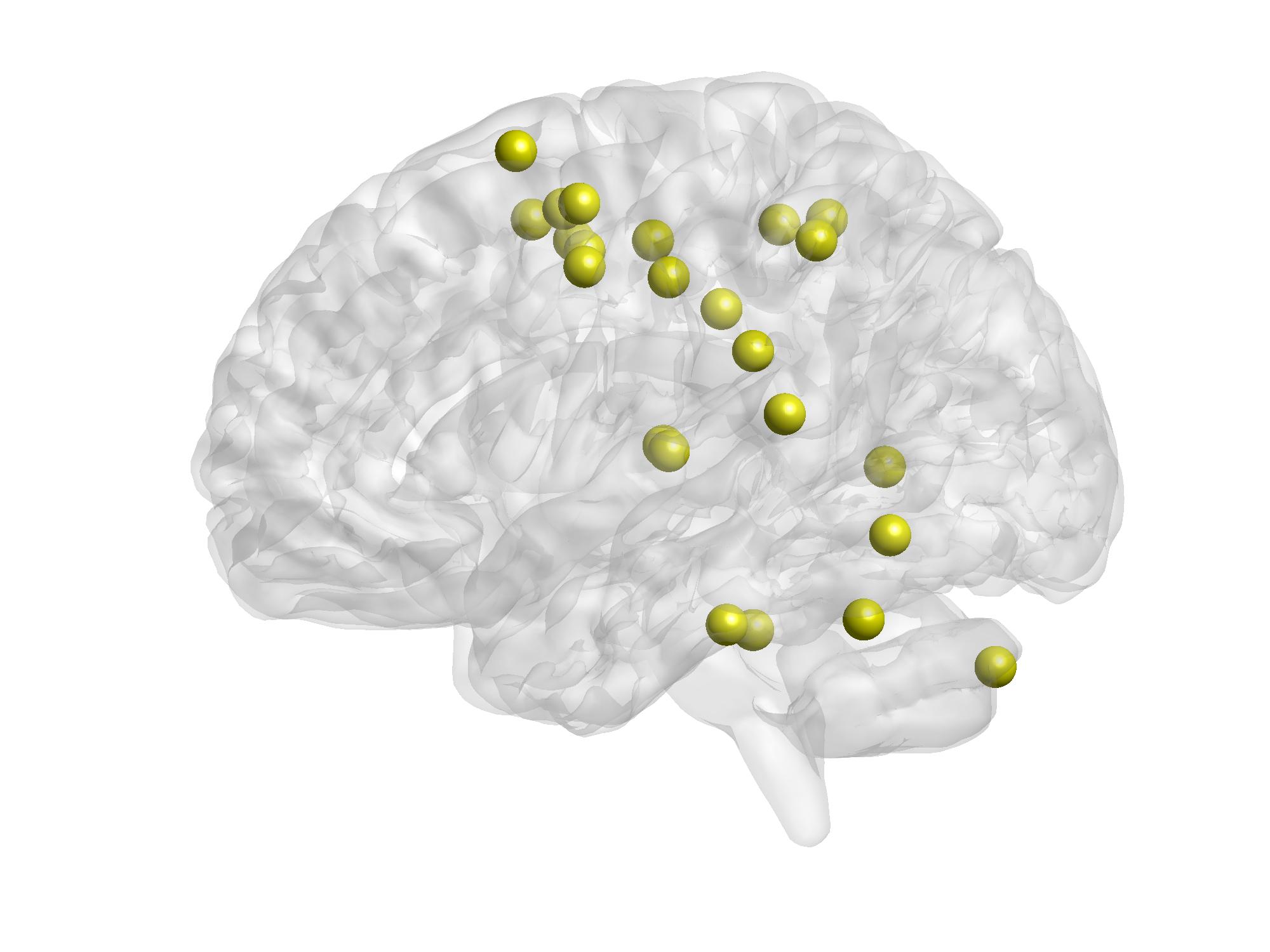} & \includegraphics[width=0.17\textwidth]{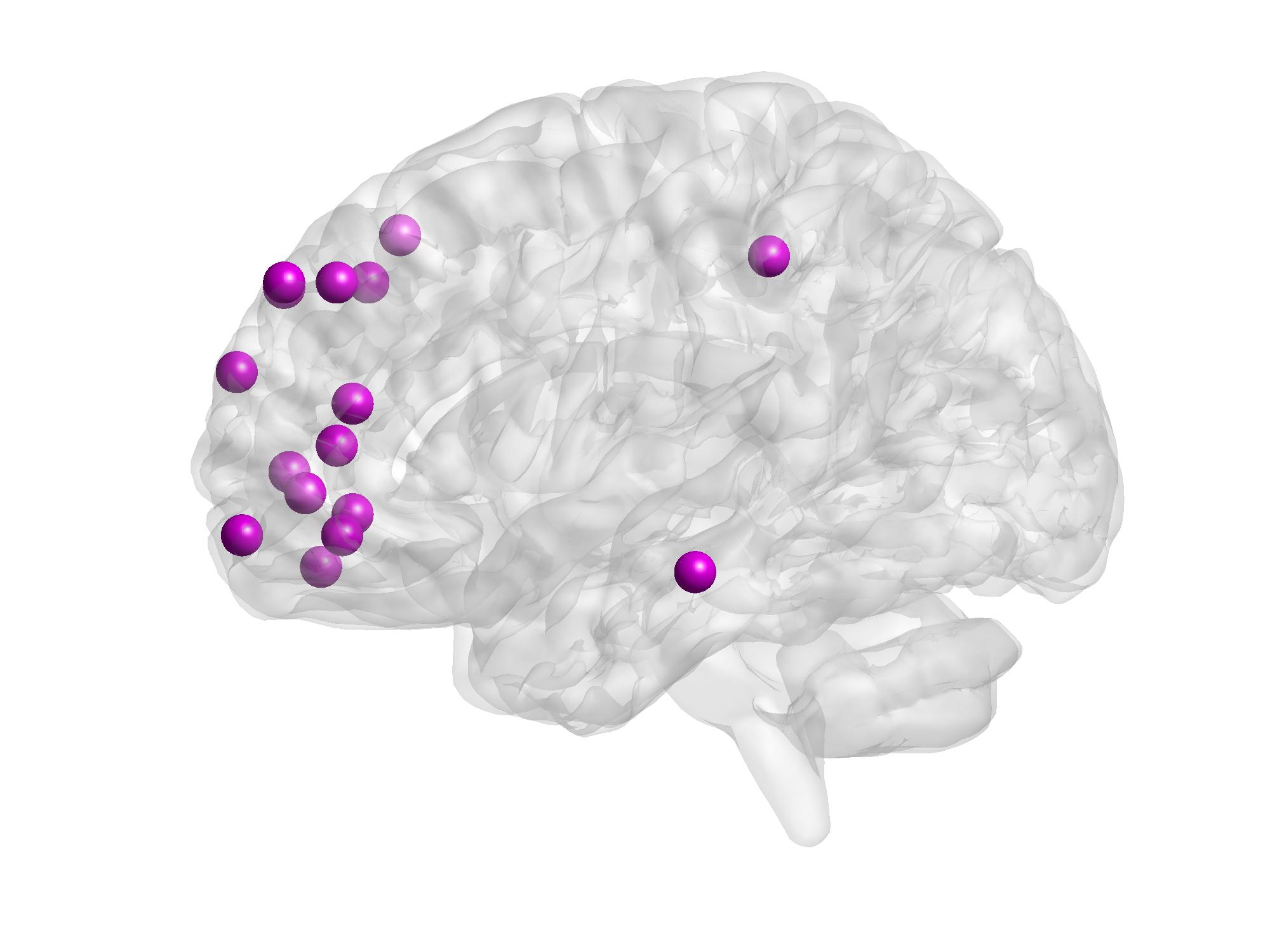} \\ 
		A & B & C & D & E\\ 
		\includegraphics[width=0.17\textwidth]{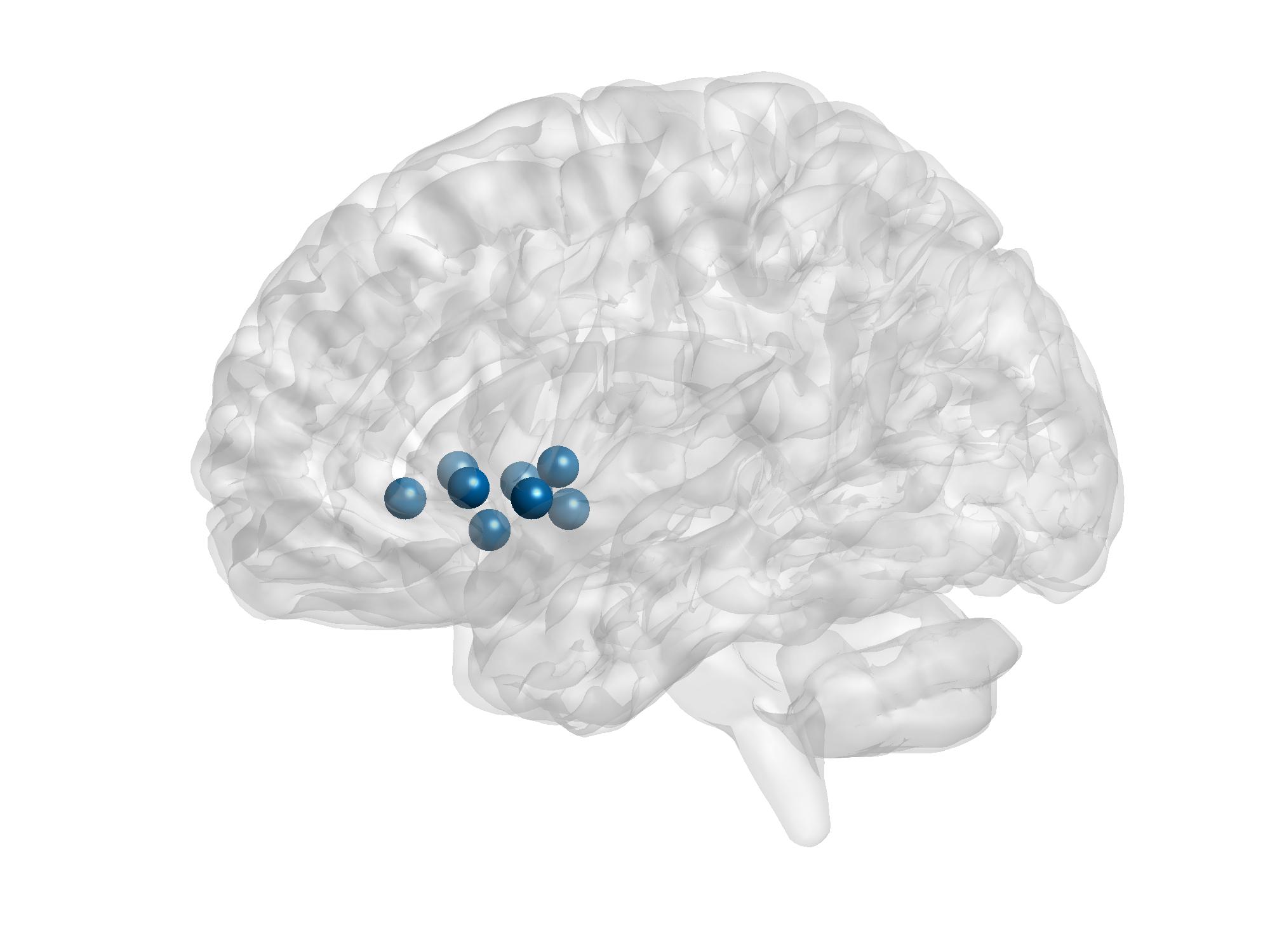}& \includegraphics[width=0.17\textwidth]{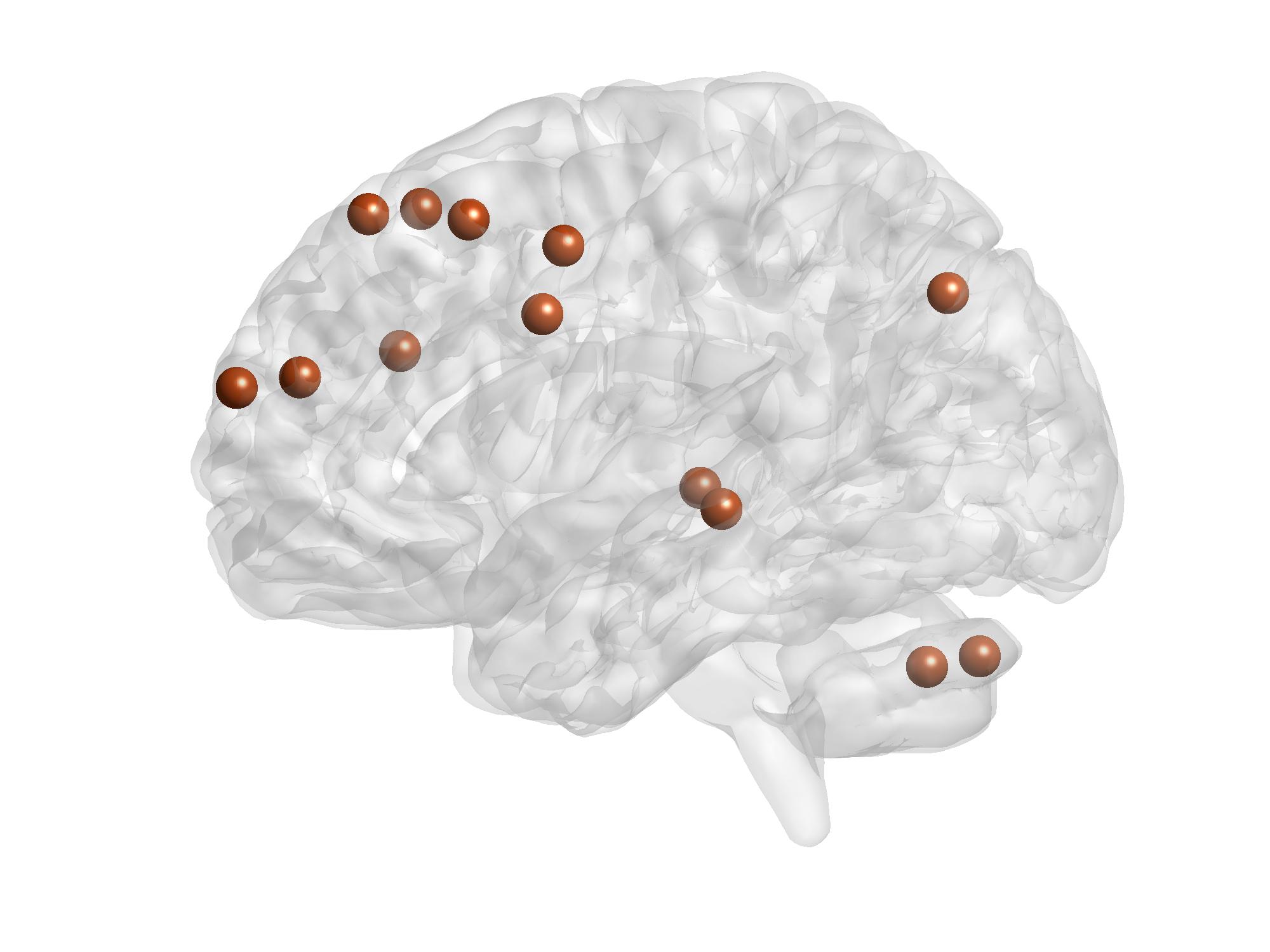} & \includegraphics[width=0.17\textwidth]{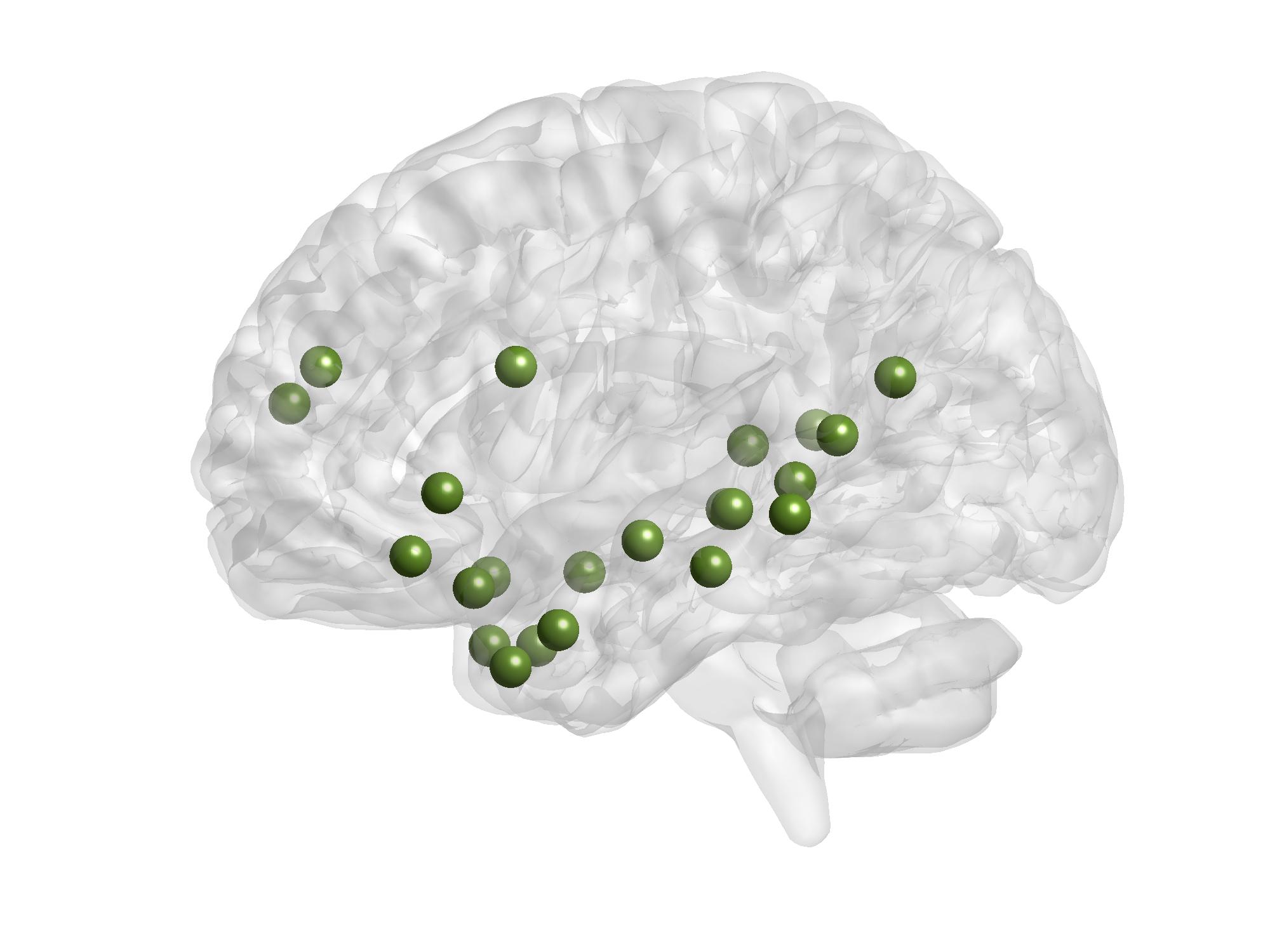} & \includegraphics[width=0.17\textwidth]{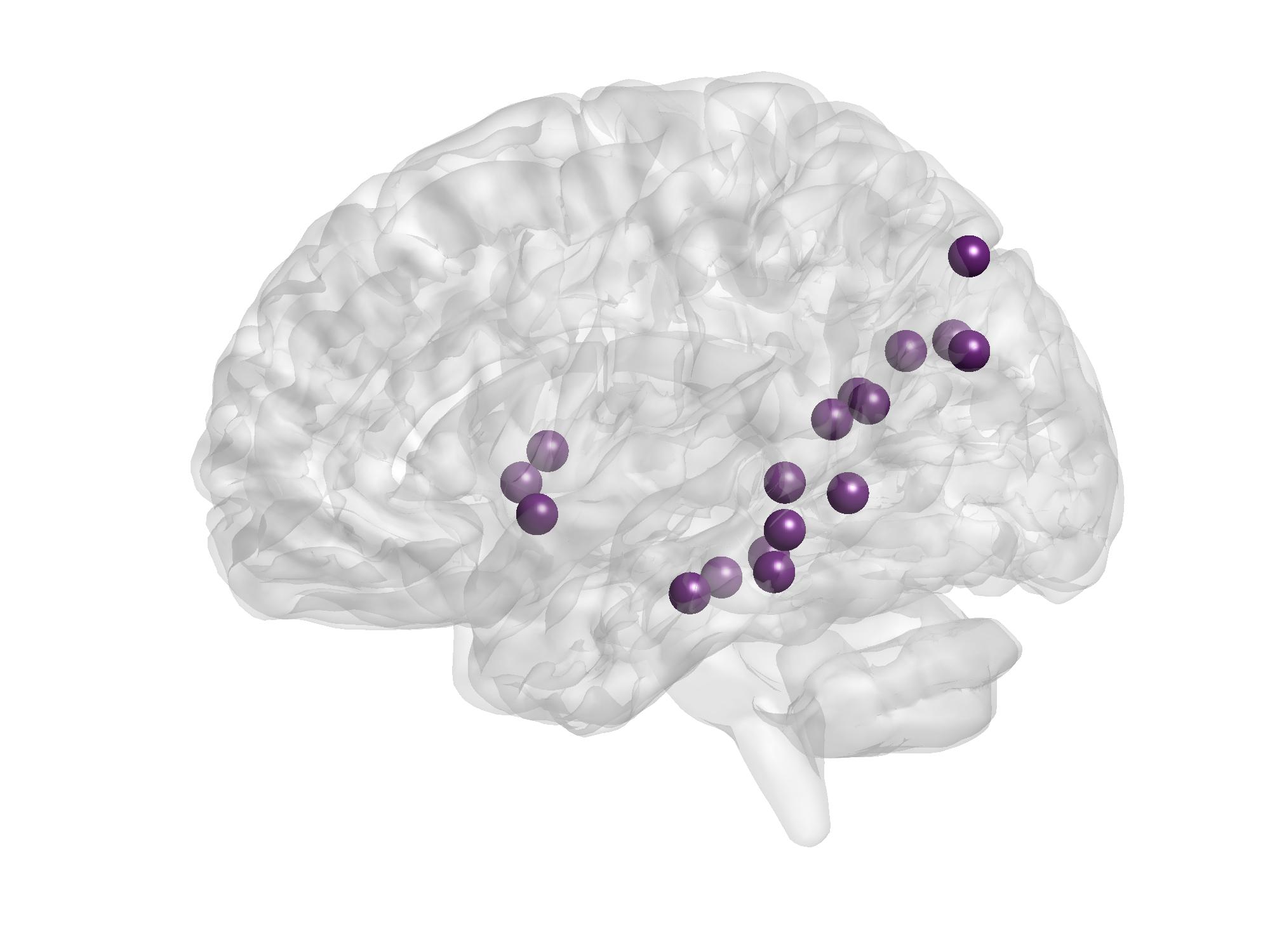} & \includegraphics[width=0.17\textwidth]{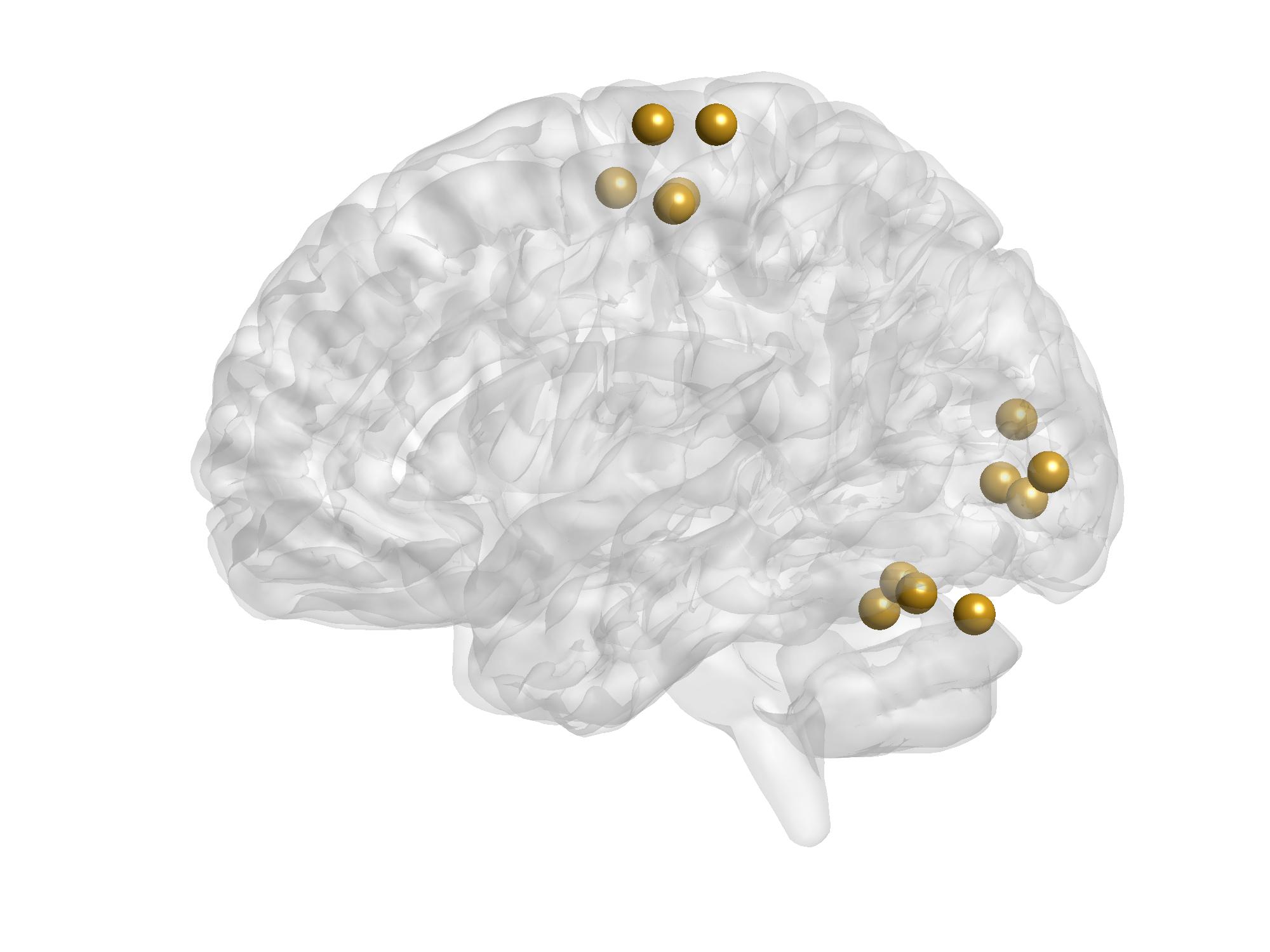} \\
		F & G & H & I & J\\ 
		\includegraphics[width=0.17\textwidth]{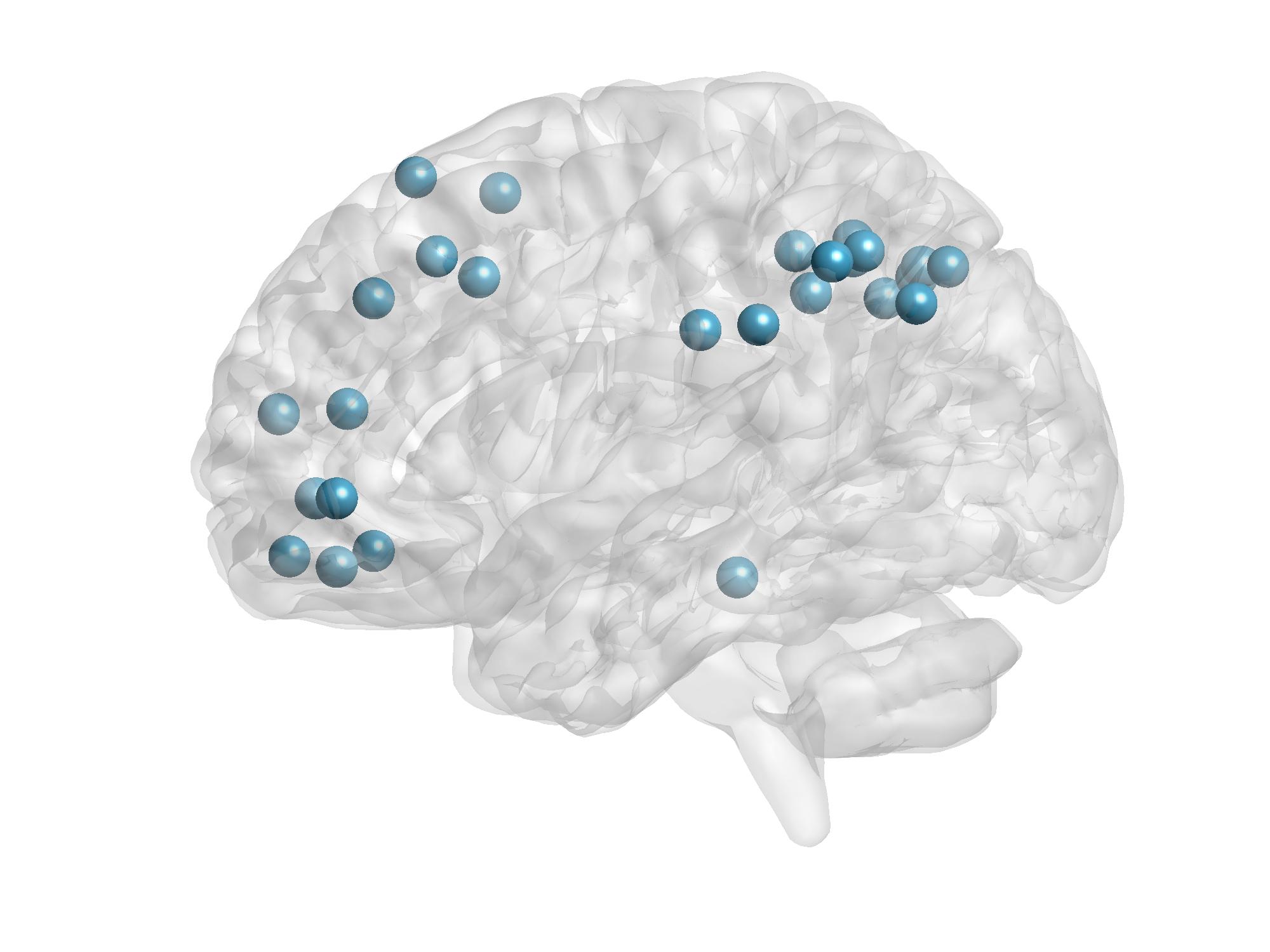}& \includegraphics[width=0.17\textwidth]{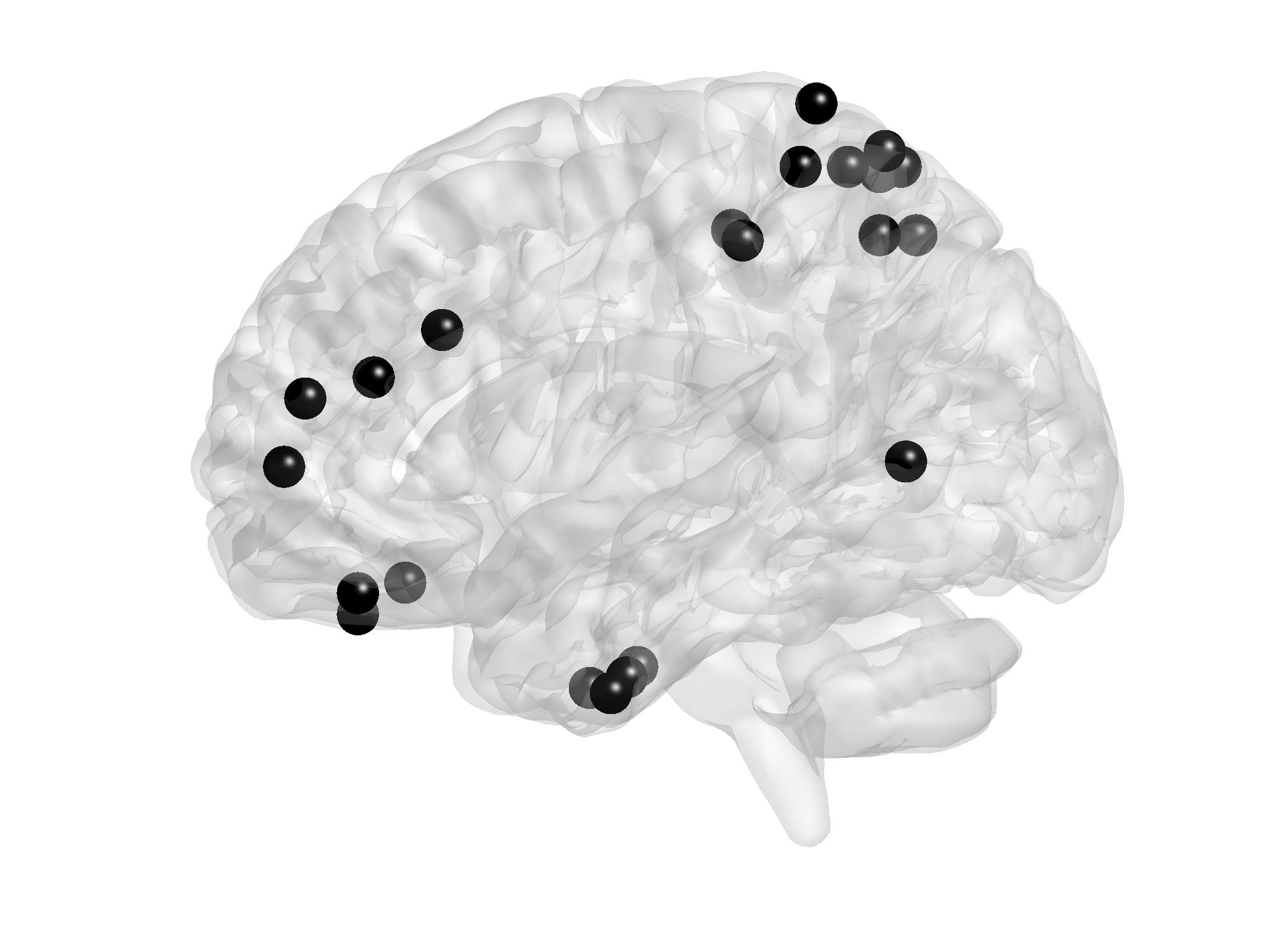} & \includegraphics[width=0.17\textwidth]{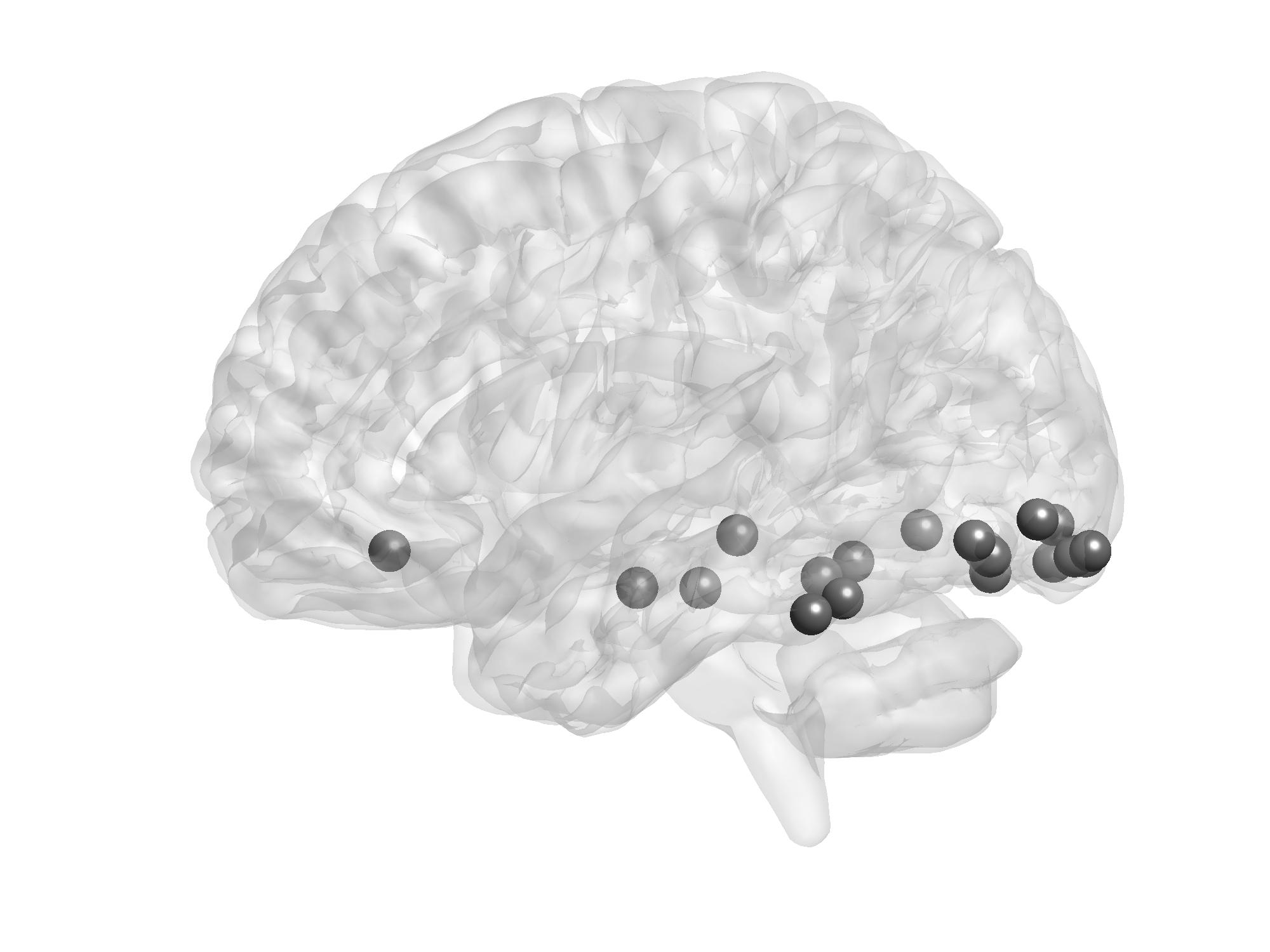} & \includegraphics[width=0.17\textwidth]{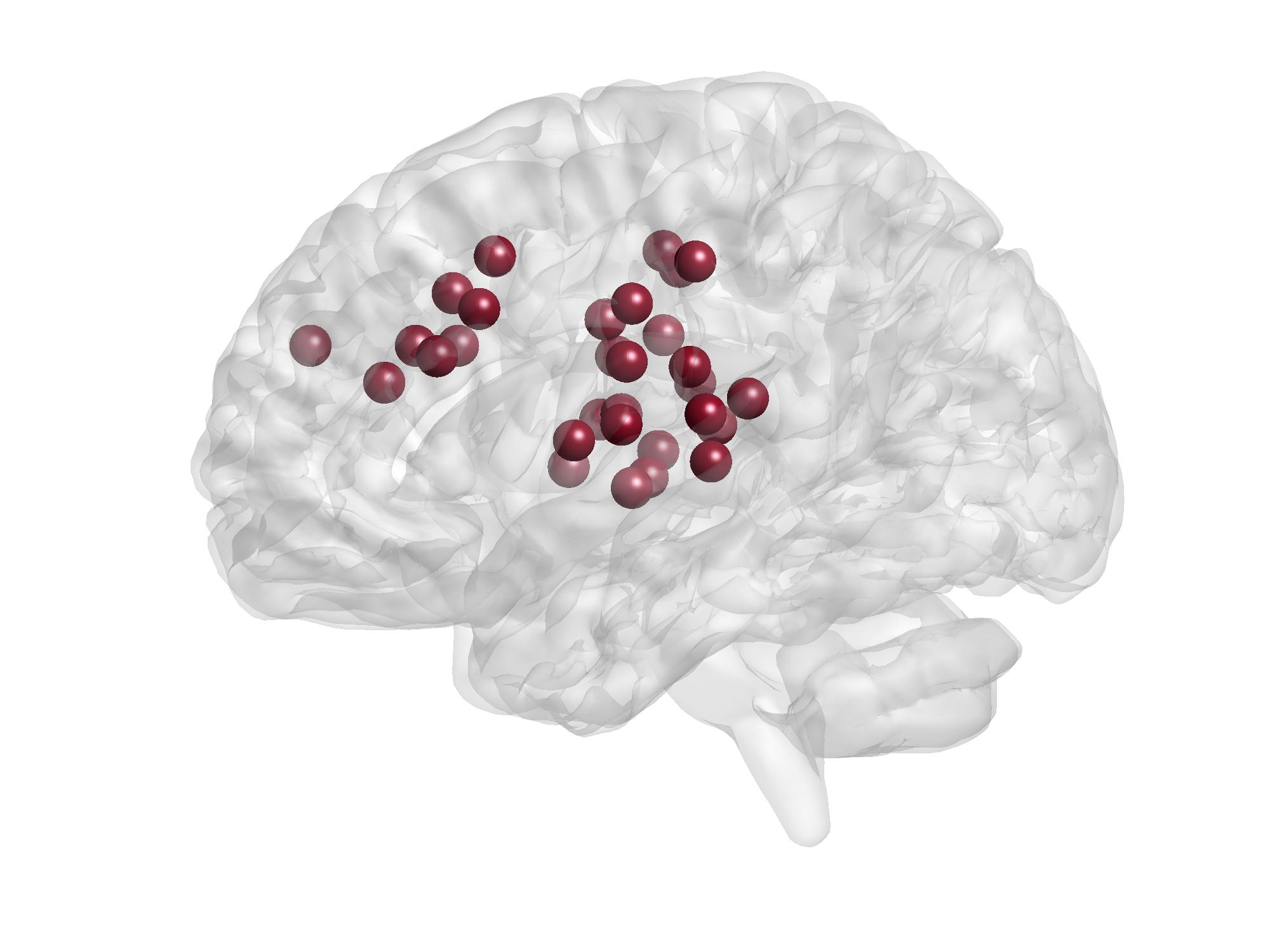} & \\
		K & L & M & N & \\ 
	\end{tabular} 
	\caption{Individual communities found by the supervised community detection method.}
	\label{fig:newcomms}
\end{figure}

\begin{figure}
	\centering
	\includegraphics[width=\textwidth]{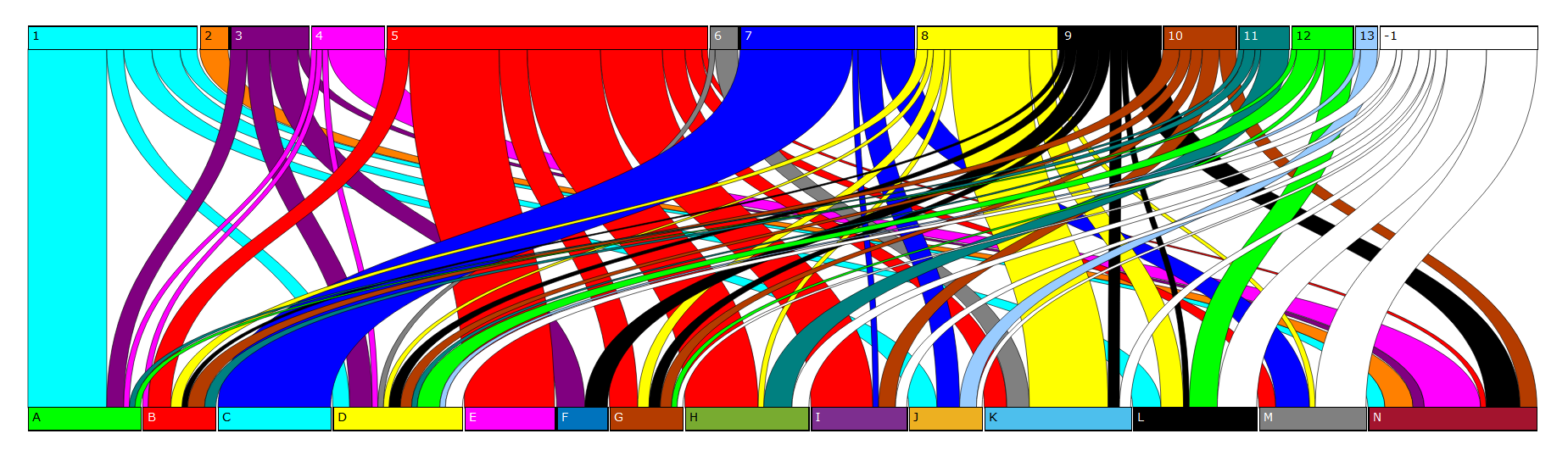}
	\caption{Sankey diagram of the node community assignment changes from the Power parcellation (top row) and the communities found by our method (bottom row).}
	\label{fig:sankey}
\end{figure}

\begin{figure}
	\centering
	\includegraphics[width=0.7\textwidth]{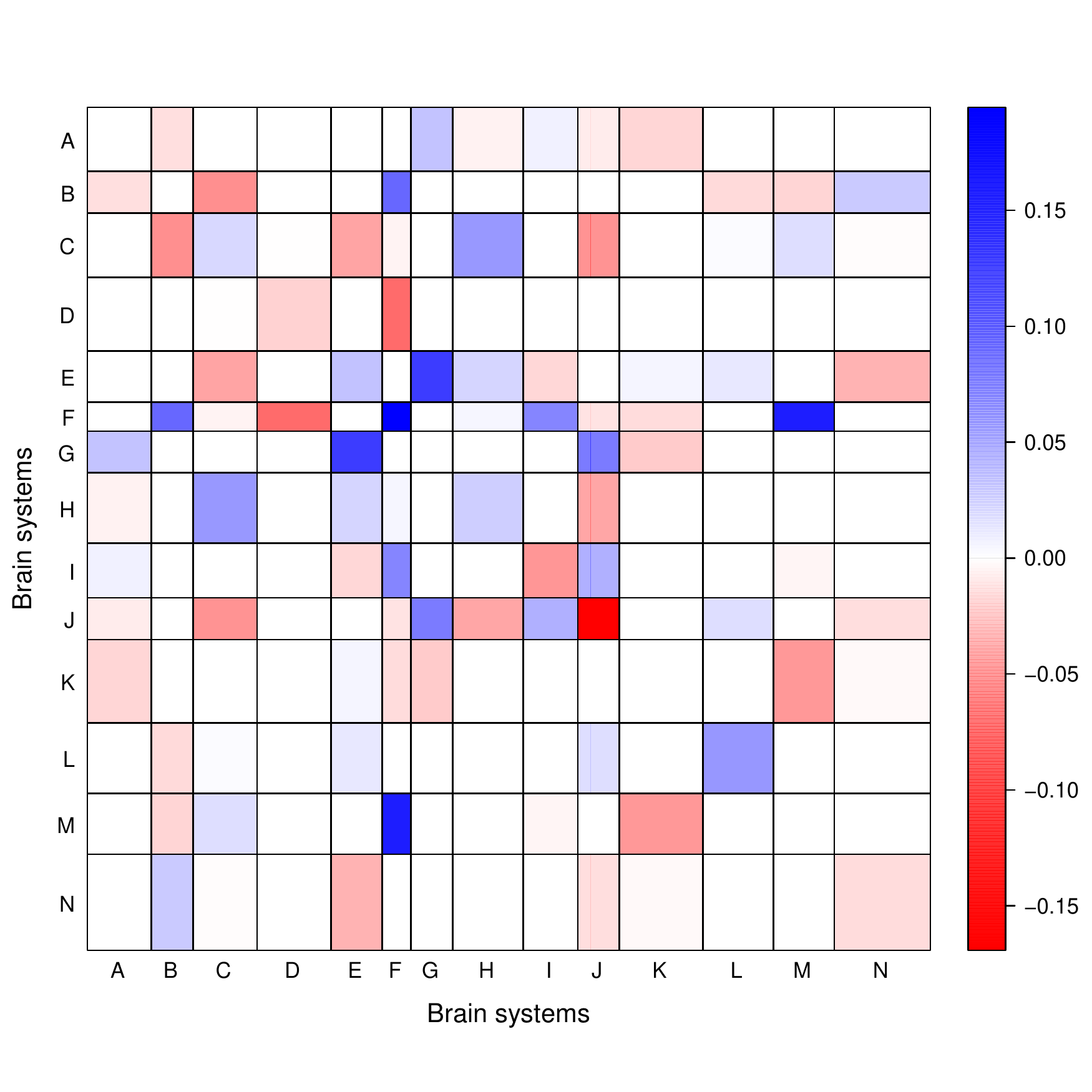}
	\caption{Matrix of fitted coefficients with the communities found by supervised community detection ($K=14$).}
	\label{fig:Bnew}
\end{figure}

Using 10-fold cross-validation, the average prediction error and number of non-zero different coefficients for a grid of values of $\lambda$ and $K$ are reported in Figure \ref{fig:CV-acc-number-SCD}.   Even when the number of communities is small, the accuracy of the supervised method is better than the baseline communities (Figure \ref{fig:powervsnew}), and as $K$ increases, the accuracy improves significantly. Comparing with other methods (see Table 1 in \cite{arroyo2016graphclass}), our method has good accuracy with a small number of highly interpretable parameters, and performs better than methods like lasso or DLDA, which are only able to select individual edges. 

\begin{figure}
	\centering
	\begin{minipage}{0.48\textwidth}
		\centering
		Cross-validation error
		\includegraphics[width=\textwidth]{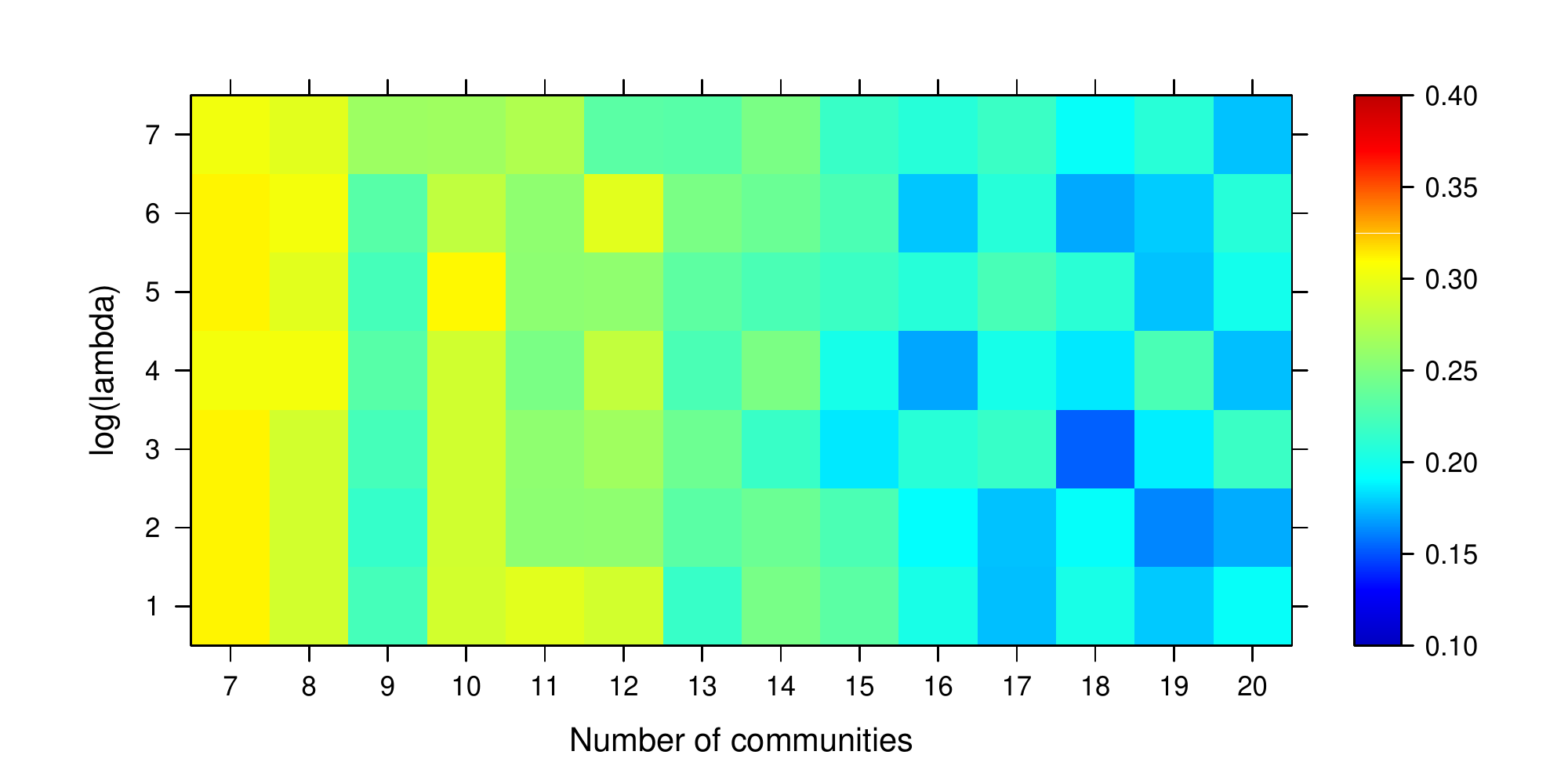}
	\end{minipage}
	\begin{minipage}{0.48\textwidth}
		\centering
		Number of non-zero coefficients
		\includegraphics[width=\textwidth]{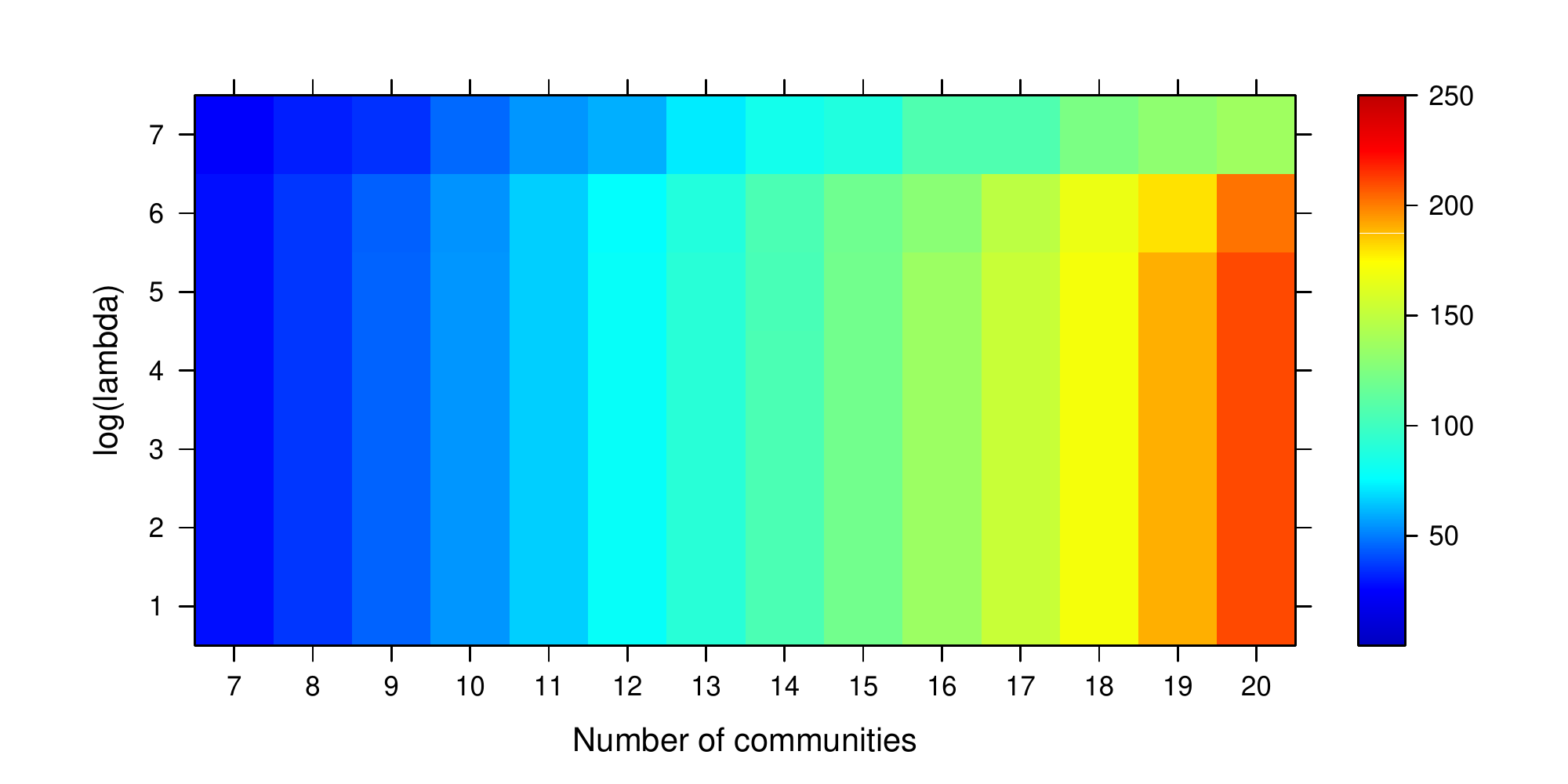}
	\end{minipage}
	\caption{Average cross-validation error (left) and average number of non-zero different coefficients (right) for a grid of $\lambda$ and $K$ values.}
	\label{fig:CV-acc-number-SCD}
\end{figure}

\section{Discussion \label{sec:discus}}

Finding communities in networks is a much studied problem, but as with any unsupervised problem, different community detection algorithms often yield different results, and it is hard to compare them.   Even when comparisons with some ``ground truth'' are made, the ground truth is frequently just another network covariate which may or may not correspond to communities \citep{Peel2017}.  In contrast, our focus on community structure as a regularization tool in a prediction problem, with the structure of regulazation motivated by and arising from the underlying science, allows for straightforward evaluation and comparisons in terms of prediction error.   However, good prediction performance is only one of our goals; having a sparse and interpretable solution is even more important from the scientific point of view (provided, of course, the solution does have good prediction performace, as there is no point interpreting a bad solution).   We do not aim to achieve the lowest prediction error possible, as long as we can have comparable performance with a scientifically interpretable solution.    This is one reason that we imposed an equal coefficient constraint on all the edges within a cell, in additition to the practical advantages of simpler optimization and shorter computing times. %rather than shrink them towards each other;  another reason for this choice is a much simpler optimization problem.   Imposing a shrinkage penalty within communities and then optimizing jointly over coefficients and communities, and tuning both $K$ and the shrinkage penalty parameter, is in principle a valid approach, but with the tools currently available it would vastly increase the computation costs.   

From the statistical perspective, there is much more to be done.   More complex prediction rules  can be used instead of a linear function to obtain more flexible classifiers;  the same structured penalty extends easily to other methods such as polynomial regression, splines, generalized additive models, or anything else that fits coefficients to an expanded basis.  We can modify the loss function to not just evaluate the quality of prediction, but also the strength of discovered community structure;   this would allow a balance between finding the most predictive communities and the strongest communities purely in the network sense, which may or may not be the most predictive.     Community structure can be used as an approximation to more general models, for example, smooth graphons such as those in \cite{zhang2017estimating}, and those could alsobe leveraged in a prediction task. Valid statistical inference for this approach is an open question.  While there has been a lot of recent activity in high-dimensional post-selection inference \citep{van2014asymptotically,lee2016exact, lockhart2014significance}, the setting studied in this paper is a much harder framework of grouped rather sparse coefficients, and the groups themselves are learned from data, unlike other structured approaches such as group lasso \citep{yuan2006model}.  Developing an inference framework for assesing the group relationships between the variables is another future direction.   From the neuroscience prospective, applying methods like this to different clinical outcomes of interest, controlling for covariates such as age and gender, and understanding the relationship between community structures the brain organizes itself into for different tasks and under different conditions are all promising directions for future work.

\section*{Acknowledgements}
This research was supported in part by NSF grants DMS-1521551 and DMS-1916222, and
a Dana Foundation grant to E. Levina. The authors would like to thank our collaborator Stephan F. Taylor and his lab for providing a processed version of the data, Chandra Sripada and Daniel Kessler for helpful discussions, and the computational resources and services provided by Advanced Research Computing at the University of Michigan.
\appendix
\section{Appendix}
%%%%%%%%%%%%%%%%%%%%%%%%%##############

\subsection{Proofs}

\begin{proof}[Proof of Proposition \ref{proposition:expected-sigma-AY}]
 For any nodes $u$ and $v$ in $[n]$, $u<v$, the $(u,v)$ entry of the matrix $\hat{\Sigma}^{\mathcal{A}, \mathcal{Y}}$ can be expressed as
 \begin{align*}
     \Exp{\hat{\Sigma}^{\mathcal{A}, \mathcal{Y}}_{uv}} & = \e\left[\frac{1}{N}\sum_{m=1}^N A^{(m)}_{uv}\left\langle A^{(m)}, B\right\rangle + \sigma \epsilon_m A^{(m)}_{uv}\right] \nonumber\\
     &  = 2\sum_{(s,t)\in\mathcal{P}}\Sigma^{\mathcal{A}}_{[u,v][s,t]}
     B_{st}\\
     & = 2 \sum_{(s,t)\in\mathcal{P}}\left(\frac{1}{N}\sum_{m=1}^NR^{(m)}_{z_uz_v}R^{(m)}_{z_sz_t}\right)C_{z_sz_t}+ 2\Psi_{z_uz_v}C_{z_uz_v}\\
     & = F_{z_uz_v}.
     %& = \sum_{(h,l)\in\ \mathcal{R}\setminus{\{(z_u,z_v)\}}} M_{hl}\Sigma^{\mathcal{R}}_{[z_u,z_v][h,l]}C_{hl} + M_{z_uz_v}\Sigma^{\mathcal{R}}_{[z_u,z_v][z_u,z_v]}C_{z_uz_v} + 2\Psi_{z_uz_v}C_{z_uz_v}
 \end{align*}
 Note that the value of $\Exp{\hat{\Sigma}^{\mathcal{A}, \mathcal{Y}}_{uv}}$ is the same for all edges $(u,v)$ that are in the cell $(z_u,z_v)$, and hence it can be expressed through the membership matrix $Z$ as in Equation~\eqref{eq:expected=sigma-A-Y}.
\end{proof}

%%%%%%%%%%%%%%%%%%%%%%%%%%%%%%%%%%%%%%%%%%%%%%%%%%%%%%%%%%%%%%%%%%%%%%%%%%%%%%%%%%%%%
%%%%%%%%%%%%%%%%%%%%%%%%%%%%%%%%%%%%%%%%%%%%%%%%%%%%%%%%%%%%%%%%%%%%%%%%%%%%%%%%%%%%%
%%%%%%%%%%%%%%%%%%%%%%%%%%%%%%%%%%%%%%%%%%%%%%%%%%%%%%%%%%%%%%%%%%%%%%%%%%%%%%%%%%%%%

\begin{proof}[Proof of Proposition \ref{prop:condition-covariance}]
	Define the symmetric matrices $H,J\in\real^{K\times K}$ such that for  $(j,k)\in\mathcal{R}$ and
	for some $(u,v)\in\mathcal{P}$ with $z_u=j$ and $z_v=k$,
	\begin{equation*}
		H_{jk} := \sum_{\substack{(h,l)\in\ \mathcal{Q},\\ (h,l)\neq (j,k) }}\  \sum_{\substack{(s,t)\in\mathcal{P},\\(z_s, z_t) = (h,l)}}  \Sigma^{\mathcal{A}}_{[u,v][s,t]} B_{st} , 
	\end{equation*}
	\begin{equation*}
		J_{jk} := \Pi_{jk}\sum_{\substack{(s,t)\in\mathcal{P},\\(z_s, z_t) = (j,k)}}  B_{st}.
	\end{equation*}
	Then,
	\begin{equation*}
		F: = 2\left( J + \Psi \circ C + H\right).
	\end{equation*}
	%\begin{equation*}
	%	F: = (2M\circ \Pi  +  2\Psi)\circ C +2 H.
	%\end{equation*}
	Using Weyl's inequality,
	\begin{align}
		|\lambda_{\min}(F)| & \geq 2\left(|\lambda_{\min}\left( J + \Psi\circ C\right)| - |\lambda_{\max}(H)|\right). \label{proof:eigen1}
	\end{align}
	%\begin{align}
	%	|\lambda_{\min}(F)| & \geq 2\left(|\lambda_{\min}\left( (M\circ \Pi + 2\Psi)\circ C\right)| - |\lambda_{\max}(H)|\right). \label{proof:eigen1}
	%\end{align}
	To bound the right hand side, by the Gershgorin disc theorem (see for example \cite{Bhatia1997}) and Equation~\eqref{eq:covarianceR-bound} we have
	\begin{align*}
		|\lambda_{\max}(H)|  \leq \max_{(j,k)\in\mathcal{Q}}K|H_{jk}| 
		%& \leq \max_{j\in[K]} \sum_{k=1}^K\left|\sum_{(h,l)\in\ \mathcal{Q}\setminus{\{(j,k)\}}}  \Sigma^{\mathcal{R}}_{[j,k]'[h,l]'} M_{hl} C_{hl} \right|\\
		 \leq (1-\delta)n^2_{\min}|\lambda_{\min}(\Pi\circ C')|.
	\end{align*}

%	\begin{align*}
%		|\lambda_{\max}(H)| & \leq \max_{j\in[K]}\sum_{j=1}^K|H_{jk}|\\
%		& \leq \max_{j\in[K]} \sum_{k=1}^K\left|\sum_{(h,l)\in\ \mathcal{Q}\setminus{\{(j,k)\}}}  \Sigma^{\mathcal{R}}_{[j,k]'[h,l]'} M_{hl} C_{hl} \right|\\
%		& \leq (1-\delta)n^2_{\min}|\lambda_{\min}(\Pi\circ C)|.
%	\end{align*}
	Let $\Theta = \operatorname{diag}(n_1, \ldots, n_K)$ be a diagonal matrix with the community sizes on the diagonal. Observe that
	\begin{equation*}
		2J = \Theta (\Pi\circ C')\Theta -    (I \circ \Theta \circ \Pi\circ C'),
	\end{equation*}
%	\begin{equation*}
%		2M\circ \Pi\circ C = \Theta (\Pi\circ C')\Theta -    (I \circ \Theta \circ \Pi\circ C'),
%	\end{equation*}
	and therefore,
		\begin{align*}
		2 |\lambda_{\min}(J)| & \geq  \left(\min_{j\in[K]}\Theta_{jj}^2\right)|\lambda_{\min}( \Pi\circ C')| - \max_{j\in[K]}(\Theta_{jj}\Pi_{jj}C'_{jj})\\
		& \geq n_{\min}^2|\lambda_{\min}( \Pi\circ C')| - n_{max}\max_{j\in[K]}(\Pi_{jj}C'_{jj})\\
		& = \Omega\left(n_{\min}^2|\lambda_{\min}( \Pi\circ C')| \right).
	\end{align*}
%	\begin{align*}
%		2 |\lambda_{\min}(M\circ \Pi\circ C)| & \geq  \left(\min_{j\in[K]}\Theta_{jj}^2\right)|\lambda_{\min}( \Pi\circ C')| - \max_{j\in[K]}(\Theta_{jj}\Pi_{jj}C'_{jj})\\
%		& \geq n_{\min}^2|\lambda_{\min}( \Pi\circ C')| - n_{max}\max_{j\in[K]}(\Pi_{jj}C'_{jj})\\
%		& = \Omega\left(n_{\min}^2|\lambda_{\min}( \Pi\circ C')| \right).
%	\end{align*}
	Combining these results with Equation~\eqref{proof:eigen1},
		\begin{align*}
		|\lambda_{\min}(F)| & \geq 2\left(|\lambda_{\min}(J + \Psi\circ C)| - (1-\delta)n^2_{\min}|\lambda_{\min}(\Pi\circ C)|\right)\\
		& \geq 2\left( |\lambda_{\min}(J)| - \| 2\Psi\circ C\| - (1-\delta) n^2_{\min}|\lambda_{\min}(\Pi\circ C)| \right)\\
		& = \Omega(n^2_{\min}|\lambda_{\min}(\Pi\circ C')|).
	\end{align*}
%	\begin{align*}
%		|\lambda_{\min}(F)| & \geq 2\left(\lambda_{\min}((M\circ \Pi + 2\Psi)\circ C)| - (1-\delta)n^2_{\min}|\lambda_{\min}(\Pi\circ C)|\right)\\
%		& \geq 2\left( |\lambda_{\min}((M\circ \Pi)\circ C)| - \| 2\Psi\circ C\| - (1-\delta) n^2_{\min}|\lambda_{\min}(\Pi\circ C)| \right)\\
%		& = \Omega(n^2_{\min}|\lambda_{\min}(\Pi\circ C')|).
%	\end{align*}
\end{proof}

%%%%%%%%%%%%%%%%%%%%%%%%%%%%%%%%%%%%%%%%%%%%%%%%%%%%%%%%%%%%%%%%%%%%%%%%%%%%%%%%%%%%%
%%%%%%%%%%%%%%%%%%%%%%%%%%%%%%%%%%%%%%%%%%%%%%%%%%%%%%%%%%%%%%%%%%%%%%%%%%%%%%%%%%%%%
%%%%%%%%%%%%%%%%%%%%%%%%%%%%%%%%%%%%%%%%%%%%%%%%%%%%%%%%%%%%%%%%%%%%%%%%%%%%%%%%%%%%%
The next proposition will be needed to prove the theorems.
\begin{proposition} \label{prop:spectral-error}
	Suppose that networks $A^{(1)},\ldots,A^{(N)}$ satisfy Assumptions \ref{assump:network-distr} and \ref{assumption:centered}, and responses $Y_1,\ldots,Y_N$ satisfy Assumption \ref{assump:linear-model}. Then the expectation of the spectral norm of the difference between the
	matrix $\hat{\Sigma}^{\mathcal{A}, \mathcal{Y}}$, defined in Equation~\eqref{eq:sigmaAY-block}, and its expectation satisfies
	\begin{equation*}
	\Exp{\left\|\hat{\Sigma}^{\mathcal{A},\mathcal{Y}} - \Exp{\hat{\Sigma}^{\mathcal{A},\mathcal{Y}}}\right\|}  \lesssim 
	\sqrt{\frac{n}{N}}\left( K^2n_{\max}^2\pi^\ast\psi +
	\sigma +
	n_{\max}\psi^2 
	\right).
	\end{equation*}
	
\end{proposition}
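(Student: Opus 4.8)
The plan is to express the centered matrix as an average of $N$ independent mean-zero random matrices and then control its spectral norm by a matrix concentration argument. Substituting $Y_m = \langle A^{(m)}, B\rangle + \sigma\epsilon_m$ into the definition~\eqref{eq:sigmaAY-block}, I would write
\[
\hat\Sigma^{\mathcal A,\mathcal Y} - \Exp{\hat\Sigma^{\mathcal A,\mathcal Y}} = \frac{1}{N}\sum_{m=1}^N X^{(m)}, \qquad X^{(m)} := Y_m A^{(m)} - \Exp{Y_m A^{(m)}},
\]
where only off-diagonal entries are retained. Since the pairs $(A^{(m)},\epsilon_m)$ are independent across $m$, the summands $X^{(m)}$ are independent and mean-zero, so the problem reduces to bounding a sum of independent random matrices. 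I would further decompose each summand into a signal part $\langle A^{(m)},B\rangle A^{(m)}$ and a noise part $\sigma\epsilon_m A^{(m)}$, and split $A^{(m)} = M^{(m)} + E^{(m)}$, where $M^{(m)}_{uv} = R^{(m)}_{z_uz_v}$ is the deterministic cell-mean matrix and $E^{(m)}$ collects the independent mean-zero edge fluctuations with $\operatorname{Var}(E^{(m)}_{uv}) = \Psi^{(m)}_{z_uz_v}$.

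The main tool is a Bernstein-type matrix concentration inequality, since each entry of $X^{(m)}$ is a product of sub-Gaussian quantities and is therefore only sub-exponential. Expanding $\langle A^{(m)},B\rangle A^{(m)}$ through $A^{(m)}=M^{(m)}+E^{(m)}$ yields mean--mean, mean--fluctuation, and fluctuation--fluctuation contributions, each of which I would bound via its matrix variance statistic $\bigl\|\frac{1}{N^2}\sum_{m}\Exp{(X^{(m)})^2}\bigr\|$ together with a uniform bound on the sub-exponential norm of $X^{(m)}$; equivalently, one can pass to a $1/2$-net of the unit sphere and apply a Hanson--Wright / sub-exponential Bernstein bound to the quadratic forms $x^\top X^{(m)}x$, followed by a union bound. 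The three terms in the statement then emerge by matching these variance proxies: the noise part $\sigma\epsilon_m A^{(m)}$ contributes $\sigma$ (using $\operatorname{Var}(\epsilon_m)\le\orlicz{\epsilon_m}^2$ and $\frac1N\sum_m\Exp{(A^{(m)}_{uv})^2}=\Pi_{z_uz_v}+\Psi_{z_uz_v}=1$ from Assumption~\ref{assumption:centered}), the pure fluctuation contributions give the $n_{\max}\psi^2$ term, and the cross mean--fluctuation contributions give the $K^2 n_{\max}^2 \pi^\ast\psi$ term, each carrying the universal $\sqrt{n/N}$ scaling of the operator norm of an $n\times n$ matrix averaged over $N$ copies.

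The main obstacle will be the careful second-moment bookkeeping needed to pin down these variance proxies, and in particular the exact powers of $n_{\max}$, $K$, $\pi^\ast$, and $\psi$. The delicate object is the aggregate factor $\langle A^{(m)},B\rangle = \sum_{(s,t)\in\mathcal P} A^{(m)}_{st}C_{z_sz_t}$, which couples \emph{all} $\Theta(n^2)$ edges to a single entry $A^{(m)}_{uv}$; tracking how its variance, governed by the cell cardinalities together with $\pi^\ast$, $\psi$, and $\|C\|_F$, combines with the variance of one individual edge is what produces the $K^2 n_{\max}^2$ factor once the number of cells in $\mathcal Q$ and the maximal cell size $n_{\max}^2$ are accounted for. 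Here the cross-cell covariances in $\Sigma^{\mathcal A}$ from~\eqref{eq:covariance-communities} must be handled with care to avoid overcounting. I would exploit the explicit cell structure of $\Sigma^{\mathcal A}$, the standardization $\Pi_{jk}+\Psi_{jk}=1$ (so $\pi^\ast,\psi\le 1$) from Assumption~\ref{assumption:centered}, and $\|C\|_F=\Theta(1)$ from Assumption~\ref{assump:linear-model} to keep all coefficient magnitudes under control.

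Finally, the matrix concentration inequality yields a tail bound with a sub-Gaussian regime controlled by the variance proxy and a sub-exponential regime controlled by the uniform norm bound; integrating this tail over the threshold gives a bound on $\Exp{\|\frac1N\sum_m X^{(m)}\|}$ in terms of these two quantities, and collecting the dominant (variance-regime) contributions of the noise, fluctuation, and cross parts produces the stated estimate.
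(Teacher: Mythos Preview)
Your decomposition matches the paper's exactly: substitute $Y_m=\langle A^{(m)},B\rangle+\sigma\epsilon_m$, split $A^{(m)}=M^{(m)}+E^{(m)}$ into cell means and fluctuations, and separate the result into noise, cross (mean--fluctuation), and pure fluctuation pieces. The paper labels these pieces $T^{(m)},U^{(m)},V^{(m)},W^{(m)},X^{(m)}$, and your identification of which piece produces which of the three terms in the final bound is correct.

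Where you diverge is in the concentration step. You propose a matrix Bernstein inequality for the sum $\frac1N\sum_m X^{(m)}$ of independent random matrices (or equivalently a net plus Hanson--Wright on quadratic forms). The paper instead averages each of the five pieces over $m$ \emph{first}, bounds the sub-Gaussian or sub-exponential Orlicz norm of each individual entry of the averaged matrix (picking up the $1/\sqrt N$ there via scalar Bernstein/Hoeffding), and then invokes Bandeira's sharp expected-norm bounds for symmetric random matrices with independent centered entries (Corollaries~3.3 and~3.5 of Bandeira, 2016) to pass to the spectral norm. This is why the paper never needs the matrix variance statistic $\bigl\|\frac{1}{N^2}\sum_m\Exp{(X^{(m)})^2}\bigr\|$, which in your approach is awkward because the entries of each $X^{(m)}$ are strongly coupled through the common scalar $Y_m$.

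What each buys: the paper's route is lighter on bookkeeping, since only scalar Orlicz norms are needed and the $\sqrt n$ factor drops out directly from the row-sum parameter in Bandeira's bound. Your matrix-Bernstein route is more robust to within-matrix dependence, but would likely incur extra $\log n$ factors and require more work to control the variance proxy and the uniform norm bound $R$ (note $\|X^{(m)}\|$ is not bounded almost surely, so you would need the sub-exponential version and a truncation argument). Either path yields the stated estimate up to logarithms.
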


\begin{proof}[Proof of Proposition \ref{prop:spectral-error}]
For each $m\in[N]$ and $u,v\in[n]$ define 
$$\tau^{(m)}_{uv} := A_{uv}^{(m)} - R^{(m)}_{uv}.$$
Note that $\tau^{(m)}_{uv}$ is sub-Gaussian, and $\Exp{\tau^{(m)}_{uv}}=0$.
 For any nodes $u$ and $v$ in $[n]$, the $(u,v)$ entry of the matrix $\hat{\Sigma}^{\mathcal{A}, \mathcal{Y}}$ can be expressed as
 \begin{align}
     \hat{\Sigma}^{\mathcal{A}, \mathcal{Y}}_{uv} & = \frac{1}{N}\sum_{m=1}^N\left[ A^{(m)}_{uv}\textbf{}\left\langle A^{(m)}, B\right\rangle + \sigma \epsilon_m A^{(m)}_{uv}\right] \nonumber\\
     &  = \frac{1}{N}\sum_{m=1}^N\left[\left(R^{(m)}_{z_uz_v} + \tau^{(m)}_{uv}\right) \sum_{s=1}^n\sum_{t=1}^n\left(R^{(m)}_{z_sz_t}+\tau^{(m)}_{st}\right)C_{z_sz_t} + \sigma \epsilon_m \left(R^{(m)}_{z_uz_v}+\tau_{uv}^{(m)}\right)\right].\label{eq:Sigma-decomp}
 \end{align}
 %Let $n_{\max}:=\max{k\in[K]}$ $n_{\min}$????
 %Define $Q\in\real^{K\times K}$ as the matrix with its entries representing the total number of edges contained on each of the corresponding cells. That is,
 %$$Q_{jk} := \left\{ \begin{array}{rl}
 %     \left(\sum_{u=1}^n \mathbbm{1}\{z_u=j\}\right)\left(\sum_{v=1}^n \mathbbm{1}\{z_v=k\}\right) &  \text{if }j\neq k,\\
 %     \left(\sum_{u=1}^n \mathbbm{1}\{z_u=j\}\right)^2 - \sum_{u=1}^n \mathbbm{1}\{z_u=j\} &  \text{if }j= k.
 %\end{array}\right.$$
 Observe that %with the previous definition, it holds that
 \begin{equation*}
     \sum_{s=1}^n\sum_{t=1}^n R^{(m)}_{z_sz_t}C_{z_sz_t} = \left\langle M\circ C, R^{(m)}\right\rangle.
 \end{equation*}
 Also, for each $m\in[N]$ define $S^{(m)}\in\real^{K\times K}$ as the matrix with entries %such that each entry contains the sum of the variables $\tau^{(m)}_{uv}$ corresponding to each cell, that is,
 \begin{equation*}
     S^{(m)}_{jk}:= \sum_{u:z_u=j}\sum_{v:z_v=k} \tau^{(m)}_{uv}.
 \end{equation*}
 Since the entries of $S^{(m)}$ are sums of independent sub-Gaussian random variables,  by Proposition 2.6.1 in \cite{Vershynin2010},
 \begin{equation*}
     \orlicz{S^{(m)}_{jk}} \leq \sqrt{M_{jk}}\psi \leq n_{\max}\psi.
 \end{equation*}
 Additionally, observe that
 \begin{equation*}
     \sum_{s=1}^n\sum_{t=1}^n \tau^{(m)}_{st}C_{z_sz_t} = \left\langle C, S^{(m)}\right\rangle,
 \end{equation*}
 which is also a sub-Gaussian random variable with
 \begin{equation*}
     \orlicz{\left\langle C, S^{(m)}\right\rangle}\leq \left(\sum_{j=1}^K\sum_{k=1}^KC_{jk}^2n^2_{\max}\psi^2\right)^{1/2} = n_{\max}\psi \|C\|_F.
 \end{equation*}
 For each $m\in[N]$, define matrices $E^{(m)}, T^{(m)}, U^{(m)}, W^{(m)}, X^{(m)}\in\real^{n\times n}$ by %such that their corresponding entry $(u,v)$ takes the following values
 \begin{align}
    E^{(m)}_{uv} & := R^{(m)}_{z_uz_v}\left\langle M\circ C, R^{(m)}\right\rangle + 2\Psi_{z_uz_v}C_{z_uz_v}, \label{eq:proof-E}\\
 	T^{(m)}_{uv} & :=  \tau^{(m)}_{uv} \left\langle M\circ C, R^{(m)}\right\rangle,\label{eq:proof-T}\\
 	U_{uv}^{(m)} & := R^{(m)}_{z_uz_v}\left\langle C, S^{(m)}\right\rangle ,\label{eq:proof-U}\\
 	V_{uv}^{(m)} & :=   \tau^{(m)}_{uv}\left\langle C, S^{(m)}\right\rangle - 2\left(\tau^{(m)}_{uv}\right)^2C_{z_uz_v}  +
 	2\left(\tau^{(m)}_{uv}\right)^2C_{z_uz_v} - 2\Psi_{z_uz_v}C_{z_uz_v} ,\label{eq:proof-V}\\
 	W_{uv}^{(m)} & :=  \sigma \epsilon_m \tau_{uv}^{(m)},\label{eq:proof-W}\\
 	X_{uv}^{(m)} & :=  \sigma \epsilon_m R^{(m)}_{z_uz_v}.\label{eq:proof-X}
 \end{align}
 Combining  with Equation~\eqref{eq:Sigma-decomp} , we get 
 \begin{equation*}
     \hat{\Sigma}^{\mathcal{A}, \mathcal{Y}} - \frac{1}{N}\sum_{m=1}^NE^{(m)} = \frac{1}{N}\sum_{m=1}^N\left(T^{(m)} + U^{(m)}+ V^{(m)}+W^{(m)}+X^{(m)} \right).
 \end{equation*}
  The expected value of the right hand side is zero, and therefore the expectation of $\hat{\Sigma}^{\mathcal{A}, \mathcal{Y}}$, given parameters $\{R^{(m)}\}_{m=1}^N$ and 
  $Z$, is 
 \begin{equation*}
 \Exp{\hat{\Sigma}^{\mathcal{A}, \mathcal{Y}}}=\frac{1}{m}\sum_{i=1}^mE^{(i)}.    
 \end{equation*}
 Define constants $c^\ast, r_m^\ast, \tilde{r}_m>0$ and as
 \begin{align*}
     c^\ast & := \max_{j,k\in[K]} |C_{jk}|,\\
     r_m^\ast & := \max_{j,k\in[K]} |R_{jk}^{(m)}|,\\
     \tilde{r}_m & := \|R^{(m)}\|_F,
 \end{align*}
 and set $r^\ast := \sqrt{\frac{1}{N}\sum_{m=1}^N(r_m^\ast)^2}$, $\tilde{r} := \sqrt{\frac{1}{N}\sum_{m=1}^N(\tilde{r}_m)^2}$. Observe that $r^\ast_m=O(1)$, $\tilde{r}=O(K)$, and $c^\ast = \Theta(1)$.
 With these definitions, we have
 \begin{equation}
     |\left\langle M\circ C, R^{(m)}\right\rangle| \leq \|M\circ C\|_F \|R^{(m)}\|_F \leq Kn^2_{\max}c^\ast \tilde{r}_m. \label{eq:max_QC-R}
 \end{equation}
 Define matrices $E,T, U, V, W, X\in\real^{n\times n}$ as the averages of the corresponding matrices in Equations~\eqref{eq:proof-E}-\eqref{eq:proof-X}, that is,
 $$E:=\frac{1}{N}\sum_{m=1}^NE^{(m)},$$
 and analogously for the others.
 Note that the variables defined in Equations~\eqref{eq:proof-T}-\eqref{eq:proof-X} have mean zero. The variables $T^{(m)}_{uv}, X^{(m)}_{uv}$ and $U^{(m)}_{uv}$ are sub-Gaussian, and hence,
 \begin{align*}
     \orlicz{T_{uv}^{(m)}} & \leq  |\left\langle M\circ  C, R^{(m)}\right\rangle|\psi
     \leq Kn_{\max}^2c^\ast \tilde{r}_m\psi, \\
     \orlicz{U_{uv}^{(m)}} & \leq n_{\max}|R^{(m)}_{z_uz_v}|\|C\|_F\psi
     \leq n_{\max}r^\ast_m \|C\|_F\psi, \\
     \orlicz{X_{uv}^{(m)}} & = \sigma |R^{(m)}_{z_uz_v}| \leq \sigma r^\ast_m.
 \end{align*}
 Then, Proposition 2.6.1 of \cite{Vershynin2010} implies
 \begin{align*}
     \orlicz{T_{uv}} := \orlicz{\frac{1}{N}\sum_{m=1}^mT_{uv}^{(m)}} & \leq  
     \frac{ Kn_{\max}^2c^\ast \tilde{r}\psi}{\sqrt{N}},\\
     \orlicz{U_{uv}}:= \orlicz{\frac{1}{N}\sum_{m=1}^mU_{uv}^{(m)}} & \leq 
     \frac{ n_{\max}r^\ast \|C\|_F\psi}{\sqrt{N}},\\
     \orlicz{X_{uv}} := \orlicz{\frac{1}{N}\sum_{m=1}^mX_{uv}^{(m)}} & \leq  \frac{\sigma r^\ast}{\sqrt{N}}.
 \end{align*}
 By Corollary 3.3 of \cite{Bandeira2016}, the expectation of the spectral norm of the matrices $T$, $U$ and $X$ can be bounded as
 \begin{align}
     \Exp{\|T\|} & \lesssim 
     \frac{\sqrt{n} Kn_{\max}^2c^\ast \tilde{r} \psi}{\sqrt{N}},\label{eq:p-ET}\\
     \Exp{\|U\|}  & \lesssim
     \frac{\sqrt{n} n_{\max}r^\ast \|C\|_F\psi}{\sqrt{N}},\label{eq:p-EU}\\
     \Exp{\|X\|}  & \lesssim
      \frac{\sqrt{n}\sigma r^\ast}{\sqrt{N}}.\label{eq:p-EX}
 \end{align}

 The variables $V^{(m)}_{uv}$ and $W_{uv}^{(m)}$ are sums of products of sub-Gaussian random variables, and hence, sub-exponential. To bound these variables, observe that by Lemma 2.7.6 of \cite{Vershynin2010},  for any $u,v,s,t\in[n]$, with $(u,v)\neq (s,t)$ and $(u,v)\neq (t,s)$ the following inequalities hold:
 \begin{align*}
     & \orliczexp{\tau^{(m)^2}_{uv}} \leq 2 \psi^2,\\
     &\orliczexp{\tau^{(m)}_{uv}\tau^{(m)}_{st}} =\psi^2,\\
     & \orliczexp{\epsilon_m \tau^{(m)}_{uv}} = \psi.
 \end{align*}
 Using these inequalities, the sub-exponential norms of the corresponding random variables in Equations  \eqref{eq:proof-V} and \eqref{eq:proof-W} are bounded above as
 \begin{align*}
     \orliczexp{V_{uv}^{(m)}} & \leq n_{\max}\|C\|_F\psi^2,\\
     \orliczexp{W_{uv}^{(m)}} & \leq  \sigma \psi,
 \end{align*}
 and by combining Theorem 2.8.2 of \cite{Vershynin2010} for sums of sub-exponential random variables, and Corollary 3.5 of \cite{Bandeira2016}, the expected value of the spectral norm of the corresponding matrices satisfies
  \begin{align}
    %\Exp{\|S\|} := \Exp{\left\|\frac{1}{m}\sum_{i=1}^mS^{(i)}\right\|} & \lesssim \frac{\sqrt{n}2c^\ast \psi^2}{\sqrt{m}},\label{eq:p-ES}\\
     \Exp{\|V\|} := \Exp{\left\|\frac{1}{N}\sum_{m=1}^NV^{(m)}\right\|} & \lesssim \frac{\sqrt{n}n_{\max}\|C\|_F\psi^2}{\sqrt{N}},\label{eq:p-EV}\\
     \Exp{\|W\|} := \Exp{\left\|\frac{1}{N}\sum_{m=1}^NW^{(m)}\right\|} & \lesssim
     \frac{\sqrt{n}\sigma \psi}{\sqrt{N}}.\label{eq:p-EW}
 \end{align}
 Finally, combining Equations~\eqref{eq:p-ET}-\eqref{eq:p-EW} and~\eqref{eq:Sigma-decomp}, we obtain the  bound
 \begin{align*}
     \Exp{\left\|\hat{\Sigma}^{\mathcal{A},\mathcal{Y}}-E\right\|}   & \lesssim 
     \sqrt{\frac{n}{N}}\left( Kn_{\max}^2c^\ast \tilde{r}\psi +
     \sigma r^\ast +
     n_{\max}\|C\|_F\psi^2 + \sigma \psi
     \right)\\
     & \lesssim \sqrt{\frac{n}{N}}\left( K^2n_{\max}^2\pi^\ast\psi +
     \sigma +
     n_{\max}\psi^2 
     \right) . 
 \end{align*}
 The proof is completed by observing that $c^\ast$, $\|C\|_F$ are $\Theta(1)$, $\tilde{r} = O(K\pi^\ast)$,  $r^\ast$, $\psi$ are $O(1)$, with $r^\ast + \psi=\Theta(1)$.
 
 \end{proof}

%%%%%%%%%%%%%%%%%%%%%%%%%%%%%%%%%%%%%%%%%%%%%%%%%%%%%%%%%%%%%%%%%%%%%%%%%%%%%%%%%%%%%
%%%%%%%%%%%%%%%%%%%%%%%%%%%%%%%%%%%%%%%%%%%%%%%%%%%%%%%%%%%%%%%%%%%%%%%%%%%%%%%%%%%%%
%%%%%%%%%%%%%%%%%%%%%%%%%%%%%%%%%%%%%%%%%%%%%%%%%%%%%%%%%%%%%%%%%%%%%%%%%%%%%%%%%%%%%
 
 \begin{proof}[Proof of Theorem~\ref{thm:eigenvectors}]
 By the Davis-Kahan theorem in the form given by \cite{yu2014useful}, there exists an $O\in\mathcal{O}_K$ such that
 \begin{equation}
\|\hat{V} - VO\|_F \lesssim \frac{\sqrt{K}\|\hat{\Sigma}^{\mathcal{A},\mathcal{Y}} - \Exp{\hat{\Sigma}^{\mathcal{A},\mathcal{Y}}\|} }{\left|\lambda_{K}\left(\Exp{\hat{\Sigma}^\mathcal{A,\mathcal{Y}}}\right) - \lambda_{K+1}\left(\Exp{\hat{\Sigma}^\mathcal{A,\mathcal{Y}}}\right)\right|} . 
\label{eq:davis-kahan}
\end{equation} 
By Proposition~\ref{proposition:expected-sigma-AY}, $\Exp{\hat{\Sigma}^{\mathcal{A},\mathcal{Y}}}=ZFZ^T - \text{diag}(ZFZ^T)$, and by writing $Z=(Z\Theta^{-1/2})\Theta^{1/2}$, the denominator of Equation~\eqref{eq:davis-kahan} can be bounded from below using Weyl's inequality as
\begin{align*}
\left|\lambda_{K}\left(\Exp{\hat{\Sigma}^{\mathcal{A},\mathcal{Y}}}\right) - \lambda_{K+1}\left(\Exp{\hat{\Sigma}^{\mathcal{A},\mathcal{Y}}}\right)\right|    & \geq \left|\lambda_{K}\left(\Exp{\hat{\Sigma}^{\mathcal{A},\mathcal{Y}}}\right)\right| - 2\left|\lambda_{K+1}\left(\Exp{\hat{\Sigma}^{\mathcal{A},\mathcal{Y}}}\right)\right| \\ & \geq |\lambda_{\min}(\Theta^{1/2}F \Theta^{1/2})| -\max_{k\in [K]} |F_{kk}|\\
 & \geq  n_{\min} |\lambda_{\min}(F)| - O(1).
\end{align*}
By taking expectations on both sides of~\eqref{eq:davis-kahan}, and combining with Proposition~\ref{prop:spectral-error},
\begin{equation*}
\Exp{\|\hat{V} - VO\|_F} \lesssim \frac{K^{1/2}\sqrt{n}(K^2n_{\max}^2\pi^\ast\psi +
     \sigma +
     n_{\max}\psi^2 )} {\sqrt{N}n_{\min}|\lambda_{\min}(F)|}.
\end{equation*} 
 \end{proof}
 
 %%%%%%%%%%%%%%%%%%%%%%%%%%%%%%%%%%%%%%%%%%%%%%%%%%%%%%%%%%%%%%%%%%%%%%%%%%%%%%%%%%%%%
 %%%%%%%%%%%%%%%%%%%%%%%%%%%%%%%%%%%%%%%%%%%%%%%%%%%%%%%%%%%%%%%%%%%%%%%%%%%%%%%%%%%%%
 %%%%%%%%%%%%%%%%%%%%%%%%%%%%%%%%%%%%%%%%%%%%%%%%%%%%%%%%%%%%%%%%%%%%%%%%%%%%%%%%%%%%%
 
 \begin{proof}[Proof of Theorem~\ref{theorem:communityerror}]
Recall that $\hat{V}$ and $V$ are the leading eigenvectors of $\hat{E}$ and $E$. Note that $V$ has $K$ unique rows, which we can arrange in a matrix $\mu\in\real^{K\times K}$ and write $V= Z\mu$. We apply $K$-means to the rows of $\hat{V}$ and obtain $K$ centroids, arranged in a matrix $\nu\in\real^{K\times K}$, and community assignments $\hat{Z}$. By Lemma 3.2 of \cite{rohe2011spectral}, a sufficient condition for a vertex $u$ to be correctly clustered is 
 \[\|\hat{Z}_{u\cdot}\nu - Z_{u\cdot}\mu \tilde{O}\|_2\leq \frac{1}{\sqrt{2n_{\max}}} \]
 %then
  %\[\|\hat{Z}_{u\cdot}\nu - Z_{u\cdot}\mu O\|_2 <    \|\hat{Z}_{u\cdot}\nu - \mu_v O\|_2,\]
  for any $\mu_v \tilde{O}\neq \hat{Z}_{u\cdot}\nu$. 
 Following \cite{rohe2011spectral}, define $\mathcal{M}\subset\{1,\ldots,n\}$ as 
 \begin{equation*}
 \mathcal{M} := \left\{u: \|\hat{Z}_{u\cdot}\nu - Z_{u\cdot}\mu \tilde{O}\|_2\geq \frac{1}{\sqrt{2n_{\max}}}\right\},
 \end{equation*}
which contains all the misclusterd vertices, and hence $\min_{O\in\mathcal{P}_K}\|\hat{Z}- ZO\|_F^2\leq |\mathcal{M}|$
 By the same arguments as in the proof of Theorem 3.1 in \cite{rohe2011spectral}, 
 \begin{eqnarray*}
 |\mathcal{M}| & \leq & %\sum_{u=1}^n \textbf{1}\left\{\|\hat{Z}_{u\cdot}\nu - Z_{u\cdot}\mu O\|_2^2 \geq \frac{1}{{2n_{\max}}}\right\}\\
  %& \leq & 2n_{\max}\sum_{u=1}^n \|\hat{Z}_{u\cdot}\nu - Z_{u\cdot}\mu O\|_2^2 \\
   % & = & 2n_{\max}\|\hat{Z}\nu - V O\|_2^2 \\
    %& \leq & 2n_{\max}\left(\|\hat{Z}\nu - \hat{V}\|_2 + \|\hat{V} - V O\|_2\right)^2\\
     %   & \leq & 2n_{\max}\left(\|V - \hat{V}\|_2 + \|\hat{V} - V O\|_2\right)^2\\ 
             %& \leq & 
             8n_{\max}\|\hat{V} - V O\|_F^2.
 \end{eqnarray*}
 The condition on the community sizes implies that $n_{\min}\asymp n_{\max} \asymp n/K$, and therefore
 \begin{align*}
     \Exp{\min_{O\in\mathcal{P}_K} \|\hat{Z} - Z\|_F} & \leq 2^{3/2}\sqrt{n_{\max}}\Exp{\|\hat{V} - V O\|_F}\\
     & \lesssim \sqrt{\frac{n}{K}} \left(\frac{K^{3/2}(n^2\pi^\ast\psi + \sigma + n\psi^2/K)}{\sqrt{nN}}\right)\min\left\{\frac{K^2}{n^2\lambda^\ast}, 1\right\}
 \end{align*}
 
 \end{proof}

% ############################

\subsection{Optimization with the elastic net penalty \label{sec:elastic-net}}
For the optimization problem \eqref{eq:blockConstrainedProblem}, consider adding an elastic net penalty of the form
\[\Omega(B) = \lambda\sum_{i,j}|B_{ij}|  + \frac{\gamma}{2}\|B\|_F^2.\]
Let $U^{(t)}=W^{(t-1)}-\frac{1}{\rho}V^{(t-1)}$. Then the first step of the ADMM algorithm in Equation~\eqref{eq:alg_lossstep} can be expressed as
\begin{eqnarray}
B^{(t)} & = & \argmin_B\left\{  \ell(B) + \lambda\sum_{i,j}|B_{ij}|  + \frac{\gamma}{2}\|B\|_F^2 + \frac{\rho}{2}\left\|B-U^{(t)}\right\|_F^2\right\}\\
%& = & \argmin_B\left\{  \ell(B) + \lambda\sum_{i,j}|B_{ij}|  + \left(\frac{\gamma + \rho}{2}\right)\|B\|_F^2 -\rho\left\langle B, U^{(t)}\right\rangle \right\}\\
%& = & \argmin_B\left\{  \ell(B) + \lambda\sum_{i,j}|B_{ij}|  + \left(\frac{\gamma+ \rho}{2}\right)\|B\|_F^2 -  2\left( \frac{\gamma+ \rho}{2}\right)\left\langle B,\frac{\rho}{ 2\left(\frac{\gamma+ \rho}{2}\right)} U^{(t)}\right\rangle \right\}\\
& = & \argmin_B\left\{  \ell(B) + \lambda\sum_{i,j}|B_{ij}|  + \left(\frac{\gamma+ \rho}{2}\right)\left\|B -  \frac{\rho}{ \gamma+ \rho} U^{(t)}\right\|_F^2 \right\}.
\end{eqnarray}
This objective function can be written as a sum of two convex functions, one of them differentiable. We use an accelerated proximal algorithm to solve this problem \citep{parikh2013proximal}. In the special case of $\lambda=0$, the problem can be rewritten as ridge regression with an offset. Let $T^{(t)}_B = B -  \frac{\rho}{ \gamma+ \rho} U^{(t)}$. Then,
\begin{eqnarray}
B^{(t)}
& = & \argmin_{T^{(t)}_B}\left\{  \ell\left(T^{(t)}_B + \frac{\rho}{ \gamma+ \rho} U^{(t)}\right) + \left(\frac{\gamma+ \rho}{2}\right)\left\|T^{(t)}_B\right\|_F^2 \right\}.
\end{eqnarray}
We compute the solution for this special case using the R package \texttt{glmnet} \citep{friedman2009glmnet}.

\bibliographystyle{apalike}
\bibliography{Biblio}

\end{document}